\documentclass[aps,prl,twocolumn,preprintnumbers,superscriptaddress,nofootinbib]{revtex4-2}

\usepackage[utf8]{inputenc}

\usepackage{amsmath,amssymb,mathtools,amsthm,mathrsfs,physics,amsfonts,dsfont}
\usepackage[pass]{geometry}
\usepackage{pdfrender} %

\usepackage{bibunits}

\usepackage{svg}
\usepackage{graphicx}
\usepackage{tikz}
\usetikzlibrary{quantikz2}

\usepackage{algorithm,algpseudocode}

\usepackage{tabularx,siunitx,booktabs}

\usepackage{xcolor}
\definecolor{lavender}{rgb}{0.64,0.47,0.87}
\definecolor{darklavender}{rgb}{0.48,0.35,0.65}

\usepackage{ytableau}

\usepackage{caption,subcaption}
\usepackage{hyperref}
\hypersetup{
    colorlinks=true,
    linkcolor=darklavender,
    citecolor=darklavender,
    urlcolor=darklavender,
}
\usepackage[nameinlink]{cleveref}

\usepackage{enumitem,comment,adjustbox,extarrows}
\usepackage{float} %
\usepackage[normalem]{ulem}

\usepackage{etoolbox}
\usepackage{footmisc}

\makeatletter
\renewcommand\listoffigures{\@starttoc{lof}}
\renewcommand\listoftables{\@starttoc{lot}}

\makeatother

\newtheorem{theorem}{Theorem}[section]
\theoremstyle{definition}
\newtheorem{definition}{Definition}
\newtheorem{lemma}[theorem]{Lemma}
\newtheorem{corollary}[theorem]{Corollary}

\definecolor{pink}{rgb}{0.72,0.18,0.95}
\definecolor{celeste}{rgb}{0.09, 0.81, 0.9}
\definecolor{emerald}{rgb}{0, 0.677778, 0.555556}

\crefname{lemma}{Lemma}{Lemmas}
\crefname{proposition}{Proposition}{Propositions}
\crefname{definition}{Definition}{Definitions}
\crefname{theorem}{Theorem}{Theorems}
\crefname{conjecture}{Conjecture}{Conjectures}
\crefname{corollary}{Corollary}{Corollaries}
\crefname{claim}{Claim}{Claims}
\crefname{section}{Section}{Sections}
\crefname{appendix}{Appendix}{Appendices}
\crefname{figure}{Fig.}{Figs.}
\crefname{equation}{Eq.}{Eqs.}
\crefname{table}{Table}{Tables}
\theoremstyle{definition}

\def\multiset#1#2{\ensuremath{\left(\kern-.3em\left(\genfrac{}{}{0pt}{}{#1}{#2}\right)\kern-.3em\right)}}

\newcommand{\beginsupplement}{
  \clearpage
  \onecolumngrid
  \newgeometry{margin=1in}
  \setcounter{secnumdepth}{3}
  \setcounter{page}{1}
  \setcounter{section}{0}
\renewcommand{\thesection}{S\arabic{section}}
\setcounter{theorem}{0}
\renewcommand{\thetheorem}{S\arabic{section}.\arabic{theorem}}
\renewcommand{\theHtheorem}{S\arabic{section}.\arabic{theorem}}
  \setcounter{equation}{0}
  \renewcommand{\theequation}{S\arabic{equation}}
  \setcounter{table}{0}
  \renewcommand{\thetable}{S\arabic{table}}
  \setcounter{figure}{0}
  \renewcommand{\thefigure}{S\arabic{figure}}
  \renewcommand{\bibnumfmt}[1]{[S##1]}
\renewcommand{\citenumfont}[1]{S##1}}

\usepackage{mleftright}
\usepackage{xparse}

\ExplSyntaxOn
\cs_set_protected:Npn \coh_asymp_print:nnnn #1#2#3#4
  {
    #1
    \tl_if_blank:nF {#3} { \sp{(#3)} }
    \tl_if_blank:nF {#2} { \sb{#2} }
    \mleft( #4 \mright)
  }

\DeclareDocumentCommand{\bigO}{O{} O{} m}
  {
    \ensuremath{\coh_asymp_print:nnnn {O} {#1} {#2} {#3}}
  }
\DeclareDocumentCommand{\lilo}{O{} O{} m}
  {
    \ensuremath{\coh_asymp_print:nnnn {o} {#1} {#2} {#3}}
  }
\DeclareDocumentCommand{\bigOm}{O{} O{} m}
  {
    \ensuremath{\coh_asymp_print:nnnn {\Omega} {#1} {#2} {#3}}
  }
\DeclareDocumentCommand{\lilom}{O{} O{} m}
  {
    \ensuremath{\coh_asymp_print:nnnn {\omega} {#1} {#2} {#3}}
  }
\DeclareDocumentCommand{\bigTheta}{O{} O{} m}
  {
    \ensuremath{\coh_asymp_print:nnnn {\Theta} {#1} {#2} {#3}}
  }
\ExplSyntaxOff

\newcommand{\yd}[1]{{\mleft[#1\mright]}}
\newcommand{\yt}[1]{{\mleft(#1\mright)}}
\newcommand{\wt}[1]{{\mleft\{#1\mright\}}}

\def\multiset#1#2{\ensuremath{\mleft(\kern-.3em\mleft(\genfrac{}{}{0pt}{}{#1}{#2}\mright)\kern-.3em\mright)}}

\newcommand{\triple}[3]{%
  {\begin{array}{@{}c@{}c@{}} & #1\,#2\\ & #3 \end{array}}%
}

\newcommand{\res}[4]{%
  #1^{\scalebox{0.6}{$\triple{#2}{#3}{#4}$}}%
}

\newcommand{\intertwiner}[5]{#1^{#2\xrightarrow{#5}#3\,#4}}
\newcommand{\intertwineradj}[5]{#1^{#3\,#4\xrightarrow{#5}#2}}

\NewDocumentCommand{\extChannel}{m m m m o}{%
  #1^{#2 \xrightarrow{\IfValueTF{#5}{#4,#5}{#4}} \,#3}%
}

\newcommand{\rpoch}[2]{\mleft(#1\mright)^{\overline{#2}}} %

\NewDocumentCommand{\schur}{o m o}{%
  \ensuremath{s^{#2}\IfValueT{#1}{_{#1}}\IfValueT{#3}{\!\mleft(#3\mright)}}%
}

\usepackage{xcolor}
\definecolor{lavender}{rgb}{0.64,0.47,0.87}
\definecolor{darklavender}{rgb}{0.48,0.35,0.65}
\definecolor{blue}{rgb}{0.12156862745098039,0.4666666666666667,0.7058823529411764}
\definecolor{paleorange}{rgb}{0.8431372549019608,0.5568627450980392,0.12549019607843137}
\definecolor{defgreen}{rgb}{0.28627450980392155,0.5490196078431373,0.2549019607843137}
\definecolor{jam}{rgb}{0.7686274509803921, 0.34901960784313724, 0.6313725490196078}

\begin{document}
\makeatletter
\newcommand{\clearfmfn}{\global\let\@FMN@list\@empty}
\makeatother

\title{An Exponential Sample-Complexity Advantage for Coherent Quantum Inference}

\author{Zhaoyi Li}
\thanks{These authors contributed equally. Emails: \texttt{ladmon@mit.edu}, \texttt{edmt@math.ku.dk}.}
\affiliation{Department of Physics, Massachusetts Institute of Technology, Cambridge MA 02139, USA}

\author{Elias Theil}
\thanks{These authors contributed equally. Emails: \texttt{ladmon@mit.edu}, \texttt{edmt@math.ku.dk}.}
\affiliation{Centre for the Mathematics of Quantum Theory, University of Copenhagen, 2100 Copenhagen, Denmark}

\author{Aram W. Harrow}
\affiliation{Department of Physics, Massachusetts Institute of Technology, Cambridge MA 02139, USA}

\author{Isaac Chuang}
\affiliation{Department of Physics, Massachusetts Institute of Technology, Cambridge MA 02139, USA}

\date{\today}

\begin{abstract}
Standard quantum inference converts quantum data into classical outputs.
We study an alternative inference setting in which the desired output is quantum, preserving coherence.
Such settings include quantum purity amplification (QPA), mixed-state approximate purification or cloning, and density matrix exponentiation.
We show that such protocols can achieve exponentially lower sample complexity than incoherent, measurement-mediated protocols.
For QPA with principal eigenstate targets and $d$-dimensional inputs, coherent processing achieves error $\varepsilon$ using $O(1/\varepsilon)$ copies, versus the $\Omega(d/\varepsilon)$ copies required by any incoherent protocol. Together, these sharp coherent-incoherent separations seed a theory of coherent quantum inference, with an entanglement-breaking limit identifying the optimal incoherent counterpart of each coherent protocol.
\end{abstract}

\maketitle
\twocolumngrid
\begingroup
\renewcommand{\thefootnote}{*}
\footnotetext{These authors contributed equally to this work.\\\hspace*{2.4em}Emails: \texttt{ladmon@mit.edu}, \texttt{edmt@math.ku.dk}.}
\endgroup

\begin{bibunit}[apsrev4-2]

Nature processes information coherently, through unitary, reversible, and phase-preserving dynamics rather than classical randomization~\cite{KSW08}. In several settings, such coherence is already known to be essential: quantum communication channels that transmit quantum information must preserve entanglement~\cite{HSR03}; for quantum simulation, only coherent quantum evolution can reproduce quantum dynamics with the native scaling of the underlying process~\cite{V03}; and quantum algorithms use coherence itself as a computational resource to achieve speedups~\cite{GSLW19,Martyn21}.
However, preserving coherence is often costly in practice, opening an interesting possibility: for an inference task requiring a quantum output, perhaps coherent processing could achieve a sample-complexity advantage over incoherent, measurement-mediated strategies.

This possibility matters both practically and foundationally. Practically, many applications require coherent quantum resources rather than classical estimates, for instance as resource states in gate teleportation~\cite{DGHM+25}, as quantum data for subsequent coherent processing~\cite{SS06}, or as inputs for query models~\cite{TWZ25}. Thus, even sample-efficient classical inference does not address tasks whose output must remain quantum.
Foundationally, as in the theory of unspeakable quantum information~\cite{peres2002unspeakable}, quantities such as direction, phase, time, and helicity are not merely classical labels; operationally accessing them requires physical reference systems, which themselves are quantum.
This perspective motivates processing quantum information directly at its true quantum limit, without forcing it through a classical bottleneck.

At first glance, incoherent protocols may seem inherently limited.
Quantum-to-classical conversion based on tomography~\cite{HHJW+16,OW16}, or more broadly on quantum statistical inference~\cite{BGJ03,KKL25} and learning~\cite{AD17,AAKS21}, suffers from dimension-dependent sample complexity, i.e., the ``curse of dimensionality''.
Consequently, incoherent protocols would inherit sample complexities that scale polynomially in the Hilbert-space dimension $d$, and therefore exponentially in the local system size, namely the number of qubits.

Making this comparison rigorous is difficult.
On one hand, classical shadows~\cite{HKP20} can reduce the sample complexity of predicting selected properties, bypassing full tomography, while adaptivity and entangled measurements can further improve incoherent estimation.
On the other hand, the coherent side encompasses all quantum channels. Both sides are therefore enormous optimization classes, and tight sample-complexity bounds are largely unavailable when the desired output is itself quantum. It is a priori unclear that any dimension-scaling separation should exist.

Yet a separation already appears for a trivial example. Consider the task shown in~\cref{fig:identity_coherent_vs_incoherent}: given $n$ copies of an unknown \(d\)-dimensional pure state \(\ket{\psi}\), the goal is to output a single copy of it. A coherent protocol accomplishes this by forwarding the input, so a single copy suffices. By contrast, the performance of an incoherent protocol is limited by pure state tomography; achieving constant output accuracy requires \(n=\bigTheta{d}\) copies~\cite{H13}.
\begin{figure}
    \centering
\includegraphics[width=\columnwidth]{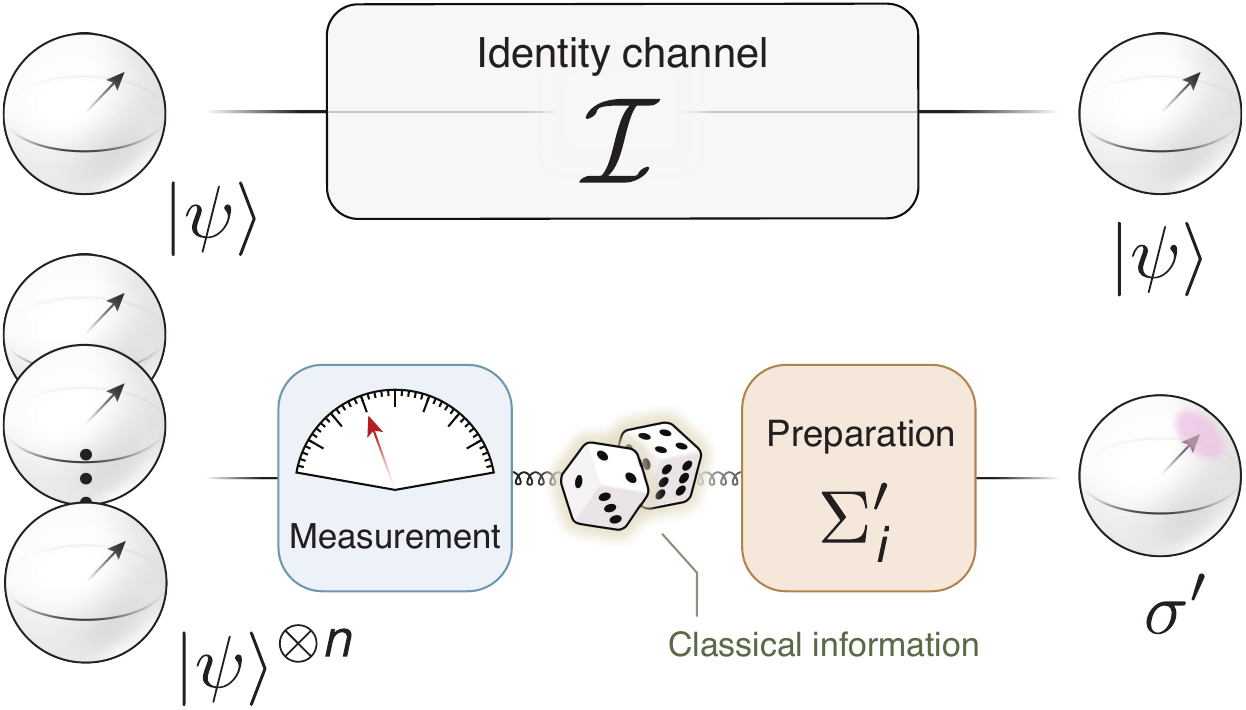}
    \caption{
    Comparison between coherent and incoherent implementations of the identity task on an unknown pure state $\rho=\ketbra{\psi}{\psi}$.
    }
\label{fig:identity_coherent_vs_incoherent}
\end{figure}

While the identity task already hints at a coherent advantage, it remains unclear whether tasks that genuinely process the input exhibit a similar separation. Clock distillation provides such an example~\cite{KM25}, but the task is defined only for qubits and therefore does not address separations that grow with system size.
For larger systems, quantum purity amplification (QPA)~\cite{LFIC24} and density matrix exponentiation (DME)~\cite{LMR14} have been studied, but previous work falls short of a rigorous separation. Existing QPA bounds are either asymptotic without dimension-uniform control~\cite{LFIC24} or too loose to recover the leading constants of the asymptotic regime~\cite{DGHM+25,GLLP+25}; for both QPA and DME, the optimal measurement-mediated risk is not sufficiently understood to quantify the separation sharply.
This leaves a question open: for nontrivial quantum-output inference tasks, can coherent processing provide a provable sample-complexity advantage over any incoherent protocol?

We answer this question by developing a theory of coherent quantum inference (CQI) instantiated by three representative tasks: random purification (RP), QPA, and DME. Each yields a sharp coherent-incoherent separation.
For RP, a coherent protocol achieves infidelity $\varepsilon$ for $\ell$ additional clones with $n=\bigO{\sqrt{\ell dr/\varepsilon}}$ copies, while any incoherent protocol requires $n=\bigOm{d/\varepsilon}$.
Our construction simultaneously generalizes the random-purification channel~\cite{TWZ25,PSTW25,WW25,GML26} to the $m>n$ regime and the optimal pure-state $n\to m$ cloner~\cite{W98,KW99,CIGA05,YC13} to rank-$r$ mixed-state inputs.
For QPA, we prove a separation by upper bounding the sample complexity of the optimal coherent QPA protocol by \(O(1/\varepsilon)\) and lower bounding the incoherent counterpart, through the entanglement-breaking (EB) limit, by \(\Omega(d/\varepsilon)\).
This is enabled by treating the general \(n\to m\), arbitrary-eigenstate setting, with sharp control of both leading prefactors and admissible regimes, substantially extending previously studied single-output, principal-eigenstate results~\cite{LFIC24,DGHM+25}. Full proofs appear in the companion paper~\cite{QPA26}.
Finally, for DME, the coherent protocol~\cite{LMR14} achieves error \(\varepsilon\) with $n=\bigO{T^2/\varepsilon}$ copies independent of $d$. We prove that any incoherent implementation requires $n=\bigOm{\sin^2\left(T/2\right)d/\varepsilon}$, showing a constant-versus-linear-in-$d$ separation.

\emph{Example 1: Random purification and approximate cloning}.\,---
Random purification transforms \(n\) copies of a rank-\(r\) state \(\rho\) into a random mixture of outputs, each consisting of \(m\) copies of some purification \(\Psi\in\mathbb{C}^d\otimes\mathbb{C}^r\) satisfying \(\Tr_{\mathbb{C}^r}\Psi=\rho\). Intuitively, the goal is to approximate the transformation \(\rho^{\otimes n}\leadsto\Psi^{\otimes m}\), as illustrated in~\cref{fig:ci_examples}. This task arises naturally in tomography, cryptography, and simulation~\cite{TWZ25,GML26,PSTW25,WW25}.

\begin{figure}[t]
  \centering
\includegraphics[width=0.99\columnwidth]{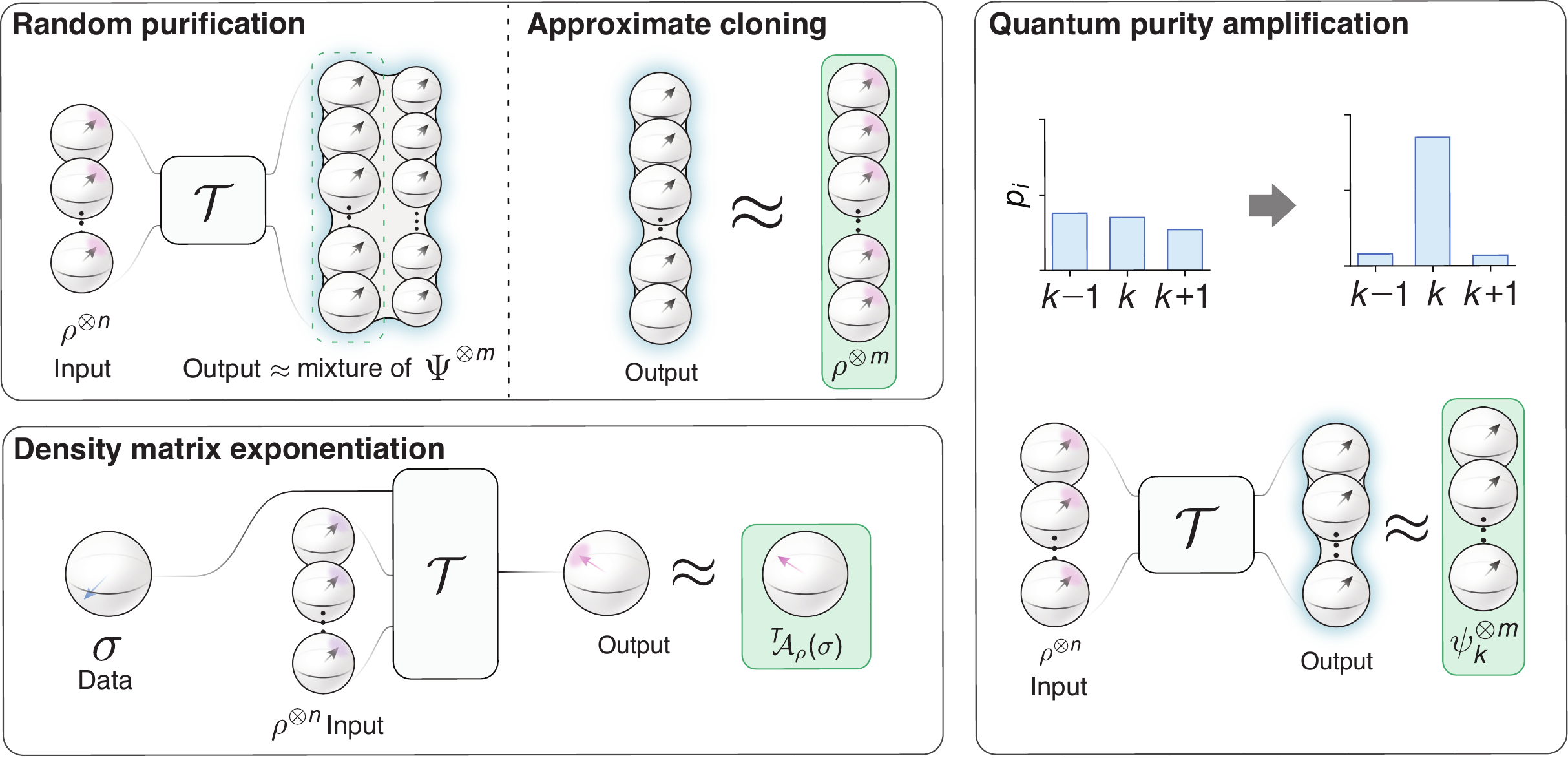}
  \caption{The three coherent-inference examples studied here: random purification and mixed-state cloning, quantum purity amplification, and density matrix exponentiation. In each case, coherent processing avoids the measurement-mediated bottleneck that limits incoherent protocols, leading to sample-complexity separations that scale with the local system size.}
  \label{fig:ci_examples}
\end{figure}

We construct an \(n\to m\) random-purification protocol by first applying the random-purification channel of Ref.~\cite{TWZ25}, and then applying the optimal \(n\to m\) cloning channel for pure-state inputs~\cite{W98,KW99}; see~\cref{sec:rp_cloning_supp}.
We show in~\cref{prop:random_purification_cost_functions} that the maximum fidelities for the full output state and for a single-copy marginal are lower bounded, respectively, by
\begin{subequations}
\begin{align}
\mathcal F_\mathrm{all}^\ast
&\ge
\frac{\binom{n+dr-1}{dr-1}}{\binom{m+dr-1}{dr-1}}, \\
\mathcal F_\mathrm{one}^\ast
&\ge
\frac{n(m+dr)+m-n}{m(n+dr)}.
\end{align}
\end{subequations}
To facilitate comparison, we set \(m=n+\ell\) with fixed \(\ell\).
We show that single-copy marginal infidelity at most \(\varepsilon\) can be achieved coherently using only \(\bigO{\sqrt{\ell dr/\varepsilon}}\) input copies $n$.
By contrast, any incoherent protocol requires \(\Omega(d/\varepsilon)\) copies, even in the pure-state case, since the input state must first be estimated classically.

Tracing out the purifying registers gives mixed-state clones. This produces a coherent protocol for approximate cloning of rank-$r$ mixed states with the same $\sqrt{dr}$ versus $d$ separation.

\emph{Example 2: Quantum purity amplification}.\,---
We next seek a stronger separation in which the coherent sample complexity is dimension independent, while the incoherent benchmark scales linearly with $d$.
QPA is the task of coherently processing \(n\) copies of a state \(\rho\), whose \(k\)-th eigenvalue is non-degenerate, to produce \(m\) high-fidelity copies of the corresponding eigenstate, i.e., \(\rho^{\otimes n}\leadsto\psi_k^{\otimes m}\), with $\psi_k$ being the projector onto the $k$-th eigenstate of $\rho$, as illustrated in~\cref{fig:ci_examples}. We write the sorted spectrum of \(\rho\) as \(\boldsymbol p\dot{=}(p_1,\ldots,p_d)\), the spectral gaps as \(D_{k,i}\coloneqq p_k-p_i\), and we set \(D_{k,\mathrm{min}}\coloneqq \min_{i\neq k}|D_{k,i}|\).

In the companion paper~\cite{QPA26}, we develop the optimal protocol for this task. For $m=1$, it achieves $n=\bigO{1/(\varepsilon D_{k,\mathrm{min}}^2)}$. Its structure is dictated by symmetry and can be expressed in the following three steps: (1) Schur sampling, which resolves the input into the symmetry sectors labeled by the Young diagram (YD) $\yd{\varsigma}$; (2) a covariant channel on the measured symmetry sector with output in the maximally symmetric irrep $\mathbb{W}^{\yd{m}}$; and (3) tracing out the excess degrees of freedom.
The nontrivial part is to show that the optimal channel from \(\mathbb{W}^{\yd{\varsigma}}\) to \(\mathbb{W}^{\yd{m}}\) is the one specified by the overhang-removal rule; see Ref.~\cite[Definition~II.2]{QPA26}.
This generalizes the \(m=1\) and \(k=1\) protocol of Ref.~\cite{LFIC24}.

For $m=1$, overhang-removal is uniquely optimal on every symmetry sector $\yd{\varsigma}$ whose row gaps $\Delta_{i,j}\coloneqq \varsigma_i-\varsigma_j$ satisfy $\Delta_{k,\mathrm{min}}\coloneqq \min_{i\neq k}|\Delta_{k,i}|>2/D_{k,\mathrm{min}}$; analogous general-$m$ thresholds are derived in Ref.~\cite[Section~II C b]{QPA26}.
Compared with previous results~\cite{DGHM+25,LFIC24}, our work gives the first optimality characterization with tight constants.
The optimality is robust and holds for various loss functions, including infidelity, trace distance, and cross entropy, owing to the symmetry analyzed in Ref.~\cite[Section~II]{QPA26}.
Compared to Ref.~\cite{DGHM+25} with sample complexity $\bigO{(\varepsilon^{-1}+p_1^{-1})(1-p_1)/D_{1,2}^2}$, our protocol applies to the more general setting of multiple outputs and arbitrary target eigenstates, with improved constant factors and without the additional \(p_1^{-1}\) overhead. Their bound additionally vanishes as $p_1 \to 1$ through a $(1-p_1)$ factor. This corresponds to the trivial pure-state limit, which our sector-wise bound also captures; see Ref.~\cite[Section~II C]{QPA26}.
Circuit-level implementations are given in Ref.~\cite[Section~II D]{QPA26}.

In the EB limit ($m\to\infty$), the one-site marginal of the optimal $n\to m$ QPA protocol gives the optimal incoherent protocol on each symmetry sector, specified by the covariant positive operator-valued measure (POVM)
\begin{equation}
\{M_U=d^{\yd{\varsigma}}U^{\yd{\varsigma}}\ketbra{\wt{\mathrm{lw}}}{\wt{\mathrm{lw}}}U^{\yd{\varsigma}\dagger}:\ U\in U(d)\},
\end{equation}
followed by re-preparation of $U\ketbra{k}{k}U^\dagger$, where $\ket{\wt{\mathrm{lw}}}$ denotes the lowest-weight vector in the sector $\mathbb{W}^{\yd{\varsigma}}$. The protocol is asymptotically optimal: it achieves the average and worst-case sample complexity
\begin{equation}
n
=
\frac{1}{\varepsilon}
\left(\sum\limits_{i\neq k}\frac{p_i}{D_{k,i}^2}
+\sum\limits_{i=k+1}^d\frac{1}{D_{k,i}}
\right)
+\lilo[d,\boldsymbol{p}]{\varepsilon^{-1}},
\end{equation}
This expression is exact for $k=1$ and depolarized inputs; see Ref.~\cite[Section~II C]{QPA26}.
By the equivalence in~\cref{subsec:eigenstate_tomography_supp}, this also yields the optimal measurement protocol for $k$-th eigenstate tomography.

The above constructions lead to a sample-complexity separation between coherent and incoherent QPA protocols. For the all-site loss based on infidelity, to achieve error rate $\varepsilon$, it is enough for $n$ to satisfy, see~\cref{cor:explicit_sample_complexity_inversion,cor:clean_dmin_sample_complexity},
\begin{equation}
n
\le
\min\mleft\{
\frac{m}{\varepsilon D_{k,\mathrm{min}}^2}
+
R,
\frac{135m}{\varepsilon D_{k,\mathrm{min}}^2}
\mright\}.
\end{equation}
Here $S=\frac{\sqrt m}{\sqrt\varepsilon D_{k,\mathrm{min}}^2}$, and the remainder satisfies $R<CS\ln(CS)$ for all $S>S_0$, with $C$ and $S_0$ constants. Specifically, this bound is independent of the Hilbert-space dimension $d$. In contrast, the corresponding incoherent protocol requires $\bigOm{(d-k)/\varepsilon}$ samples, scaling linearly in $d-k$ and hence exponentially in the local system size; see~\cref{thm:one_site_risk_lower_bound,lem:eb_sample_complexity_lower} for the lower bound, establishing the desired separation.

\emph{Example 3: Density matrix exponentiation}.\,---
Given \(n\) copies of an unknown mixed state \(\rho\), as illustrated in~\cref{fig:ci_examples}, DME asks for a channel that approximates the unitary evolution
\begin{equation}
{^T}\!\!\mathcal A_\rho(\cdot)
=
e^{-iT\rho}\cdot e^{iT\rho},
\end{equation}
Here \(T\) is the simulation time. We measure the error in diamond norm, which captures the worst-case distinguishability of the implemented and target channels. Equivalently, DME can be formulated at the state level by treating both \(\rho^{\otimes n}\) and the data state $\sigma$ as inputs, as in~\cref{eq:dme_state_level_target}.

The original DME protocol~\cite{LMR14} achieves simulation error \(\varepsilon\) using $n=\bigO{T^2/\varepsilon}$ copies, independent of the local dimension $d$, with nonasymptotic bounds derived in Ref.~\cite{GKPP+25}.

In contrast, any incoherent implementation is a classically programmed channel: it first measures the input copies of $\rho$ and then uses the classical outcome to synthesize an approximation to ${^T}\!\!\mathcal A_\rho$.

To obtain a lower bound for this task, we restrict the inputs to a local $(d-1)$-parameter family of pure states and evaluate the output channel on a fixed data state. This converts any classically programmed implementation into an estimator for the input parameters. The resulting estimation lower bound implies that, for fixed $T\not\equiv0\pmod{2\pi}$ and sufficiently small error \(\varepsilon\), any incoherent DME protocol requires $n=\bigOm{\sin^2\left(T/2\right)d/\varepsilon}$ input copies; the full derivation is given in~\cref{lem:embedding_output,thm:dme_incoherent_sample_lower}.

\begin{figure}[ht]
\centering
\includegraphics[width=\columnwidth]{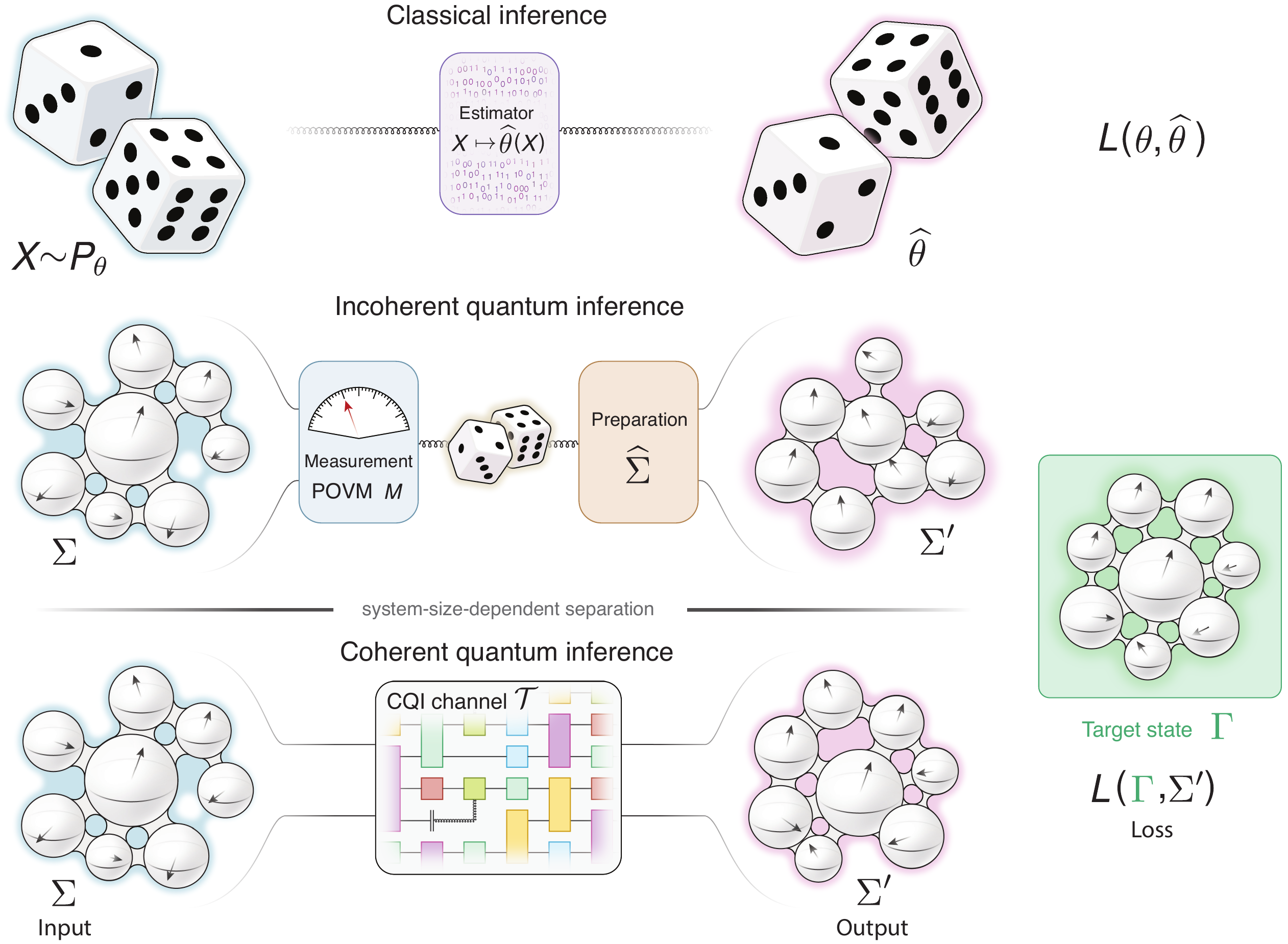}
  \caption{Illustration of classical inference, incoherent quantum inference, and coherent quantum inference. In the classical setting, the random variable \(X\sim P_\theta\) leads to an estimate \(\widehat{\theta}(X)\), with performance judged by a loss \(L\mleft(\theta,\widehat\theta\mright)\). The incoherent quantum analogue measures the input system \(\Sigma\) with a positive operator-valued measure (POVM) \(\{M_i\}\) and then prepares an output state \(\Sigma_i^\prime\), forming a measurement-mediated protocol. In coherent inference, by contrast, the input is processed directly by a quantum channel \(\mathcal{T}\). In both quantum settings, the target \(\Gamma(\Sigma)\) depends on the input, and performance is judged by a loss \(L\mleft(\Gamma,\Sigma^\prime\mright)\).}
  \label{fig:ci_illustration}
\end{figure}

\emph{A framework for coherent inference}\,---
We now bring the three examples of RP, QPA, and DME together and extract their common structure, formalized in~\cref{def:cqi_protocol}, generalizing classical statistical inference theory~\cite{Ber85,LC98} to the setting where the target is a quantum object and the decision rule is a physical channel.
\begin{definition}[Coherent Quantum Inference Protocol]
\label{def:cqi_protocol}
Let $\mathcal{X} \subseteq \mathcal{B}(\mathbb{C}^{d_\mathrm{in}})$ be a set of density operators on the $d_\mathrm{in}$-dimensional Hilbert space, representing the possible input states.
For each $\Sigma \in \mathcal{X}$, let $\Gamma(\Sigma)$ denote the target quantum state that one aims to extract from $\Sigma$.
A CQI protocol with an input $\Sigma$ and target map $\Gamma$ is a quantum channel
\begin{equation}
\mathcal{T}: \mathcal{B}(\mathbb{C}^{d_\mathrm{in}}) \to \mathcal{B}(\mathbb{C}^{d_\mathrm{out}})
\end{equation}
that, when acting on $\Sigma$, produces an output state $\Sigma^\prime=\mathcal{T}(\Sigma)$.
Performance is quantified by a \emph{loss function} $L\mleft(\Gamma(\Sigma), \Sigma^\prime\mright)$, which measures how well the output captures the desired structure.
The \emph{worst-case risk} of the protocol is defined as
\begin{equation}
\label{eq:cqi_worst_case_risk}
\mathcal{L}_{\mathrm{w}}\mleft(\mathcal{T}\mright)
=
\sup_{\Sigma \in \mathcal{X}}
L\mleft(\Gamma(\Sigma), \mathcal{T}(\Sigma)\mright),
\end{equation}
and given a prior distribution $\pi$ on $\mathcal{X}$, the \emph{average risk} is
\begin{equation}
\label{eq:cqi_average_risk}
\mathcal{L}_{\mathrm{a}}\mleft(\mathcal{T}\mright)
=
\int_{\mathcal{X}}
L\mleft(\Gamma(\Sigma), \mathcal{T}(\Sigma)\mright) \, \mathrm{d}\pi(\Sigma).
\end{equation}
\end{definition}

As illustrated in~\cref{fig:ci_illustration}, a CQI protocol generalizes classical statistical decision theory, with quantum data $\Sigma$, channel output $\mathcal T(\Sigma)$, and ideal target $\Gamma(\Sigma)$ playing the roles of samples $X_1,\ldots,X_n$, estimator $\widehat{\theta}$, and parameter $\theta$. A coherent protocol ($Q\to Q$) processes the input through $\mathcal T$ directly.
Applying standard quantum inference ($Q\to C$) followed by state preparation ($C\to Q$) instead yields an incoherent protocol ($Q\to C\to Q$).
The three examples above---RP, QPA, and DME---are all instances of CQI, with their input family, target map, and loss function specified in the SM.

Two complementary directions connect coherent and incoherent inference: in one, coherent protocols reduce to incoherent ones via the EB limit; in the other, classical-output tasks lift to coherent ones by promoting the target into a quantum object.

The EB limit is based on a de Finetti-type argument, and its first instance appeared in pure-state cloning~\cite{BA06}. Here we extend the reduction to general CQI tasks.
Whenever the one-site marginal of a symmetric $1\to m$ CQI protocol converges in diamond norm as $m\to\infty$, the limit is EB and hence corresponds to an incoherent protocol; see~\cref{thm:limit_of_optimal}.
For any well-behaved loss function, this limiting EB channel attains the optimal one-site risk of the CQI channel~\cref{thm:EB_risk,cor:limit_of_optimal_symmetric}.

The resulting incoherent inference protocols can be further related to standard, classical-output inference protocols, because any EB protocol admits a measurement-mediated representation~\cite{HSR03}, that is
\begin{equation}
\widetilde{\mathcal T}(\Sigma)
=\langle\widehat{\Sigma}\rangle_M=
\int \Tr\left(M(\mathrm{d}\widehat{\Sigma})\Sigma\right)
\widehat{\Sigma},
\end{equation}
for some POVM $M$ and estimator $\widehat{\Sigma}$. Hence, whenever the loss function is convex, we have
\begin{equation}
L\mleft(\Gamma(\Sigma),\langle\widehat{\Sigma}\rangle_M\mright)
\leq
\left\langle\!
L\mleft(\Gamma(\Sigma),\widehat{\Sigma}\mright)\right\rangle_M.
\end{equation}
Therefore, the optimal EB fidelity establishes a lower bound on the risk of any POVM for the corresponding standard inference task. In more structured settings, such as QPA, the relevant figure of merit is linear in the output state, so the optimization over incoherent protocols reduces directly to an optimization over the underlying POVM.

In the reverse direction, an incoherent task can be promoted to a coherent task by encoding the estimated property into a quantum object. Suppose the incoherent task estimates a property $f(\Sigma)$ of $\Sigma$. One may define target quantum states or channels $\Gamma(\Sigma)$ that contain all the information of $f(\Sigma)$. Our three examples instantiate this prescription: setting $\Gamma(\rho)=\rho$ lifts state tomography to mixed-state cloning, and QPA and DME similarly arise as coherentizations of eigenstate tomography and full state tomography modulo the simulation period $2\pi/T$.
Whether this prescription leads to a systematic way of quantifying the performance gain from coherence remains an open question; in lieu of a general answer, we illustrate the prescription with several potential use cases:

For spectrum testing, the direct lift is the \emph{coherent diagonalization} task: $f(\rho)=\boldsymbol{p}$ is the spectrum of the state, and we may take $\Gamma(\rho)=\sum_i p_i\ketbra{i}{i}$ to be the diagonal state obtained by placing the spectrum in a fixed computational basis.
In this case, the resulting resource state serves as a proxy for unitary-invariant properties such as entropy and purity.
More generally, this framework turns property prediction into coherent tasks indexed by an observable class $\mathcal O$: given a list of target properties $f(\Sigma)$, one may choose $\Gamma(\Sigma)$ to be a state whose expectation values agree with those properties on observables $O\in\mathcal O$.

\emph{Outlook}.\,---
Three directions appear especially important. First, the present examples leave open the problem of sharp rates: one would like precise coherent and incoherent bounds for cloning, RP, and DME, together with improved nonasymptotic and multi-output bounds for QPA. Second, the reverse direction should be developed systematically, clarifying when coherentizing a classical inference task reduces sample complexity and how large this reduction can be in more general settings. Third, CQI should be lifted from state targets to channel targets, connecting the framework to tasks such as quantum signal processing, Hamiltonian simulation, quantum learning of transformations~\cite{bisio2010optimal}, and fault-tolerant resource conversion.

\section{Acknowledgments}
We thank A. Gilyén for helpful discussions.
Z.L., A.W.H. and I.L.C. acknowledge support from the U.S. Department of Energy, Office of Science, National Quantum Information Science Research Centers, Co-design Center for Quantum Advantage (C\(^2\)QA), under contract No. DE-SC0012704.
ET is supported by the ERC grant (QInteract, Grant No 101078107) and VILLUM FONDEN (Grant No 10059 and 37532).
\clearfmfn
\putbib[ci_main.bib]
\end{bibunit}

\beginsupplement
\begin{bibunit}[apsrev4-2]
\begin{center}
  {\Large \textbf{Supplemental Material}}\\[0.25cm]
  {\normalsize for}\\[0.2cm]
  {\large \textbf{An Exponential Sample-Complexity Advantage}}\\[0.08cm]
  {\large \textbf{for Coherent Quantum Inference}}\\[0.35cm]
  {\normalsize Zhaoyi Li,\textsuperscript{1,*} Elias Theil,\textsuperscript{2,*} Aram W. Harrow,\textsuperscript{1} and Isaac Chuang\textsuperscript{1}}\\[0.25cm]

  {\footnotesize\itshape
  \textsuperscript{1}Department of Physics, Massachusetts Institute of Technology, Cambridge MA 02139, USA \\
  \textsuperscript{2}Centre for the Mathematics of Quantum Theory, University of Copenhagen, 2100 Copenhagen, Denmark \\
  \textsuperscript{*}These authors contributed equally. Emails: \texttt{ladmon@mit.edu}, \texttt{edmt@math.ku.dk}.}
\end{center}

\vspace{0.5cm}
\noindent In this Supplemental Material (SM), we provide technical details and proofs supporting the main text.
We first establish general properties of coherent quantum inference (CQI) risk and symmetry reduction in \cref{sec:risk_symmetry_supp}: convexity and continuity of the risk functional are proved in \cref{subsec:convexity_supp}, and the relevant twirling reductions are given in \cref{subsec:symmetries_supp}.
We then relate coherent protocols to their entanglement-breaking (EB) limits in \cref{sec:coh_incoh_supp}, by proving approximate separability for symmetric extendible Choi matrices (\cref{subsec:approx_sep_supp}) and deriving the corresponding EB limit of optimal symmetric channels (\cref{subsec:eb_limit_supp}).
The remaining sections apply these tools to the three examples discussed in the main text: random purification (RP) and approximate cloning are treated in \cref{sec:rp_cloning_supp}, quantum purity amplification (QPA) and eigenstate tomography in \cref{sec:qpa_supp}, and density matrix exponentiation (DME) in \cref{sec:dme_supp}.

\section*{Contents}
\vspace{0.3cm}
\begin{enumerate}[label=S\arabic*.]
    \item \textit{General framework for coherent inference}\dotfill \pageref{sec:risk_symmetry_supp}
        \begin{enumerate}[label=\Alph*.]
            \item \textit{Convexity and continuity of coherent inference risk}\dotfill \pageref{subsec:convexity_supp}
            \item \textit{Symmetries and twirling}\dotfill \pageref{subsec:symmetries_supp}
        \end{enumerate}
    \item \textit{Connecting coherent and incoherent protocols}\dotfill \pageref{sec:coh_incoh_supp}
        \begin{enumerate}[label=\Alph*.]
            \item \textit{Approximate separability of symmetric extendible Choi matrices}\dotfill \pageref{subsec:approx_sep_supp}
            \item \textit{EB limit}\dotfill \pageref{subsec:eb_limit_supp}
        \end{enumerate}
    \item \textit{RP and approximate cloning}\dotfill \pageref{sec:rp_cloning_supp}
    \item \textit{QPA and eigenstate tomography}\dotfill \pageref{sec:qpa_supp}
        \begin{enumerate}[label=\Alph*.]
            \item \textit{Nonasymptotic QPA sample-complexity bounds}\dotfill \pageref{subsec:qpa_sample_complexity_supp}
            \item \textit{Eigenstate tomography}\dotfill \pageref{subsec:eigenstate_tomography_supp}
            \item \textit{POVM of the EB channel}\dotfill \pageref{subsec:eb_povm_supp}
            \item \textit{Optimal EB risk}\dotfill \pageref{subsec:optimal_eb_risk_supp}
        \end{enumerate}
    \item \textit{DME}\dotfill \pageref{sec:dme_supp}
\end{enumerate}
\clearpage

\section{General Framework for Coherent Inference}
\label{sec:risk_symmetry_supp}

\subsection{Convexity and continuity of coherent inference risk}
\label{subsec:convexity_supp}

We begin by establishing the convexity of the risk functional with respect to the quantum channel. Assume the loss function $L\mleft(\sigma,\rho\mright)$ is jointly convex in its second argument.
That is, for any fixed target state $\sigma$ and any states $\rho_1,\rho_2$,
\begin{equation}
L\mleft(\sigma,\lambda \rho_1+(1-\lambda)\rho_2\mright)
\;\le\;
\lambda L\mleft(\sigma,\rho_1\mright)+(1-\lambda)L\mleft(\sigma,\rho_2\mright),
\end{equation}
for all $\lambda\in[0,1]$.
This assumption is satisfied for standard choices such as trace distance $\frac{1}{2}\|\sigma-\rho\|_1$, squared Bures distance $2(1-\sqrt{F(\sigma,\rho)})$, and infidelity $1-F(\sigma,\rho)$.

\begin{lemma}[Convexity of worst-case CQI risk]
\label{lem:risk_convexity}
Here $\mathbb{H}_\mathrm{in}$ and $\mathbb{H}_\mathrm{out}$ denote the input and output Hilbert spaces of the coherent inference protocol.
Let $\mathcal{X}\subseteq\mathcal B(\mathbb{H}_\mathrm{in})$ be a compact family of input states, and let
$\mathcal T : \mathcal B(\mathbb{H}_\mathrm{in}) \to \mathcal B(\mathbb{H}_\mathrm{out})$
be a coherent inference protocol with loss function $L\mleft(\Gamma(\rho),\,\mathcal T(\rho^{\otimes n})\mright)$.
The worst-case risk functional
\begin{equation}
\mathcal L_{\mathrm{w}}\mleft(\mathcal T\mright)
=
\sup_{\rho\in\mathcal X}
L\mleft(\Gamma(\rho),\,\mathcal T(\rho^{\otimes n})\mright).
\end{equation}
and the average risk functional is
\begin{equation}
\mathcal L_{\mathrm{a}}\mleft(\mathcal T\mright)
=
\int_{\mathcal X} \mathrm{d}\pi(\rho)\, L\mleft(\Gamma(\rho),\,\mathcal T(\rho^{\otimes n})\mright),
\end{equation}
are convex in the channel $\mathcal T$.
That is, for any two channels $\mathcal T_1,\mathcal T_2$ and any $\lambda\in[0,1]$,
\begin{equation}
\mathcal L_{(\cdot)}\mleft(\lambda\mathcal T_1+(1-\lambda)\mathcal T_2\mright)
\le
\lambda\,\mathcal L_{(\cdot)}\mleft(\mathcal T_1\mright)
+(1-\lambda)\,\mathcal L_{(\cdot)}\mleft(\mathcal T_2\mright),
\end{equation}
where $(\cdot)$ is $\mathrm w$ or $\mathrm a$.
\end{lemma}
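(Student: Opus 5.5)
The argument is pointwise: I fix an input state, use linearity of the channel together with convexity of the loss, and then pass the inequality through the supremum and through the integral.

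\emph{Pointwise convexity.} Fix $\rho\in\mathcal X$ and write $\mathcal T_\lambda=\lambda\mathcal T_1+(1-\lambda)\mathcal T_2$. A convex combination of CPTP maps is again CPTP, so $\mathcal T_\lambda$ is an admissible protocol. By definition of the mixture,
\begin{equation}
\mathcal T_\lambda(\rho^{\otimes n})=\lambda\,\mathcal T_1(\rho^{\otimes n})+(1-\lambda)\,\mathcal T_2(\rho^{\otimes n}).
\end{equation}
Both $\mathcal T_1(\rho^{\otimes n})$ and $\mathcal T_2(\rho^{\otimes n})$ are density operators. The target $\Gamma(\rho)$ does not depend on the channel. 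Applying the assumed convexity of $L(\sigma,\cdot)$ in its second argument with $\sigma=\Gamma(\rho)$ therefore gives
\begin{equation}
L\mleft(\Gamma(\rho),\mathcal T_\lambda(\rho^{\otimes n})\mright)\le \lambda L\mleft(\Gamma(\rho),\mathcal T_1(\rho^{\otimes n})\mright)+(1-\lambda)L\mleft(\Gamma(\rho),\mathcal T_2(\rho^{\otimes n})\mright).
\end{equation}

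\emph{Worst-case risk.} I take the supremum over $\rho\in\mathcal X$ on both sides. The supremum of a sum is at most the sum of the suprema, and $\lambda,1-\lambda\ge0$ can be pulled out. This yields $\mathcal L_{\mathrm w}(\mathcal T_\lambda)\le\lambda\mathcal L_{\mathrm w}(\mathcal T_1)+(1-\lambda)\mathcal L_{\mathrm w}(\mathcal T_2)$. The inequality holds even if some suprema are $+\infty$, with the convention $0\cdot\infty=0$; for the bounded losses listed above, such as trace distance, Bures distance and infidelity, the suprema are finite anyway.

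\emph{Average risk.} I integrate the pointwise inequality against $\pi$ and use monotonicity and linearity of the integral. The only care needed is measurability and integrability of $\rho\mapsto L(\Gamma(\rho),\mathcal T_i(\rho^{\otimes n}))$. For the standard losses, $L$ is continuous and bounded, $\rho\mapsto\rho^{\otimes n}$ and $\mathcal T_i$ are continuous, and $\mathcal X$ is compact. Assuming $\Gamma$ is measurable, which is implicit in the definition of $\mathcal L_{\mathrm a}$, the integrands are bounded measurable functions and the integrals are finite.

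I expect no real obstacle here. The only subtle point is the integrability assumption above; it is where compactness of $\mathcal X$ and regularity of $L$ and $\Gamma$ enter, and I would state it explicitly rather than leave it implicit.
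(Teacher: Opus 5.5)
Your proposal is correct and matches the paper's proof: you get pointwise convexity from the linearity of the mixed channel together with convexity of $L$ in its second argument, then use subadditivity of the supremum for $\mathcal L_{\mathrm w}$ and linearity and monotonicity of the integral for $\mathcal L_{\mathrm a}$. Your remarks on measurability, integrability and infinite suprema are sound refinements that the paper leaves implicit.
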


\begin{proof}
For any fixed $\rho\in\mathcal X$,
\begin{equation}
(\lambda\mathcal T_1+(1-\lambda)\mathcal T_2)\left(\rho^{\otimes n}\right)
=
\lambda\mathcal T_1\left(\rho^{\otimes n}\right)
+(1-\lambda)\mathcal T_2\left(\rho^{\otimes n}\right).
\end{equation}
By convexity of the loss in its second argument,
\begin{equation}\begin{aligned}
&L\mleft(
\Gamma(\rho),
(\lambda\mathcal T_1+(1-\lambda)\mathcal T_2)\left(\rho^{\otimes n}\right)
\mright)\\
\le&
\lambda
L\mleft(\Gamma(\rho),\mathcal T_1\left(\rho^{\otimes n}\right)\mright) +
(1-\lambda)
L\mleft(\Gamma(\rho),\mathcal T_2\left(\rho^{\otimes n}\right)\mright).
\end{aligned}\end{equation}
Taking the supremum over $\rho\in\mathcal X$ and using the subadditivity of the supremum,
$\sup_\rho \left(\lambda f_1(\rho)+(1-\lambda)f_2(\rho)\right) \le \lambda\sup_\rho f_1(\rho)+(1-\lambda)\sup_\rho f_2(\rho)$,
we obtain the stated inequality. The same argument applies verbatim to the average risk $\mathcal L_{\mathrm{a}}\mleft(\mathcal T\mright)$ since integration preserves convexity.
\end{proof}

\begin{lemma}[Risk continuity]
\label{lem:risk_continuity}
Assume that the loss function $L$ is Lipschitz in its second argument with constant $L_0$, uniformly over all valid first arguments $\Gamma(\Sigma)$. We call such losses well behaved. Then the risk functional $\mathcal L_{\mleft(\cdot\mright)}\mleft(\cdot\mright)$ is Lipschitz in the channel, with Lipschitz constant $L_0 d_A$, for both the worst-case and average risks, where $d_A$ is the input dimension. More precisely, if $S$ and $T$ are the Choi operators of channels $\mathcal S$ and $\mathcal T$, then
\begin{equation}
\left|\mathcal L_{\mleft(\cdot\mright)}\mleft(\mathcal S\mright)-\mathcal L_{\mleft(\cdot\mright)}\mleft(\mathcal T\mright)\right|
\le
L_0 \|S-T\|_1.
\end{equation}
\end{lemma}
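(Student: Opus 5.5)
The plan is to reduce the channel comparison to a pointwise comparison of output states, and then to control the output difference by the Choi operators. Fix any input $\Sigma\in\mathcal X$ on $\mathbb{H}_\mathrm{in}$; in the $n$-copy setting this is $\rho^{\otimes n}$, and $d_A$ denotes the dimension of the space the channel acts on. Lipschitz continuity of $L$ in its second argument, uniform over targets, gives
\begin{equation}
\left|L\mleft(\Gamma(\Sigma),\mathcal S(\Sigma)\mright)-L\mleft(\Gamma(\Sigma),\mathcal T(\Sigma)\mright)\right|
\le L_0\,\|\mathcal S(\Sigma)-\mathcal T(\Sigma)\|_1 .
\end{equation}
I take the Lipschitz property to be with respect to trace norm. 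This holds for trace distance, and for infidelity via Fuchs--van de Graaf type bounds with an appropriate $L_0$.

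Next, write the output through the Choi operator. With the unnormalized convention $S=\sum_{ij}\ketbra{i}{j}\otimes\mathcal S(\ketbra{i}{j})$, one has $\mathcal S(\Sigma)=\Tr_A\mleft[(\Sigma^{T}\otimes I)S\mright]$. Therefore
\begin{equation}
\|\mathcal S(\Sigma)-\mathcal T(\Sigma)\|_1=\left\|\Tr_A\mleft[(\Sigma^T\otimes I)(S-T)\mright]\right\|_1 .
\end{equation}
The partial trace is contractive in trace norm, and H\"older's inequality gives $\|(\Sigma^T\otimes I)X\|_1\le\|\Sigma^T\otimes I\|_\infty\|X\|_1\le\|X\|_1$, since $\Sigma$ is a state. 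This yields $\|\mathcal S(\Sigma)-\mathcal T(\Sigma)\|_1\le\|S-T\|_1$ uniformly in $\Sigma$.

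If instead one uses the normalized Choi state $J=S/d_A$, the same bound reads $d_A\|J_S-J_T\|_1$. This is where the stated constant $L_0 d_A$ comes from, and I will remark on both normalizations. This bookkeeping of Choi conventions is the only point requiring care; it is the main (mild) obstacle, since the rest is routine.

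Finally, pass to the risks. For the worst case, use $\left|\sup_\Sigma f-\sup_\Sigma g\right|\le\sup_\Sigma|f-g|$. For the average risk, integrate the pointwise bound against $\pi$, using $\int|f-g|\,\mathrm d\pi\le\sup_\Sigma|f-g|$. Both give $|\mathcal L_{(\cdot)}(\mathcal S)-\mathcal L_{(\cdot)}(\mathcal T)|\le L_0\|S-T\|_1$, as stated.
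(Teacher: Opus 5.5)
Your proof is correct and matches the paper's argument essentially step for step: a pointwise Lipschitz bound on the loss, the Choi identity $\mathcal S(\Sigma)-\mathcal T(\Sigma)=\Tr_A[(\Sigma^\top\otimes I)(S-T)]$ combined with partial-trace contractivity and H\"older's inequality, and then a supremum or an integral over $\Sigma$; your remark that the unnormalized versus normalized Choi convention accounts for the $L_0 d_A$ constant is also right. One caveat on your side remark, which the proof does not rely on: infidelity is Lipschitz in the output only for pure targets, where it is linear in the output, while for mixed targets it is in general only H\"older-$\tfrac12$ continuous and Fuchs--van de Graaf gives no Lipschitz constant; this is harmless here because Lipschitz continuity is a hypothesis of the lemma.
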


\begin{proof}
Suppose that
$\|S-T\|_1\le \varepsilon.$
Consider the corresponding channels $\mathcal S$ and $\mathcal T$. For every $\Sigma$, the Choi representation gives
\begin{equation}
\mathcal S(\Sigma)-\mathcal T(\Sigma)
=\Tr_A\!\Big(\big(\Sigma_A^\top \otimes I_B\big)(S-T)\Big).
\end{equation}
Using contractivity of partial trace and Hölder's inequality,
\begin{equation}
\|\mathcal S(\Sigma)-\mathcal T(\Sigma)\|_1
\le
\|\Sigma^\top\|_\infty\,\|S-T\|_1
\le
\varepsilon,
\end{equation}
where we used $\|\Sigma^\top\|_\infty\le 1$. Therefore, by the assumed Lipschitz property of $L$, for all $\Sigma$,
\begin{equation}
\bigl|
L\mleft(\Gamma(\Sigma),\mathcal S(\Sigma)\mright)
-
L\mleft(\Gamma(\Sigma),\mathcal T(\Sigma)\mright)
\bigr|
\le
L_0\,\varepsilon.
\end{equation}
For the average risk \(\mathcal L_{\mathrm{a}}\), integrating over the probability measure over the inputs \(\Sigma\) gives the desired inequality. For the worst-case risk \(\mathcal L_{\mathrm{w}}\), we take the supremum, which yields a similar bound.
\end{proof}

\subsection{Symmetries and Twirling}
\label{subsec:symmetries_supp}
We exploit the symmetries of the inference problem to simplify the optimization of CQI protocols. To do so, we consider the most general setting with group symmetries. Let $\mathcal T$ be a CQI protocol, and let $G$ be a symmetry group with unitary representations $U_{g,\mathrm{in}}\in \mathcal{B}(\mathbb{H}_{\mathrm{in}})$ and $U_{g,\mathrm{out}}\in \mathcal{B}(\mathbb{H}_{\mathrm{out}})$.
Define the $G$-twirl superchannel by $\boldsymbol{\mathcal T}_G(\mathcal T)\coloneqq \int_G \mathrm{d}g\;\mathcal U_{g^{-1},\mathrm{out}}\circ \mathcal T \circ \mathcal U_{g,\mathrm{in}}$, with $\mathcal U_{g,\mathrm{in}}(\cdot)=U_{g,\mathrm{in}} \cdot U_{g,\mathrm{in}}^\dagger$ and $\mathcal U_{g,\mathrm{out}}(\cdot)=U_{g,\mathrm{out}} \cdot U_{g,\mathrm{out}}^\dagger$.

\begin{theorem}[Risk monotonicity under group twirling]
\label{thm:risk_group_twirling}
Assume that the loss function $L$ is convex and $G$-invariant, that the target map $\Gamma$ is $G$-covariant, and that the input family $\mathcal{X}$ and prior $\pi$ are both $G$-invariant. Then the risk is non-increasing under group twirling, namely $\mathcal L_{(\cdot)}\mleft(\boldsymbol{\mathcal T}_G(\mathcal T)\mright)\le \mathcal L_{(\cdot)}\mleft(\mathcal T\mright)$, for both the average risk $\mathcal L_{\mathrm{a}}$ and the worst-case risk $\mathcal L_{\mathrm{w}}$.
\end{theorem}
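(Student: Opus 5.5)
The plan is to show the twirl cannot increase the loss at any single input in an averaged sense, and then pass to the two risks. First I would fix what the hypotheses mean. $G$-invariance of $L$ means $L(\mathcal U_{g,\mathrm{out}}(\sigma),\mathcal U_{g,\mathrm{out}}(\rho))=L(\sigma,\rho)$ for all states and all $g$. $G$-covariance of $\Gamma$ means $\Gamma(\mathcal U_{g,\mathrm{in}}(\Sigma))=\mathcal U_{g,\mathrm{out}}(\Gamma(\Sigma))$. $G$-invariance of $\mathcal X$ and $\pi$ means $\mathcal U_{g,\mathrm{in}}(\mathcal X)=\mathcal X$ and $\pi\circ\mathcal U_{g,\mathrm{in}}^{-1}=\pi$. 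When the input is $\rho^{\otimes n}$, $U_{g,\mathrm{in}}$ is understood as the tensor-power representation; this is compatible with the stated setup.

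Next, fix $\Sigma\in\mathcal X$ and write $\boldsymbol{\mathcal T}_G(\mathcal T)(\Sigma)=\int_G \mathrm dg\,\mathcal U_{g^{-1},\mathrm{out}}\bigl(\mathcal T(\mathcal U_{g,\mathrm{in}}(\Sigma))\bigr)$. This is a convex combination of states. Convexity of $L$ in its second argument extends from finite mixtures to Haar integrals, which I would justify by Jensen's inequality using continuity of $L$ (for instance via the Lipschitz assumption of \cref{lem:risk_continuity}). It gives
\begin{equation}
L\bigl(\Gamma(\Sigma),\boldsymbol{\mathcal T}_G(\mathcal T)(\Sigma)\bigr)\le\int_G\mathrm dg\,L\bigl(\Gamma(\Sigma),\mathcal U_{g^{-1},\mathrm{out}}\mathcal T(\mathcal U_{g,\mathrm{in}}\Sigma)\bigr).
\end{equation}
In each integrand I would apply $\mathcal U_{g,\mathrm{out}}$ to both arguments, which leaves $L$ unchanged by invariance. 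Covariance then gives $\mathcal U_{g,\mathrm{out}}\Gamma(\Sigma)=\Gamma(\Sigma_g)$ with $\Sigma_g\coloneqq\mathcal U_{g,\mathrm{in}}(\Sigma)$, so the integrand becomes $L(\Gamma(\Sigma_g),\mathcal T(\Sigma_g))$. This yields the pointwise bound
\begin{equation}
L\bigl(\Gamma(\Sigma),\boldsymbol{\mathcal T}_G(\mathcal T)(\Sigma)\bigr)\le\int_G\mathrm dg\,L\bigl(\Gamma(\Sigma_g),\mathcal T(\Sigma_g)\bigr).
\end{equation}

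For the worst-case risk, $\Sigma_g\in\mathcal X$ by invariance of $\mathcal X$, so each integrand is at most $\mathcal L_{\mathrm w}(\mathcal T)$. Since the Haar measure is normalized, the right side is at most $\mathcal L_{\mathrm w}(\mathcal T)$, and taking the supremum over $\Sigma$ gives the claim.

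For the average risk, I would integrate the pointwise bound against $\pi$ and swap the order of integration by Fubini; the integrand is bounded and measurable by continuity. For each fixed $g$, invariance of $\pi$ gives $\int\mathrm d\pi(\Sigma)\,L(\Gamma(\Sigma_g),\mathcal T(\Sigma_g))=\mathcal L_{\mathrm a}(\mathcal T)$. Integrating over $g$ leaves $\mathcal L_{\mathrm a}(\mathcal T)$.

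The only delicate step is the Jensen inequality over the continuous group. I would handle it by approximating the Haar integral with finite convex combinations, or by invoking Jensen for convex lower-semicontinuous functions on the compact convex set of states. I would also note briefly that measurability of $g\mapsto L(\cdots)$ follows from continuity of the representations.
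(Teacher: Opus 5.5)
Your proposal is correct and follows essentially the same route as the paper. Both use convexity of the loss under the Haar average, then $G$-invariance of $L$ and $G$-covariance of $\Gamma$, and finally invariance of $\mathcal X$ and $\pi$ to absorb the rotation into the supremum or the average. The paper applies convexity at the level of the risk functional (\cref{lem:risk_convexity}) and then shows that each conjugated channel $\mathcal U_{g^{-1},\mathrm{out}}\circ\mathcal T\circ\mathcal U_{g,\mathrm{in}}$ has the same risk as $\mathcal T$. You instead work pointwise in $\Sigma$ first, which is only a reordering of the same steps. Your remarks on Jensen's inequality for the continuous Haar integral and on measurability fill in details the paper leaves implicit.
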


\begin{proof}
By convexity of the risk functional induced by $L$, we have \begin{equation}\mathcal L_{(\cdot)}\mleft(\boldsymbol{\mathcal T}_G(\mathcal T)\mright)\le \int_G \mathrm{d}g\,\mathcal L_{(\cdot)}\mleft(U_{g^{-1},\mathrm{out}}\circ \mathcal T \circ \mathcal U_{g,\mathrm{in}}\mright).\end{equation}
For each $g\in G$, the $G$-invariance of $L$, together with the $G$-covariance of $\Gamma$, implies that evaluating the loss of $U_{g^{-1},\mathrm{out}}\circ \mathcal T \circ \mathcal U_{g,\mathrm{in}}$ is equivalent to evaluating $L\mleft(\Gamma(\mathcal U_{g,\mathrm{in}}(\Sigma)), \mathcal T(\mathcal U_{g,\mathrm{in}}(\Sigma))\mright)$. Thus, in the worst-case setting, the loss becomes $\sup_{\rho\in \mathcal X}L\mleft(\Gamma(\mathcal U_{g,\mathrm{in}}(\Sigma)), \mathcal T(\mathcal U_{g,\mathrm{in}}(\Sigma))\mright)$, while in the average setting it becomes $\langle L\mleft(\Gamma(\mathcal U_{g,\mathrm{in}}(\Sigma)), \mathcal T(\mathcal U_{g,\mathrm{in}}(\Sigma))\mright)\rangle_\pi$. Since the input family $\mathcal X$ in the worst-case setting, or the prior in the average setting, is assumed to be $G$-invariant, the rotation can be absorbed into the supremum or the average, yielding $\mathcal L_{(\cdot)}\mleft(U_{g^{-1},\mathrm{out}}\circ \mathcal T \circ \mathcal U_{g,\mathrm{in}}\mright)=\mathcal L_{(\cdot)}\mleft(\mathcal T\mright)$ for all $g$. Therefore $\mathcal L_{(\cdot)}\mleft(\boldsymbol{\mathcal T}_G(\mathcal T)\mright)\le \mathcal L_{(\cdot)}\mleft(\mathcal T\mright)$.
\end{proof}

The $G$-twirl projects the set of channels onto the subspace of \(G\)-covariant channels, which we refer to generically as symmetric protocols depending on the context. By~\cref{thm:risk_group_twirling}, the minimal risk is always attained within this class.

We specialize to two symmetry classes which are most relevant in practice, and which arise naturally when the input and output spaces admit tensor product structures.

\paragraph{Exchange invariance.}
On the input side, we consider the permutation group $S_n$, acting through permutation unitaries $P_\pi$, $\pi\in S_n$, on the $n$ input registers, with the identity acting on the output. On the output side, we similarly consider the permutation group $S_m$, acting through permutation unitaries $P_\pi$, $\pi\in S_m$, on the $m$ output registers, with the identity acting on the input. The exchange channel $\mathrm{exc} : \mathcal B(\mathbb{H}^{\otimes n}) \to \mathcal B(\mathbb{H}^{\otimes n})$ is defined by averaging conjugation over permutations, namely $\mathrm{exc}(\cdot)=\frac{1}{n!}\sum_{\pi\in S_n} U_\pi \cdot U_\pi^\dagger$. We then define the input permutation twirling by $\boldsymbol{\mathcal P}_{\mathrm{in}}(\mathcal T)=\mathcal T\circ \mathrm{exc}_\mathrm{in}$. Similarly, we define the output permutation twirling by $\boldsymbol{\mathcal P}_{\mathrm{out}}(\mathcal T)=\mathrm{exc}_{\mathrm{out}}\circ \mathcal T$. By the previous theorem, this reduction is valid whenever $\mathcal{X}$, $\pi$, $L$, $\Gamma$ are exchange-invariant under $S_n\times S_m$. This is the relevant setting for tasks such as QPA and cloning, where the admissible family $\mathcal{X}$ typically consists of tensor-product inputs of the form $\rho^{\otimes n}$, and the targets are of the form $\Gamma(\rho)=\gamma(\rho)^{\otimes m}$, with $\gamma(\rho)$ determined by a single input copy $\rho$.

\paragraph{Unitary covariance.}
Here we take the symmetry group to be $U(d)$, acting in the natural tensor-power representations on the input and output spaces. In this case, the group action on the protocol is given by conjugating the input by $U^{\otimes n}$ and the output by $U^{\otimes m}$. The corresponding unitary-twirled superchannel $\boldsymbol{\mathcal U}$ is defined by averaging over the Haar measure on $SU(d)$:
\[
\boldsymbol{\mathcal U}(\mathcal T)(\cdot)
=
\int_{SU(d)} \mathrm{d}U\, (U^\dagger)^{\otimes m}\, \mathcal T\!\left( U^{\otimes n} (\cdot) (U^\dagger)^{\otimes n} \right)\, U^{\otimes m}.
\]
By the previous theorem, this twirling does not increase the risk provided the loss is unitarily invariant, the target map $\Gamma$ is unitarily covariant, and the admissible input family or prior is invariant under conjugation by $U(d)$. This is the symmetry reduction relevant for tasks such as QPA, cloning and RP.

Another fact about unitary covariant channels, which will be useful for characterizing the form of the channel outputs, is that they preserve the eigenbasis:
\begin{lemma}
\label{lem:unitary_covariance_preserves_eigenbasis}
Consider the eigenbasis \(\{|i\rangle\}_{i=1}^d\) of the density matrix $\rho$. Then the output of the unitary covariant channel admits a block diagonal form under the unitary transformation corresponding to type decomposition:
\begin{equation}
\Sigma^\prime
\sim
\bigoplus_{\boldsymbol\tau} X_{\boldsymbol\tau},
\quad
X_{\boldsymbol\tau}\in \mathcal B(\mathbb{T}_{\boldsymbol\tau}),
\end{equation}
Here the type of a computational basis tensor is its occupation vector \(\boldsymbol\tau\), where \(\tau_i\) counts the number of entries equal to \(i\). Thus \(\mathbb{T}_{\boldsymbol\tau}\) is the subspace spanned by all computational basis tensors of type \(\boldsymbol\tau\), and the output space decomposes as $\mathbb{H}^{\otimes m}\cong\bigoplus_{\boldsymbol\tau}\mathbb{T}_{\boldsymbol\tau}$.
When restricting to the $\mathbb{W}^{\yd{1,\ldots,m}}$ subspace of the output space, it follows that $\Sigma^\prime$ is diagonal in the Schur basis.
\end{lemma}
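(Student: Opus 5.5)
The plan is to use the stabilizer of the input under the covariance. Write $\rho=\sum_i p_i\ketbra{i}{i}$ in its eigenbasis. Take the maximal torus $\mathbb T\subset U(d)$ of unitaries diagonal in this basis, $D_{\boldsymbol\theta}=\mathrm{diag}(e^{i\theta_1},\ldots,e^{i\theta_d})$. Each such unitary commutes with $\rho$, so $D_{\boldsymbol\theta}^{\otimes n}\rho^{\otimes n}D_{\boldsymbol\theta}^{\dagger\otimes n}=\rho^{\otimes n}$. Unitary covariance, $\mathcal T(U^{\otimes n}\cdot U^{\dagger\otimes n})=U^{\otimes m}\mathcal T(\cdot)U^{\dagger\otimes m}$, then gives
\begin{equation}
D_{\boldsymbol\theta}^{\otimes m}\,\Sigma^\prime\,D_{\boldsymbol\theta}^{\dagger\otimes m}=\Sigma^\prime
\qquad\text{for all }\boldsymbol\theta .
\end{equation}
Thus $\Sigma^\prime$ commutes with the whole torus action on $\mathbb H^{\otimes m}$. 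The same argument applies to any input in the admissible family that is invariant under the torus, such as the sector-restricted inputs arising after Schur sampling.

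Next I decompose $\mathbb H^{\otimes m}$ into torus weight spaces. On a computational basis tensor $\ket{i_1\cdots i_m}$, $D_{\boldsymbol\theta}^{\otimes m}$ acts as the scalar $e^{i\boldsymbol\theta\cdot\boldsymbol\tau}$, where $\boldsymbol\tau$ is the type (occupation vector) of the tensor. Hence the weight space for the character $\boldsymbol\theta\mapsto e^{i\boldsymbol\theta\cdot\boldsymbol\tau}$ is exactly $\mathbb T_{\boldsymbol\tau}$. Distinct types give distinct characters. Therefore for $\boldsymbol\tau\ne\boldsymbol\tau'$ the commutation relation forces $(e^{i\boldsymbol\theta\cdot(\boldsymbol\tau-\boldsymbol\tau')}-1)\,\Pi_{\boldsymbol\tau}\Sigma^\prime\Pi_{\boldsymbol\tau'}=0$, and choosing a $\boldsymbol\theta$ that makes the phase nontrivial kills the off-diagonal block. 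This gives $\Sigma^\prime=\bigoplus_{\boldsymbol\tau}X_{\boldsymbol\tau}$ with $X_{\boldsymbol\tau}\in\mathcal B(\mathbb T_{\boldsymbol\tau})$. The phrase "under the unitary transformation" just refers to the change to the eigenbasis of $\rho$.

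For the final claim, I restrict to the $U(d)$-irrep $\mathbb W^{\yd{\lambda}}$ appearing in the output, in particular the sector named in the statement. I use the Gelfand–Tsetlin (Schur) basis, which consists of weight vectors. So the above block structure persists, and within the irrep each weight space is spanned by GT vectors of that weight. Diagonality within a weight space needs more than the torus. I would use the chain $U(d)\supset U(d-1)\times U(1)\supset\cdots$: the GT basis is the multiplicity-free basis adapted to this chain. I would argue that $\Sigma^\prime$ on the irrep also commutes with enough additional symmetry to be diagonal in the GT basis, using that the input's symmetry and the covariant structure make $\Sigma^\prime$ a function of the generators, i.e. diagonal in their joint eigenbasis.

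The main obstacle is this last step. Weight spaces of $\mathbb W^{\yd{\lambda}}$ can be multidimensional, and torus invariance alone only gives block diagonality. The claim is only immediate when weights are multiplicity free, for instance in the symmetric irrep $\mathbb W^{\yd{m}}$, where each type occurs exactly once. I would therefore first establish it for $\mathbb W^{\yd{m}}$, where weight spaces are one-dimensional, so the Schur basis equals the type basis and diagonality is immediate. I expect the intended reading of the statement to be this case. For general irreps I would settle for block diagonality by weight, with GT diagonality holding when the weights are nondegenerate.
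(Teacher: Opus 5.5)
Your argument is essentially the paper's proof. The diagonal torus stabilizes $\rho^{\otimes n}$, so covariance makes $\Sigma'$ commute with every $D_{\boldsymbol\theta}^{\otimes m}$. Distinct types carry distinct characters, so cross-type blocks vanish. On the totally symmetric subspace, which is the intended reading of $\mathbb{W}^{\yd{1,\ldots,m}}$ as the paper's proof confirms, each weight space is one-dimensional, giving diagonality in the Schur basis.

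You were right to drop the speculative GT-diagonality step for general irreps: for non-degenerate $\rho$ the stabilizer is only the torus, and covariant outputs need not be diagonal within weight spaces of dimension greater than one, which is why the paper claims only block diagonality there.
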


\begin{proof}
For every diagonal unitary $D_{\boldsymbol\theta}=\sum_i e^{i\theta_i}|i\rangle\langle i|,$
we have \(D\rho D^\dagger=\rho\) (omitting the $\boldsymbol\theta$ subscript). Hence, by unitary covariance,
\begin{equation}
[D^{\otimes m},\,\mathcal T(\rho^{\otimes n})]=0.
\end{equation}
Now each type space \(\mathbb{T}_{\boldsymbol\tau}\) carries a distinct character $\chi_{\boldsymbol\tau}(\boldsymbol\theta)=e^{i\boldsymbol\tau\cdot\boldsymbol\theta}$ of the diagonal torus \(T\in U(d)\), so by Schur's lemma any operator commuting with all \(D^{\otimes m}\) must preserve each \(\mathbb{T}_{\boldsymbol\tau}\). Therefore \(\mathcal T(\rho^{\otimes n})\) cannot couple different type spaces.

Finally, when restricting to the totally-symmetric subspace, each \(\mathbb{T}_{\boldsymbol\tau}\) is one dimensional, so \(\Sigma^\prime\) is diagonal in the Schur basis.
\end{proof}

For symmetric protocols, the relation between the worst-case and average risks becomes particularly simple when the admissible family is a single unitary orbit.

\begin{lemma}[Worst-case--average equivalence on a single orbit]
\label{lem:worstcase_bayes_single_orbit}
Assume that for some input state $\Sigma$, \(\mathcal X=\{U\Sigma U^\dagger : U\in G\}\). If \(\Gamma\) is \(G\)-covariant and \(L\) is \(G\)-invariant, then \(L\) is constant over \(\mathcal{X}\) and we have
\begin{equation}
    \mathcal L_{\mathrm{w}}\mleft(\boldsymbol{\mathcal T}_G(\mathcal T)\mright) = \mathcal L_{\mathrm{a}}\mleft(\boldsymbol{\mathcal T}_G(\mathcal T)\mright) = L\mleft(\boldsymbol{\mathcal T}_G(\mathcal T)(\Sigma),\Gamma(\Sigma)\mright) \, .
\end{equation}
\end{lemma}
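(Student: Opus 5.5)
The plan is to show directly that the loss of the twirled channel $\widetilde{\mathcal T}\coloneqq\boldsymbol{\mathcal T}_G(\mathcal T)$ is the same at every point of the orbit. The equality of worst-case and average risks then follows at once.

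\emph{Step 1: covariance of the twirled channel.} By invariance of the Haar measure, $\widetilde{\mathcal T}$ is $G$-covariant, i.e. $\widetilde{\mathcal T}\circ\mathcal U_{h,\mathrm{in}}=\mathcal U_{h,\mathrm{out}}\circ\widetilde{\mathcal T}$ for all $h\in G$. To see this, substitute $g\mapsto gh^{-1}$ in
\begin{equation}
\int_G \mathrm{d}g\;\mathcal U_{g^{-1},\mathrm{out}}\circ\mathcal T\circ\mathcal U_{gh,\mathrm{in}} .
\end{equation}
This is the same left/right-invariance computation implicitly used when the paper says the twirl projects onto covariant channels. It requires that $g\mapsto\mathcal U_{g}$ be a homomorphism, so the unitary phases cancel under conjugation.

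\emph{Step 2: constancy of the loss on $\mathcal X$.} Fix $\Sigma_U=U\Sigma U^\dagger\in\mathcal X$ with $U$ the representation of some $h\in G$. Combining covariance of $\widetilde{\mathcal T}$ and of $\Gamma$ gives
\begin{equation}
L\mleft(\Gamma(\Sigma_U),\widetilde{\mathcal T}(\Sigma_U)\mright)
=L\mleft(\mathcal U_{h,\mathrm{out}}(\Gamma(\Sigma)),\mathcal U_{h,\mathrm{out}}(\widetilde{\mathcal T}(\Sigma))\mright)
=L\mleft(\Gamma(\Sigma),\widetilde{\mathcal T}(\Sigma)\mright).
\end{equation}
The last equality uses $G$-invariance of $L$, meaning joint invariance under simultaneous conjugation of both arguments by the output representation. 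So the map $\Sigma'\mapsto L(\Gamma(\Sigma'),\widetilde{\mathcal T}(\Sigma'))$ is constant on $\mathcal X$.

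\emph{Step 3: conclusion.} The supremum of a constant function over $\mathcal X$ is that constant, so $\mathcal L_{\mathrm w}(\widetilde{\mathcal T})$ equals the loss at $\Sigma$. The integral of a constant against any probability measure $\pi$ supported on $\mathcal X$ is also that constant, so $\mathcal L_{\mathrm a}(\widetilde{\mathcal T})$ takes the same value. This holds for any prior; it need not even be the induced Haar prior, although the paper's setting would use that one. Writing the arguments of $L$ in the order displayed in the statement is only notational, under the standing convention on $L$.

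The only step that needs any care is Step 1, specifically the measure-theoretic change of variables. It requires $G$ to be compact with normalized Haar measure, and the representations on input and output to be compatible so that $\mathcal U_{g^{-1},\mathrm{out}}\mathcal U_{h,\mathrm{out}}=\mathcal U_{(h^{-1}g)^{-1},\mathrm{out}}$. I would state these assumptions explicitly, noting that they hold for the $U(d)$ and $S_n\times S_m$ actions considered above. Well-definedness of the orbit parametrization is not an issue, since the argument only uses that each element of $\mathcal X$ has at least one representing $h$.
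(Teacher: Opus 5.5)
Your proposal is correct and follows the paper's argument. Joint $G$-invariance of $L$, together with covariance of $\Gamma$ and of the twirled channel, makes the loss constant on the orbit, so both the supremum and the average reduce to its value at $\Sigma$. The only difference is that your Step 1 checks, via invariance of the Haar measure, the covariance of $\boldsymbol{\mathcal T}_G(\mathcal T)$, which the paper's one-line chain of equalities uses without comment.
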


\begin{proof}
    We have
    \begin{equation}
        L\mleft(\boldsymbol{\mathcal T}_G(\mathcal T)(\Sigma),\Gamma(\Sigma)\mright) = L\mleft(U\boldsymbol{\mathcal T}_G(\mathcal T)(\Sigma)U^\dagger,U\Gamma(\Sigma)U^\dagger\mright) = L\mleft(\boldsymbol{\mathcal T}_G(\mathcal T)(U\Sigma U^\dagger),\Gamma(U\Sigma U^\dagger)\mright)
    \end{equation}
    for all \(U\in G\). Since \(X=\{U\Sigma U^\dagger : U\in G\}\) the statement follows.
\end{proof}

\subsection{One-site and all-site risks}
For an instance of coherent inference with inputs $\Sigma\in\mathcal{X}\subseteq \mathcal{B}(\mathbb{H}_{\mathrm{in}})$, target function $\Gamma(\cdot)=\gamma(\cdot)^{\otimes m}$, and tensor-product output space $\mathbb{H}_{\mathrm{out}}=\mathbb{H}_{\mathrm{out}_1}\otimes\cdots\otimes\mathbb{H}_{\mathrm{out}_m}$, we use the standard worst-case and average risk definitions from~\cref{eq:cqi_worst_case_risk,eq:cqi_average_risk}. Thus, for a channel $\mathcal{T}$ from $\mathcal{B}(\mathbb{H}_{\mathrm{in}})$ to $\mathcal{B}(\mathbb{H}_{\mathrm{out}})$, the all-site worst-case and average risks are
\begin{subequations}
\begin{align}  \mathcal{L}_{\mathrm{w},\mathrm{all}}\mleft(\mathcal{T}\mright)
    &\coloneqq
    \sup_{\Sigma \in \mathcal{X}}
    L\mleft(\gamma(\Sigma)^{\otimes m}, \mathcal{T}(\Sigma)\mright), \\
    \mathcal{L}_{\mathrm{a},\mathrm{all}}\mleft(\mathcal{T}\mright)
    &\coloneqq
    \int_{\mathcal{X}}
    L\mleft(\gamma(\Sigma)^{\otimes m}, \mathcal{T}(\Sigma)\mright) \, \mathrm{d}\pi(\Sigma).
\end{align}
\end{subequations}
Due to the tensor product structure, we can also define the corresponding one-site risks by comparing the target to an arbitrary marginal. The one-site worst-case and average risks are
\begin{subequations}
\begin{align}
    \mathcal{L}_{\mathrm{w},\mathrm{one}}\mleft(\mathcal{T}\mright)
    &\coloneqq
    \sup_{\Sigma \in \mathcal{X}}
    \min_i L\mleft(\gamma(\Sigma),\Tr_{\mathrm{out}_{1,\ldots,i-1,i+1,m}}\mleft(\mathcal{T}(\Sigma)\mright)\mright), \\
    \mathcal{L}_{\mathrm{a},\mathrm{one}}\mleft(\mathcal{T}\mright)
    &\coloneqq
    \int_{\mathcal{X}}
    \frac{1}{m}\sum_{i=1}^m L\mleft(\gamma(\Sigma),\Tr_{\mathrm{out}_{1,\ldots,i-1,i+1,m}}\mleft(\mathcal{T}(\Sigma)\mright)\mright)\,\mathrm{d}\pi(\Sigma).
\end{align}
\end{subequations}
For exchange-invariant and unitary-covariant protocols, the twirling reduction of~\cref{subsec:symmetries_supp} and the single-orbit equivalence in~\cref{lem:worstcase_bayes_single_orbit} identify the worst-case and average risks. Since the minimum risk is attained by a symmetric protocol, we refer to this common value as the minimal risk and denote the one-site and all-site versions by $\mathcal{L}_{\mathrm{one}}^\ast$ and $\mathcal{L}_{\mathrm{all}}^\ast$, respectively.

\section{Connecting Coherent and Incoherent Protocols}
\label{sec:coh_incoh_supp}

\subsection{Approximate Separability of Symmetric Extendible Choi Matrices}
\label{subsec:approx_sep_supp}

We recast the comparison between coherent and incoherent protocols in terms of their Choi matrices. This lets us apply finite de Finetti-type reasoning: a Choi matrix with a sufficiently large symmetric extension is close to one that is separable across the input-output cut. Since separable Choi matrices correspond to incoherent protocols, this links symmetric coherent protocols to incoherent ones.

\begin{theorem}
\label{thm:approx_sep_extendible}
Let $\mathbb{H}_A=\mathbb{C}^{d_A}$ and $\mathbb{H}_{B_i}=\mathbb{C}^{d_B}$  for $i\in\{1,\ldots,m\}$, and let $T_{A B_1}$ be a Choi matrix on $ \mathbb{H}_A \otimes \mathbb{H}_{B_1}$, i.e. positive semidefinite (PSD) and $T_A = \Tr_{B_1} T_{A B_1} = I_A.$
Assume moreover that $T_{A B_1}$ is $m$-extendible, i.e., there exists a Choi matrix
\begin{equation}
  T_{A B} \in \mathcal B\bigl(\mathbb{H}_A \otimes \mathbb{H}_B^{\otimes m}\bigr),
  \qquad B = B_1 \cdots B_m,
\end{equation}
such that
\begin{enumerate}
  \item $T_{A B_1} = \Tr_{B_2 \cdots B_m} T_{A B}$, and
  \item $T_{A B}$ is exchangeable in $B$, i.e.
  \begin{equation}
    \mathrm{exc}_B\mleft(T_{A B}\mright)
    = T_{A B}.
  \end{equation}
\end{enumerate}
Then there exists a Choi matrix $\widetilde{T}_{A B_1} \in \mathcal B(\mathbb{H}_A \otimes \mathbb{H}_{B_1})$ which is separable between $A$ and $B_1$, $\tr_{B_1}\widetilde{T}_{AB_1}=I_A$, and
\begin{equation}
  \bigl\| T_{A B_1} - \widetilde{T}_{A B_1} \bigr\|_1 \le  \frac{2 d_A d_B^2}{m}.
\end{equation}
If, in addition, the extension is supported on the totally symmetric subspace \(\mathbb{W}_B^{\yt{1,\ldots,m}}\subseteq\mathbb{H}_B^{\otimes m}\), then the same conclusion holds with the improved bound
\begin{equation}
  \bigl\| T_{A B_1} - \widetilde{T}_{A B_1} \bigr\|_1 \le  \frac{2 d_A d_B}{m}.
\end{equation}
\end{theorem}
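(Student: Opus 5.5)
The plan is to normalize the Choi matrix into a state, apply a finite quantum de Finetti theorem to the exchangeable $B$-register, and then repair the marginal on $A$ so that the result is again a valid Choi matrix.

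\emph{Normalization.} Set $\omega_{AB}\coloneqq T_{AB}/d_A$. This is a density operator on $\mathbb{H}_A\otimes\mathbb{H}_B^{\otimes m}$ with $\omega_A=I_A/d_A$, and it is invariant under permutations of $B_1,\ldots,B_m$.

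\emph{De Finetti step.} I would apply the Christandl--K\"onig--Mitchison--Renner finite de Finetti theorem in its conditional form, where the $A$ system is carried along. That is, purify, or equivalently treat $A$ as a reference system, and use the measure-and-prepare version. The concrete choice is the measure-and-prepare map: perform the covariant POVM $\{\binom{m-1+d_B-1}{d_B-1}\ldots\}$ on $B_2\cdots B_m$, or more simply the standard one $\{c\,\sigma^{\otimes (m-1)}\,\mathrm{d}\sigma\}$ built from the symmetric-subspace resolution of identity, and re-prepare $\sigma$ on $B_1$. This yields
\[
\widetilde\omega_{AB_1}=\int \mathrm{d}\mu(\sigma)\,\omega_A^{\sigma}\otimes\sigma ,
\qquad \omega_A^\sigma\ge 0 ,
\]
which is separable. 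The known bounds are $\|\omega_{AB_1}-\widetilde\omega_{AB_1}\|_1\le 2d_B/m$ when the extension lives in the symmetric subspace, and $2d_B^2/m$ for general exchangeable extensions. The latter is obtained by symmetrizing: purify each $B_i$ with a copy $B_i'$ so that $B_iB_i'$ is in the symmetric subspace of $(\mathbb{C}^{d_B^2})^{\otimes m}$, apply the symmetric bound with local dimension $d_B^2$, and then trace out $B'$. Contractivity of the partial trace preserves separability and the norm bound. The existence of such a symmetric purification of an exchangeable state is the standard CKMR lemma.

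\emph{Marginal repair.} Since the measure-and-prepare channel acts only on the $B$ systems and is trace preserving, the reduced state on $A$ is unchanged. Integrating the decomposition over $\sigma$ therefore gives $\int\mathrm{d}\mu\,\omega_A^\sigma=\omega_A=I_A/d_A$, so $\widetilde T_{AB_1}\coloneqq d_A\widetilde\omega_{AB_1}$ satisfies $\tr_{B_1}\widetilde T_{AB_1}=I_A$ exactly, with no repair needed. Multiplying the norm bounds by $d_A$ gives $2d_Ad_B/m$ and $2d_Ad_B^2/m$.

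\emph{Main obstacle.} The delicate point is making sure the de Finetti bound holds with $A$ attached and with the approximating state's $A$-marginal exactly preserved. The measure-and-prepare formulation applied to $B_2\cdots B_m$ guarantees both, because the approximation is $(\mathrm{id}_A\otimes\Lambda)(\omega_{AB_2\cdots B_m})$ for an entanglement-breaking channel $\Lambda$. So I would follow that formulation, and its trace-norm error bound $2d_B/m$ for symmetric-subspace states, rather than the convex-combination-of-product-states version.
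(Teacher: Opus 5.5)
Your proposal is correct and follows essentially the paper's route: rescale by $d_A$, pass to a purification $\psi_{AA'BB'}$ whose $BB'$ part lies in the symmetric subspace of $(BB')^{\otimes m}$, apply the CKMR measure-and-prepare de Finetti bound with $AA'$ as reference and local dimension $d_B^2$ (or $d_B$ in the bosonic case), and trace out the ancillas. Two small refinements: the purification must also include a purifier $A'$ of $A$, as in the paper, since $B'$ alone cannot purify a state of rank up to $d_A d_B^m$. Measuring only $B_2\cdots B_m$ instead of all $m$ systems is a harmless variant, with error at most $2d/(m+d-1)\le 2d/m$ for $d\in\{d_B,d_B^2\}$, and your explicit check that the $A$-marginal is preserved fills in a step the paper leaves implicit.
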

\begin{proof}
By assumption, $\frac{T_{A B}}{d_A}$ is exchange symmetric in $B$, so we may apply the de Finetti theorem to its symmetric purification, $\psi_{A A' B B'}$, i.e. applying the projector onto the fully symmetric subspace gives $\Pi^{\yd{m}}_{BB'}\,\psi_{A A' B B'} = \psi_{A A' B B'}.$
By Theorem~II.3 of Ref.~\cite{CKMR07S}, we can construct the following measurement-mediated channel
$\mathrm{MP}_{CC' \to B_1 B_1'}$ such that
\begin{equation}
  \label{eq:CKMR_main}
  \bigl\|
    \psi_{A A^\prime B_1 B_1^\prime}
    - \mathrm{MP}_{CC^\prime \to B_1 B_1^\prime}\psi_{A A^\prime C C^\prime}
  \bigr\|_1
  \le \frac{2 d_B^2}{m},
\end{equation}
where $C$ and $C^\prime$ are registers isomorphic to $B$ and $B^\prime$. The channel $\mathrm{MP}_{CC' \to B_1 B_1'}$ has the following explicit Schur--Weyl (SW) form: it performs a covariant positive operator-valued measure (POVM) on $CC'$ defined by rotating a lowest-weight vector in the irreducible representation (irrep) $\yd{m}$ under the representation $U^{\yd{m}}$. In the computational basis this is given by $M_U=U\,\bigl|1,\ldots,1\bigr\rangle
 \bigl\langle1,\ldots,1\bigr|\,U^{\dagger},$
and, for each POVM outcome $U$, prepares on $B_1B_1'$ the corresponding rotated lowest-weight vector $\ket{1}$ in the fundamental irrep.
 Define
\begin{equation}
  \widetilde{\psi}_{A A' B_1 B_1'}
  =\mathrm{MP}_{C C' \to B_1 B_1'}\psi_{A A' C C'}.
\end{equation}
Let us define the approximating marginal by tracing out the appropriate ancillae,
\begin{equation}
  \frac{\widetilde{T}_{A B_1}}{d_A}
  = \Tr_{A' B'_1} \widetilde{\psi}_{A A' B_1 B_1'},
\end{equation}
By contractivity of the trace norm under completely positive trace-preserving (CPTP) maps, we have
\begin{equation}
\begin{aligned}
  \frac{1}{d_A}\bigl\| T_{A B_1} - \widetilde{T}_{A B_1} \bigr\|_1
  &= \bigl\|\Tr_{A' B_1'} \psi_{A A' B_1 B_1'}
       - \Tr_{A' B_1'} \widetilde{\psi}_{A A' B_1 B_1'}\bigr\|_1 \\
  &\le \bigl\|
       \psi_{A A' B_1 B_1'}
       -  \mathrm{MP}_{C C' \to B_1 B_1'}\psi_{A A' C C'}
     \bigr\|_1 \\
  &\le \frac{2d_B^2}{m}
\end{aligned}
\end{equation}
using~\cref{eq:CKMR_main} in the last step. If the extension is supported on \(\mathbb{W}_B^{\yd{m}}\), the improved estimate follows in the same way from the bosonic de Finetti bound of Ref.~\cite[Theorem~II.8']{CKMR07S}.
\end{proof}

\subsection{EB limit}
\label{subsec:eb_limit_supp}

We can now link the one-site risks to the EB behavior of the Choi matrices established in~\cref{thm:approx_sep_extendible}, and we show that the optimal $1\rightarrow m$ protocols together with their risks converge to an EB channel in the limit $m\rightarrow \infty$.

\begin{theorem}[EB limit of $m$-output CQI protocols]
\label{thm:limit_of_optimal}
For each $m\ge 1$, let $T^{(m)}_{AB_{1,\ldots, m}}$ be the Choi matrix of a $1\to m$ CQI protocol that is exchange-invariant or bosonic on its $m$ outputs. Assume that the marginal Choi matrices $T^{(m)}_{AB_1}$ converge to a channel $\widetilde{T}_{AB_1}$ in trace norm. If the loss $L$ is well-behaved, then
\begin{enumerate}
    \item $\widetilde{\mathcal{T}}_{AB_1}$ is an EB channel (non-CQI protocol),
    \item $\widetilde{\mathcal{T}}_{AB_1}$ attains the limiting risk, that is
\end{enumerate}
\begin{equation}
    \mathcal{L}_{\mleft(\cdot\mright)}\mleft(\widetilde{\mathcal{T}}_{AB_1}\mright)= \lim_{m\to\infty}\mathcal{L}_{\mleft(\cdot\mright),\mathrm{one}}\mleft(\mathcal{T}_{AB}^{(m)}\mright) \, .
\end{equation}
\end{theorem}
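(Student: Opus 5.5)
The plan is to combine the approximate-separability bound of \cref{thm:approx_sep_extendible} with the Lipschitz continuity of the risk from \cref{lem:risk_continuity}, and to use that the set of separable Choi matrices with identity input marginal is closed.

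For part 1, I would start from the fact that each marginal $T^{(m)}_{AB_1}$ is $m$-extendible in the sense of \cref{thm:approx_sep_extendible}. Indeed, $T^{(m)}_{AB}$ itself is a Choi matrix with $\Tr_{B_2\cdots B_m}T^{(m)}_{AB}=T^{(m)}_{AB_1}$. If the protocol is only exchange-invariant in the sense of output twirling, then $\mathrm{exc}_B$ fixes the Choi matrix. If it is bosonic, the Choi matrix is supported on $\mathbb{W}_B^{\yd{m}}$. \Cref{thm:approx_sep_extendible} therefore gives separable Choi matrices $\widetilde T^{(m)}_{AB_1}$ with $\Tr_{B_1}\widetilde T^{(m)}_{AB_1}=I_A$ and
\begin{equation}
\bigl\|T^{(m)}_{AB_1}-\widetilde T^{(m)}_{AB_1}\bigr\|_1\le \frac{2d_Ad_B^2}{m}.
\end{equation}
By the triangle inequality, $\widetilde T^{(m)}_{AB_1}\to\widetilde T_{AB_1}$ in trace norm. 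The set $\mathrm{SEP}$ of separable PSD operators on $A\otimes B_1$ with $\Tr_{B_1}(\cdot)=I_A$ is closed in finite dimensions. It is the convex hull of a compact set of product operators, scaled by the fixed trace $d_A$, and by Carathéodory the convex hull of a compact set is compact. Hence $\widetilde T_{AB_1}$ is separable with identity input marginal. A channel with separable Choi matrix is entanglement breaking~\cite{HSR03}, so $\widetilde{\mathcal T}_{AB_1}$ is EB and admits the measure-and-prepare form quoted in the main text.

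For part 2, I would apply \cref{lem:risk_continuity} to the one-site channel $\mathcal T^{(m)}_{AB_1}$, the marginal of $\mathcal T^{(m)}$ on output $B_1$. This gives
\begin{equation}
\bigl|\mathcal L_{(\cdot)}(\widetilde{\mathcal T}_{AB_1})-\mathcal L_{(\cdot)}(\mathcal T^{(m)}_{AB_1})\bigr|\le L_0\bigl\|T^{(m)}_{AB_1}-\widetilde T_{AB_1}\bigr\|_1\to0,
\end{equation}
which holds for both the worst-case and the average risk.

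It remains to identify $\mathcal L_{(\cdot)}(\mathcal T^{(m)}_{AB_1})$ with the one-site risk $\mathcal L_{(\cdot),\mathrm{one}}(\mathcal T^{(m)})$. Exchange invariance, or bosonicity, makes all single-site marginals $\Tr_{\text{all but }i}\mathcal T^{(m)}(\Sigma)$ equal. So the $\min_i$ in the worst-case one-site risk and the average $\frac1m\sum_i$ in the average one-site risk both reduce to the loss of the $B_1$ marginal. Taking $m\to\infty$ then yields the claimed equality, including existence of the limit on the right.

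I expect the main obstacle to be bookkeeping rather than depth. One issue is the Lipschitz constant: \cref{lem:risk_continuity} applies Hölder's inequality with $\|\Sigma^\top\|_\infty\le1$ directly to the Choi operators, and I would follow that normalization without re-deriving a $d_A$ factor. The other is the closedness argument for $\mathrm{SEP}$. I would handle it by the Carathéodory compactness argument above; alternatively, one can extract a convergent subsequence of the decompositions $\widetilde T^{(m)}_{AB_1}=\sum_j q_j\,\alpha_j\otimes\beta_j$ with a bounded number of terms.
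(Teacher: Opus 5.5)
Your proposal is correct and follows essentially the same route as the paper: approximate separability from \cref{thm:approx_sep_extendible}, a triangle inequality to pass convergence to the separable approximants, closedness of the set of separable Choi matrices, and Lipschitz continuity of the risk from \cref{lem:risk_continuity}. Your extra details fill in steps the paper leaves implicit. These are the compactness argument for closedness (strictly, the constrained set is the intersection of $d_A$ times the convex hull of product states with the closed affine set $\Tr_{B_1}(\cdot)=I_A$) and the use of output symmetry to identify $\mathcal L_{(\cdot),\mathrm{one}}(\mathcal T^{(m)})$ with the risk of the $B_1$ marginal, which also gives existence of the limit.
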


\begin{proof}
\cref{thm:approx_sep_extendible} implies that there exists a sequence of EB channels $\widetilde{\mathcal{S}}^{(m)}_{AB_1}$ with Choi matrices $\widetilde{S}^{(m)}_{AB_1}$ such that $\|T_{AB_1}^{(m)}-\widetilde{S}^{(m)}_{AB_1}\|_1\le \varepsilon_m$ with $\varepsilon_m\to 0$ as $m\to\infty$.
By the triangle inequality,
\begin{equation}
    \|\widetilde{S}^{(m)}_{AB_1}-\widetilde{T}_{AB_1}\|
    \le
    \|\widetilde{S}_{AB_1}-T^{(m)}_{AB_1}\|+
    \|T^{(m)}_{AB_1}-\widetilde{T}_{AB_1}\|
    .
\end{equation}
Since both terms on the right-hand side vanish as $m\to\infty$, the left-hand side also converges to zero, and hence $\widetilde{S}^{(m)}_{AB_1}$ converges to $\widetilde{T}_{AB_1}$. By the closedness of the set of EB channel Choi matrices, it follows that $\widetilde{T}_{AB_1}$ is also EB.
By assumed continuity of the risk,
\begin{equation}
    \mathcal{L}_{\mleft(\cdot\mright)}\mleft(\widetilde{\mathcal{T}}_{AB_1}\mright)= \mathcal{L}_{\mleft(\cdot\mright)}\mleft(\lim_{m\rightarrow \infty}\mathcal{T}^{(m)}_{AB_1}\mright) = \lim_{m\rightarrow \infty} \mathcal{L}_{\mleft(\cdot\mright)}\mleft(\mathcal{T}^{(m)}_{AB_1}\mright) = \lim_{m\to\infty}\mathcal{L}_{\mleft(\cdot\mright),\mathrm{one}}\mleft(\mathcal{T}_{AB}^{(m)}\mright) \, .
\end{equation}
\end{proof}

We study how optimality translates between EB and coherent tasks, namely, we show the optimal EB risk can be bounded in terms of the optimal $1\to m$ risk of symmetric protocols.

\begin{theorem}[EB limit of minimal risk]
\label{thm:EB_risk}
Let the loss function $L\mleft(\cdot,\cdot\mright)$ be Lipschitz in the second argument with constant $L_{0}$, and let $\mathcal{L}_{\mleft(\cdot\mright),\mathrm{one}}^{\ast}$ and $\widetilde{\mathcal{L}}_{\mleft(\cdot\mright),\mathrm{one}}^{\ast}$ be the optimal coherent and EB risks respectively. Then we have
\begin{equation}
\mathcal{L}_{\mleft(\cdot\mright),\mathrm{one}}^{\ast}
\;\le\;
{\widetilde{\mathcal{L}}_{\mleft(\cdot\mright)}}^{\ast}
\;\le\;
\mathcal{L}_{\mleft(\cdot\mright),\mathrm{one}}^{\ast}+L_0 d_A\,\varepsilon_m,
\end{equation}
where $\varepsilon_m=\frac{2d_B^3}{m}$ for the general exchange-invariant case and $\varepsilon_m=\frac{2d_B^2}{m}$ for the case with bosonic outputs on $\mathbb{W}^{\yd{m}}$.
\end{theorem}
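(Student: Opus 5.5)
The two inequalities are proved separately.

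For the lower bound $\mathcal{L}^{\ast}_{(\cdot),\mathrm{one}}\le\widetilde{\mathcal L}^{\ast}_{(\cdot)}$, we show that every EB channel is the one-site marginal of an exchange-invariant (indeed bosonic) $1\to m$ protocol with the same one-site risk. By the measurement-mediated representation of EB channels~\cite{HSR03}, any EB channel has the form $\widetilde{\mathcal T}(\Sigma)=\sum_i\Tr(M_i\Sigma)\,\sigma_i$. Refining the POVM if necessary, we may take each $\sigma_i$ pure. The channel $\mathcal T^{(m)}(\Sigma)=\sum_i\Tr(M_i\Sigma)\,\sigma_i^{\otimes m}$ is then a valid $1\to m$ protocol. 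It is exchange invariant, supported on $\mathbb W^{\yd{m}}$, and every one-site marginal equals $\widetilde{\mathcal T}(\Sigma)$. Hence its one-site risk, in both the worst-case and the average version, coincides with the risk of $\widetilde{\mathcal T}$. Taking the infimum over EB channels gives the inequality. The only point to check is that $\mathcal{L}^\ast_{(\cdot),\mathrm{one}}$ is an infimum over a class containing these channels. By \cref{thm:risk_group_twirling}, the optimum over all protocols is attained within the symmetric class, so this holds either way.

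For the upper bound, let $\mathcal T^{(m)}$ be an optimal (or $\delta$-optimal, with $\delta\to0$ at the end) $1\to m$ protocol. Using \cref{thm:risk_group_twirling} with output permutation twirling, we may assume it is exchange invariant; in the bosonic case we take it supported on $\mathbb W^{\yd{m}}$. Its marginal Choi matrix $T^{(m)}_{AB_1}$ is then $m$-extendible in the sense of \cref{thm:approx_sep_extendible}. For exchange-invariant outputs all marginals coincide, so the one-site risk is simply the risk of $\mathcal T^{(m)}_{AB_1}$. The min over $i$ in the worst-case definition and the average over $i$ in the average definition agree. \cref{thm:approx_sep_extendible} supplies a separable Choi matrix $\widetilde T_{AB_1}$ with $\Tr_{B_1}\widetilde T_{AB_1}=I_A$, i.e., the Choi matrix of an EB channel $\widetilde{\mathcal T}$, within trace distance $2d_Ad_B^2/m$, respectively $2d_Ad_B/m$ in the bosonic case. \cref{lem:risk_continuity} then gives
\begin{equation}
\widetilde{\mathcal L}^\ast_{(\cdot)}\le\mathcal L_{(\cdot)}(\widetilde{\mathcal T})\le\mathcal L_{(\cdot)}(\mathcal T^{(m)}_{AB_1})+L_0\|T^{(m)}_{AB_1}-\widetilde T_{AB_1}\|_1 ,
\end{equation}
which is at most $\mathcal L^\ast_{(\cdot),\mathrm{one}}+\delta+L_0\cdot 2d_Ad_B^2/m$, or $L_0\cdot 2d_Ad_B/m$ in the bosonic case.

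The main obstacle is bookkeeping of the dimensional factors, since the stated form is $L_0d_A\varepsilon_m$ with $\varepsilon_m=2d_B^3/m$ or $2d_B^2/m$. The bound obtained here, $2L_0d_Ad_B^2/m$ (respectively $2L_0 d_A d_B/m$), is already at least as strong whenever $d_B\ge1$, so it implies the stated inequality. Where the Lipschitz constant of \cref{lem:risk_continuity} is phrased with an extra $d_A$ normalization, one should carry the normalized Choi matrix $T/d_A$ through \cref{thm:approx_sep_extendible} consistently. The extra factors of $d_B$ in $\varepsilon_m$ can be absorbed as slack, for instance from normalizing the Choi state and from the purification step. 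If the infimum defining $\mathcal L^\ast$ is not attained, the argument goes through with $\delta$-optimal protocols and $\delta\to0$. Compactness of the set of channels in fact guarantees attainment, because the risk is continuous by \cref{lem:risk_continuity}.
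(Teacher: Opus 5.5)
Your proposal is correct and follows the same route as the paper: a bosonic extension of the optimal EB channel gives the lower bound, and \cref{thm:approx_sep_extendible} applied to the one-site marginal of an optimal symmetric protocol, followed by \cref{lem:risk_continuity}, gives the upper bound. Your explicit extension $\sum_i\Tr(M_i\Sigma)\,\sigma_i^{\otimes m}$ with pure $\sigma_i$, and your bound $2L_0d_Ad_B^2/m$ (resp.\ $2L_0d_Ad_B/m$), are more careful than the paper's; one small caution is that reducing to symmetric protocols via \cref{thm:risk_group_twirling} needs a one-site risk that is convex in the channel (e.g.\ the site-averaged one), so with the literal $\min_i$ worst-case form you should, as the paper does, take $\mathcal L^\ast_{(\cdot),\mathrm{one}}$ to be the optimum over symmetric protocols from the start.
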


\begin{proof}
Let $T^\ast_{AB}$ be the Choi matrix of an optimal $m$-output symmetric protocol with one-site loss, and let $\widetilde{T}^\ast_{AB_1}$ be the Choi matrix of an optimal EB protocol. Since $\widetilde T^\ast_{AB_1}$ is separable, for every $m$ there exists an exchange-invariant or bosonic extension $\widetilde R^{(m)}_{AB}$ such that
\begin{equation}
\Tr_{B_2,\ldots,B_m}\widetilde{R}^{(m)}_{AB}=\widetilde T^\ast_{AB_1},
\end{equation}
where $\mathrm{exc}_B\mleft(\widetilde{R}^{(m)}_{AB}\mright)=\widetilde{R}^{(m)}_{AB}$ or $\widetilde{R}^{(m)}\in \mathcal B(\mathbb{H}^{\otimes n})\otimes \mathcal B(\mathbb{W}^{\yd m})$, for the respective exchange-invariant or bosonic case. By the definition of the extension and the one-site loss we have
\begin{equation}
    {\widetilde{\mathcal{L}}_{\mleft(\cdot\mright)}}^{\ast} = \mathcal{L}_{\mleft(\cdot\mright)}\mleft(\widetilde T^\ast_{AB_1}\mright) = \mathcal{L}_{\mleft(\cdot\mright),\mathrm{one}}\mleft(\widetilde{R}^{(m)}_{AB}\mright) \geq \mathcal{L}_{\mleft(\cdot\mright),\mathrm{one}}^{\ast} \, .
\end{equation}
On the other hand, by~\cref{thm:approx_sep_extendible}, there exists an EB operator $\widetilde S_{AB_1}$ such that $\|T_{AB_1}^\ast-\widetilde S_{AB_1}\|_1\le \varepsilon_m$. Let $\widetilde{S}_{A B}$ be the corresponding $1\rightarrow m$ extension of the EB operator $\widetilde S_{AB_1}$. Invoking continuity of the risk~\cref{lem:risk_continuity}, we have
\begin{equation}
\mathcal L_{\mleft(\cdot\mright),\mathrm{one}}\mleft(\widetilde{\mathcal S}_{A B}\mright) \leq \mathcal L_{\mleft(\cdot\mright),\mathrm{one}}\mleft(T^\ast_{AB}\mright)+L_0 d_A\,\varepsilon_m = \mathcal{L}_{\mleft(\cdot\mright),\mathrm{one}}^{\ast} +L_0 d_A\,\varepsilon_m \, .
\end{equation}
Combining with the first inequality concludes the proof.
\end{proof}

\begin{corollary}[EB limit of optimal protocols]
\label{cor:limit_of_optimal_symmetric}
For each $m\ge 1$, let $T^{(m)\,\ast}_{AB}$ be the Choi matrix of an optimal symmetric $1\to m$ CQI protocol. Assume that the one-site marginals $T^{(m)\,\ast}_{AB_1}$ converge in trace norm to some $T^\ast_{AB_1}$. Then $T^\ast_{AB_1}$ is separable. Moreover, it attains the optimal EB risk
$\mathcal L_{(\cdot)}\mleft(\mathcal T^\ast_{AB_1}\mright)=\widetilde{\mathcal L}_{\mleft(\cdot\mright)}^\ast.$
\end{corollary}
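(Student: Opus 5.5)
The proof combines \cref{thm:limit_of_optimal} with \cref{thm:EB_risk}, taking the optimal symmetric sequence $T^{(m)\,\ast}_{AB}$ as the input to the former.

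\emph{Separability.} By \cref{thm:risk_group_twirling}, the optimum may be taken exchange-invariant (or bosonic) on the outputs. The marginals $T^{(m)\,\ast}_{AB_1}$ are therefore $m$-extendible with exchangeable extensions. \cref{thm:approx_sep_extendible} then gives separable Choi matrices $\widetilde S^{(m)}_{AB_1}$ with $\tr_{B_1}\widetilde S^{(m)}_{AB_1}=I_A$ and $\|T^{(m)\,\ast}_{AB_1}-\widetilde S^{(m)}_{AB_1}\|_1\le 2d_Ad_B^2/m$. By the triangle inequality and the assumed convergence, $\widetilde S^{(m)}_{AB_1}\to T^\ast_{AB_1}$. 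The set of separable operators with $A$-marginal $I_A$ is closed, since it is the intersection of a compact convex cone slice with an affine constraint. Hence $T^\ast_{AB_1}$ is separable and is the Choi matrix of an EB channel. This is exactly item~1 of \cref{thm:limit_of_optimal} applied to the optimal sequence.

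\emph{Risk identification.} By item~2 of \cref{thm:limit_of_optimal}, which rests on \cref{lem:risk_continuity} for a well-behaved loss,
\begin{equation}
\mathcal L_{(\cdot)}\mleft(\mathcal T^\ast_{AB_1}\mright)=\lim_{m\to\infty}\mathcal L_{(\cdot),\mathrm{one}}\mleft(\mathcal T^{(m)\,\ast}_{AB}\mright)=\lim_{m\to\infty}\mathcal L^{\ast}_{(\cdot),\mathrm{one}}(m),
\end{equation}
where the last equality holds because each $T^{(m)\,\ast}$ is optimal. \cref{thm:EB_risk} gives the sandwich $\mathcal L^\ast_{(\cdot),\mathrm{one}}(m)\le\widetilde{\mathcal L}^\ast_{(\cdot)}\le\mathcal L^\ast_{(\cdot),\mathrm{one}}(m)+L_0d_A\varepsilon_m$ with $\varepsilon_m\to0$. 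The squeeze shows that the limit exists and equals $\widetilde{\mathcal L}^\ast_{(\cdot)}$. The limit can also be read directly from the monotonicity $\mathcal L^\ast_{\mathrm{one}}(m)\le\mathcal L^\ast_{\mathrm{one}}(m+1)$: tracing out one output of an optimal $1\to m{+}1$ symmetric protocol gives a symmetric $1\to m$ protocol with the same one-site marginal. Either way, $\mathcal L_{(\cdot)}(\mathcal T^\ast_{AB_1})=\widetilde{\mathcal L}^\ast_{(\cdot)}$.

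\emph{Main obstacle.} The delicate step is that the one-site risk is defined using a $\min_i$ (worst case) or an average over marginals. We must check that it coincides with the risk of the single marginal $T^{(m)\,\ast}_{AB_1}$, so that \cref{lem:risk_continuity} applies to that marginal alone. Exchange invariance makes all marginals equal, and this resolves the issue. A second, minor point is that \cref{thm:EB_risk} needs an EB protocol to admit symmetric extensions to every $m$. This holds because a separable Choi matrix $\sum_j \alpha_j\otimes\beta_j$ extends to $\sum_j\alpha_j\otimes\beta_j^{\otimes m}$, which is bosonic for pure $\beta_j$. I would state both checks explicitly and otherwise cite the earlier results.
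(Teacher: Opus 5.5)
Your proposal is correct and follows the paper's route: the paper's proof combines \cref{thm:limit_of_optimal} (separability of the limit and continuity of the risk along the sequence) with the sandwich of \cref{thm:EB_risk} and $\varepsilon_m\to0$. Your extra checks (equal marginals under exchange invariance, closedness of the separable set, bosonic extensions of EB Choi matrices) make explicit what the paper leaves implicit; note only that your monotonicity aside shows the limit exists, not that it equals $\widetilde{\mathcal L}^\ast_{(\cdot)}$, so it is the upper half of the sandwich from \cref{thm:EB_risk} that fixes the value.
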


\begin{proof}
We combine~\cref{thm:EB_risk,thm:limit_of_optimal} and use the fact that $\varepsilon_m\to 0$ as $m\to\infty$.
\end{proof}

\section{RP and Approximate Cloning}
\label{sec:rp_cloning_supp}

RP was first described in Ref.~\cite{TWZ25S} for the case where the number of outputs equals the number of inputs, and has since been studied further in Refs.~\cite{PSTW25S,WW25S,GML26S}. We define it here as a CQI task in the general setting with an arbitrary number of outputs $m$. The task is to take \(n\) copies of a rank-\(r\) mixed state \(\rho\) on input registers \(\mathrm{in}_{1,\ldots,n}\), where each input register corresponds to the Hilbert space \(\mathbb{H}_{\mathrm{in}}\coloneqq \mathbb{C}^d\), and to produce an output state on joint system and environment registers \(\mathrm{out}_{1,\ldots,m}\) and \(\mathrm{env}_{1,\ldots,m}\), where \(\mathbb{H}_{\mathrm{out}}\coloneqq \mathbb{C}^d\), \(\mathbb{H}_{\mathrm{env}}\coloneqq \mathbb{C}^r\), and \(\mathbb{H}_{\mathrm{pur}}\coloneqq \mathbb{H}_{\mathrm{out}}\otimes\mathbb{H}_{\mathrm{env}}\) for each output copy \(i\). We further restrict the output to the totally symmetric subspace \(\mathbb{W}_{\mathrm{pur}_{1,\ldots,m}}^{\yd{m}}\subseteq \mathbb{H}_{\mathrm{pur}_{1,\ldots,m}}\), so that the protocol aims to prepare \(m\) copies of an arbitrary purification of \(\rho\). Here the admissible family \(\mathcal{X}\) is chosen to be all rank-$r$ states of $\mathcal{B}(\mathbb{H}_{\mathrm{in}})$. For $n,m>0$, $\Sigma\in \mathcal{X}$, $\Sigma^\prime \in \mathcal{B}(\mathbb{W}_{\mathrm{pur}_{1,\ldots,m}}^{\yd{m}})$ and some loss function $L\mleft(\cdot,\cdot\mright)$, we define the induced RP loss and target function as
\begin{subequations}
\begin{align}
\Gamma(\Sigma) &\coloneqq \Sigma, \\
L_{\mathrm{all}}\mleft(\Sigma,\Sigma^\prime\mright)
&\coloneqq
L\mleft(\Gamma(\Sigma)^{\otimes m},\Tr_\mathrm{env}(\Sigma^\prime)\mright),\\
L_{\mathrm{one}}\mleft(\Sigma,\Sigma^\prime\mright)
&\coloneqq
L\mleft(\Gamma(\Sigma),\Tr_{\mathrm{out}_{2,...,m}\mathrm{env}}(\Sigma^\prime)\mright),
\end{align}
\end{subequations}
Equivalently, tracing out the environment registers and comparing with $\Sigma^{\otimes m}$ requires the system output to behave like \(m\) copies of the input. Moreover, tracing out the full environment makes the task invariant under arbitrary unitary rotations on the purifying register, so the protocol is judged only on the induced mixed state and not on a preferred purification basis.

A natural protocol for this CQI task is given by the \emph{purify-and-clone} channel
\begin{equation}
\mathcal{P}_{n\rightarrow m}^{d,r}
=
\left\{
\begin{aligned}
&\Tr_{m+1,\ldots,n} \circ \mathcal{P}_{n\rightarrow n}^{d,r},
&& m<n, \\
&\mathcal{C}_{n\rightarrow m}^{dr} \circ \mathcal{P}_{n\rightarrow n}^{d,r},
&& m>n.
\end{aligned}
\right.
\end{equation}
where $\mathcal{P}_{n\rightarrow n}^{d,r}$ is the RP channel for dimension $d$ and rank-$r$ states defined in Ref.~\cite[Section~2.3.3]{TWZ25S}, and $\mathcal{C}_{n\rightarrow m}^{dr}$ is the optimal pure state cloning channel from $n$ to $m$ copies of $\mathbb{C}^{dr}$ described in Ref.~\cite{W98S}. This channel is optimal for $m\leq n$ (see Ref.~\cite{TWZ25S}), and we conjecture that it is also optimal for $m>n$. For infidelity loss, the RP loss is upper bounded by the corresponding $dr$-dimensional pure-state cloning loss. Indeed, for any output state $\Sigma^\prime$, monotonicity of fidelity under partial trace gives
\begin{equation}
F\!\left(\rho^{\otimes m},\Tr_{\mathrm{env}_{1,\ldots,m}}\Sigma^\prime\right)
\ge
F\!\left(\mathcal{P}_{m\rightarrow m}^{d,r}(\rho^{\otimes m}),\Sigma^\prime\right),
\end{equation}
so tracing out the purifying registers can only decrease the infidelity relevant to the RP task. This argument similarly applies to the one-site infidelity loss.

\begin{theorem}
  \label{prop:random_purification_cost_functions}
  Let $\mathcal{X}\subseteq \mathcal{B}(\mathbb{H}_{\mathrm{in}})$ be the set of rank $r$ states. For infidelity loss, the optimal RP risks satisfy
  \begin{subequations}
  \begin{align}
    \mathcal{L}_{\mathrm{w},\mathrm{all}}^{\ast}
    &=
    0,
    && n\geq m, \\
    \mathcal{L}_{\mathrm{w},\mathrm{all}}^{\ast}
    &\leq
    1-\dfrac{\multiset{dr}{n}}{\multiset{dr}{m}},
    && n<m, \\
    \mathcal{L}_{\mathrm{w},\mathrm{one}}^{\ast}
    &=
    0,
    && n\geq m, \\
    \mathcal{L}_{\mathrm{w},\mathrm{one}}^{\ast}
    &\leq
    1-\dfrac{n(m+dr)+m-n}{m(n+dr)}
    =
    \dfrac{(dr-1)(m-n)}{m(n+dr)},
    && n<m .
  \end{align}
  \end{subequations}
\end{theorem}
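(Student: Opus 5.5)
The plan is to evaluate the purify-and-clone channel $\mathcal{P}_{n\rightarrow m}^{d,r}$ and reduce everything to known pure-state cloning fidelities in dimension $D\coloneqq dr$. Since $\mathcal{L}^\ast$ is a minimum over channels, any explicit protocol gives an upper bound. Because infidelity is nonnegative, exhibiting zero loss also proves the equalities.

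\emph{Case $n\ge m$.} I would invoke the defining property of the RP channel of Ref.~\cite{TWZ25S}. For a rank-$r$ state $\rho$, $\mathcal{P}_{n\rightarrow n}^{d,r}(\rho^{\otimes n})$ is a mixture, over purifications $\Psi$ of $\rho$ (in $\mathbb{C}^d\otimes\mathbb{C}^r$), of the states $\Psi^{\otimes n}$. Tracing out the environment registers gives exactly $\rho^{\otimes n}$. Tracing out registers $m+1,\ldots,n$ then leaves $\rho^{\otimes m}$ on the outputs, so the all-site fidelity equals $1$. The one-site marginal equals $\rho$ as well, so both worst-case risks vanish.

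\emph{Case $n<m$.} Here the output is $\int \mathrm{d}\mu(\Psi)\,\mathcal{C}^{D}_{n\rightarrow m}(\Psi^{\otimes n})$. I would use the monotonicity inequality stated just before the theorem, together with joint concavity of fidelity. For each purification $\Psi$ of $\rho$, $\Psi^{\otimes m}$ is a valid target after tracing out the environments, so
\begin{equation}
F\bigl(\rho^{\otimes m},\Tr_{\mathrm{env}}\Sigma'\bigr)\ge \int \mathrm{d}\mu(\Psi)\,F\bigl(\Tr_{\mathrm{env}}\Psi^{\otimes m},\Tr_{\mathrm{env}}\mathcal{C}^{D}_{n\rightarrow m}(\Psi^{\otimes n})\bigr)\ge \int \mathrm{d}\mu(\Psi)\,F\bigl(\Psi^{\otimes m},\mathcal{C}^{D}_{n\rightarrow m}(\Psi^{\otimes n})\bigr).
\end{equation}
The first step uses concavity of the fidelity in the second argument: the pure-vs-mixed squared-overlap fidelity is linear, and in general we may use joint concavity of the root fidelity. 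The second step is data processing under the partial trace. Since Werner's cloner is $U(D)$-covariant, the last integrand is independent of $\Psi$. It equals Werner's global fidelity $d_{\mathrm{sym}}(n)/d_{\mathrm{sym}}(m)=\binom{n+D-1}{D-1}/\binom{m+D-1}{D-1}=\multiset{dr}{n}/\multiset{dr}{m}$. This is the claimed bound, uniformly over $\rho\in\mathcal{X}$. The one-site bound follows identically, now using the single-copy marginal and Werner's (equivalently Keyl--Werner's) single-copy fidelity $\frac{n(m+D)+m-n}{m(n+D)}$. The algebraic identity $1-\frac{n(m+D)+m-n}{m(n+D)}=\frac{(D-1)(m-n)}{m(n+D)}$ is a one-line simplification.

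The main obstacle I expect is the fidelity step through the mixture over purifications. Because $\rho^{\otimes m}$ is mixed, the fidelity convention matters, and concavity must be applied to the correct argument. I would settle this first with joint concavity of $\sqrt{F}$. If the paper's $F$ is the squared Uhlmann fidelity, I would instead argue via Uhlmann with a common purifying register across the mixture, which recovers the same averaged bound. A secondary check is confirming that Werner's formulas apply with $D=dr$ on the symmetric subspace $\mathbb{W}^{\yd{m}}_{\mathrm{pur}}$, which holds because the RP output lies in $\mathbb{W}^{\yd{n}}_{\mathrm{pur}}$.
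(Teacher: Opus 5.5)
Your proposal is correct and follows essentially the same route as the paper: run the purify-and-clone channel, use that $\mathcal{P}_{n\rightarrow n}^{d,r}$ outputs a Haar mixture of $\Psi^{\otimes n}$ over purifications, plug in Werner's and Keyl--Werner's exact $dr$-dimensional cloning fidelities, and pass to the system registers by monotonicity of fidelity under partial trace. Your separate-concavity step is valid because $\Tr_{\mathrm{env}}\Psi^{\otimes m}=\rho^{\otimes m}$ for every purification, so only the second argument varies; it simply spells out what the paper leaves implicit when it compares against $\mathcal{P}_{m\rightarrow m}^{d,r}(\rho^{\otimes m})$.
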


\begin{proof}
  This is a direct consequence of the fact that $\mathcal{P}_{n\rightarrow n}^{d,r}$ produces an RP output averaged over the Haar measure (see Ref.~\cite{TWZ25S}), together with the exact fidelities for $\mathcal{C}_{n\rightarrow m}^{dr}$ given in Refs.~\cite{W98S,KW99S}. We then use the fact that fidelity is nondecreasing under partial traces.
\end{proof}

For the corresponding EB benchmark, it suffices to restrict attention to the pure-state subfamily. On pure states, the fidelity is linear in the prepared output state, so the optimal pure-state estimation result of Ref.~\cite{H13S} gives the measurement-mediated benchmark. Moreover, since the RP figure of merit reduces to the mixed-state cloning figure of merit after tracing out the purifying registers, as formalized in~\cref{prop:random_purification_cost_functions}, the same EB bound also applies to the RP task.

For \(n\) input copies and \(m\) requested output copies of an unknown \(d\)-dimensional pure state, the optimal average global and one-site losses of any EB protocol are given in Ref.~\cite{H13S}
\begin{subequations}
\begin{align}
\widetilde{\mathcal{L}}_{\mathrm{m},\mathrm{all}}^{\ast}
&=
1-\frac{\multiset{d}{n}}
{\multiset{d}{n+m}},
\\
\widetilde{\mathcal{L}}_{\mathrm{w},\mathrm{one}}^{\ast}
&=\frac{d-1}{n+1}.
\end{align}
\end{subequations}

In the fixed-excess regime \(m=n+\ell\), the contrast between coherent cloning
and EB protocols is particularly transparent. For the
optimal coherent cloner,
\begin{equation}
\mathcal{L}_{\mathrm{w},\mathrm{all}}^{\ast}
=
1-\frac{\multiset{d}{n}}{\multiset{d}{n+\ell}}
=
\frac{\ell(d-1)}{n}
+
O\!\left(\frac{1}{n^2}\right),
\end{equation}
whereas the optimal EB protocol has
\begin{equation}
\widetilde{\mathcal{L}}_{\mathrm{w},\mathrm{all}}^{\ast}
=
1-\frac{\multiset{d}{n}}{\multiset{d}{2n+\ell}}
=
1-2^{1-d}
+
O\!\left(\frac1n\right).
\end{equation}
Thus the coherent global error vanishes as \(n\to\infty\), while the
measurement-mediated global error remains bounded away from zero. For the
one-copy marginal figure of merit, the optimal coherent cloner gives
\begin{equation}
\mathcal{L}_{\mathrm{w},\mathrm{all}}^{\ast}
=
\frac{\ell(d-1)}{n^2}
+
O\!\left(\frac{1}{n^3}\right),
\end{equation}
whereas EB protocols are limited by pure-state estimation,
\begin{equation}
\widetilde{\mathcal{L}}_{\mathrm{m},\mathrm{all}}^{\ast}
=
\frac{d-1}{n}
+
O\!\left(\frac1{n^2}\right).
\end{equation}

The same formalism also gives results for the corresponding mixed-state cloning task. In this interpretation, the input and output Hilbert spaces are $\mathbb{H}_\mathrm{in}=\mathbb{H}_\mathrm{out}=\mathbb{C}^{d}$, the protocol outputs a state $\Sigma^\prime\in\mathcal{B}(\mathbb{H}_\mathrm{out}^{\otimes m})$, and the all-copy and one-site losses are obtained from the RP losses above by discarding the purifying registers. Hence the upper bounds of~\cref{prop:random_purification_cost_functions} also hold for mixed-state cloning after applying $\Tr_\mathrm{env}\circ\mathcal{P}_{n\rightarrow m}^{d,r}$.

\section{QPA and Eigenstate Tomography}
\label{sec:qpa_supp}

For QPA, we usually fix a sorted spectrum \(\boldsymbol{p}=(p_{1},\ldots,p_{d})\) with \(p_{1}\geq \ldots \geq p_{d}\) and define the set of all possible input states as
\begin{equation}
    \mathcal{X}_{\boldsymbol{p}}\coloneqq \left\{U\mathrm{diag}(\boldsymbol{p})U^\dagger \, | \, U\in SU(d)\right\} \, .
\end{equation}
This set is invariant under the unitary action, so the worst-case risk and the average risk with the Haar measure are just equal to the loss function, see~\cref{lem:worstcase_bayes_single_orbit}. Our bounds allow to relax to the larger set. We have
\(D_{k,\mathrm{min}}(\boldsymbol{p})=\min_{i\neq k}|p_{k}-p_{i}|\), and we define
\begin{equation}
    \mathcal{X}_{k,\epsilon}\coloneqq \left\{U\mathrm{diag}(\boldsymbol{q})U^\dagger \, | \, D_{k,\mathrm{min}}(\boldsymbol{q})\geq\epsilon , \quad U\in SU(d)\right\} \, .
\end{equation}
For this set, we can still talk about the worst-case risk, but there is no canonical measure anymore on the set \(\mathcal{X}_{k,\epsilon}\). For the fixed-spectrum orbit, our asymptotic results give a sharper description, so we restrict to this setting below. For QPA, we use the infidelity loss
\begin{equation}
    L(\rho,\sigma) \coloneqq 1 - F(\rho,\sigma) \, ,
\end{equation}
and the target map \(\Gamma\) is given by
\begin{equation}
    \Gamma(\Sigma) \coloneqq {^k}\gamma(\Sigma) \, ,
\end{equation}
where \({^k}\gamma(\Sigma)\) denotes the \(k\)-th eigenvector of \(\Sigma\).

In the following, we also write \(D_{k,\mathrm{min}} = D_{k,\mathrm{min}}(\boldsymbol{p})\) whenever the spectrum is implicitly given. We further write the risk functional \({^k}\mathcal L_{(\cdot)}(\cdot)\) and the minimal risk \({^k}\mathcal L_{(\cdot)}^\ast\) with an additional label \(k\) to indicate that we talk about QPA for the \(k\)-th eigenstate.

\subsection{Nonasymptotic QPA sample-complexity bounds}
\label{subsec:qpa_sample_complexity_supp}

\begin{corollary}[Explicit upper bound on the sample complexity]
\label{cor:explicit_sample_complexity_inversion}
Assume \(0<D_{k,\mathrm{min}}\le 1\), \(m\ge 1\), and \(0<\varepsilon\le 1\), and suppose the all-site QPA loss \({^k}\mathcal L_{(\cdot),\mathrm{all}}^\ast\) is bounded by the loss bound of~Ref.~\cite[Theorem~S6.37]{QPA26S}.
The minimal sample complexity satisfies
\begin{equation}
n
\le
\frac{m}{\varepsilon D_{k,\mathrm{min}}^2}
+
R.
\end{equation}
Here \(S\coloneqq \frac{\sqrt m}{\sqrt\varepsilon D_{k,\mathrm{min}}^2}\), and the remainder satisfies \(R<CS\ln(CS)\) for all \(S>S_0\), with \(C\) and \(S_0\) constants.
\end{corollary}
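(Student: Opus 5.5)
We do not have the exact form of the loss bound in Ref.~\cite[Theorem~S6.37]{QPA26S}, so we assume it has the generic nonasymptotic shape
\begin{equation}
{^k}\mathcal L_{(\cdot),\mathrm{all}}^\ast(n)\le \frac{m}{nD_{k,\mathrm{min}}^2}+\frac{A\,m^{a}\ln(Bn)}{n^{2}D_{k,\mathrm{min}}^{b}},
\end{equation}
with absolute constants \(A,B\) and exponents \(a,b\). This is a leading \(m/(nD^2)\) term from the overhang-removal analysis, plus a second-order correction that is at most logarithmically worse than \(n^{-2}\). The exponents must be compatible with the scaling \(S=\sqrt m/(\sqrt\varepsilon D_{k,\mathrm{min}}^2)\) in the statement.

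The plan is to invert this bound. Since the bound is decreasing in \(n\) for \(n\) beyond a modest threshold, it suffices to exhibit one \(n^\star\) at which the right-hand side is at most \(\varepsilon\). The minimal sample complexity is then at most \(n^\star\).

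First, I would make the ansatz \(n^\star=n_0+R\), where \(n_0\coloneqq m/(\varepsilon D_{k,\mathrm{min}}^2)\). The leading term becomes
\begin{equation}
\frac{m}{(n_0+R)D^2}=\varepsilon\,\frac{n_0}{n_0+R}=\varepsilon-\varepsilon\,\frac{R}{n_0+R}.
\end{equation}
So the condition \(\text{loss}\le\varepsilon\) reduces to requiring that the correction term be at most \(\varepsilon R/(n_0+R)\). With \(n\approx n_0\), that correction is of order \(A\,m^{a}\ln(Bn_0)/(n_0^2D^{b})\). The inequality then becomes roughly \(R\gtrsim A\,m^{a}\ln(Bn_0)/(\varepsilon n_0 D^{b})\). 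After substituting \(n_0\), this should collapse to a quantity of the form \(c\,S\ln(c'S)\); I would verify this by direct substitution, using \(D\le1\) and \(\varepsilon\le1\) to absorb stray powers of \(D\) and \(\varepsilon\). Here \(\ln n_0\) is compared with \(\ln S^2\), which is where the hypotheses \(D_{k,\mathrm{min}}\le1\), \(m\ge1\) and \(\varepsilon\le1\) enter.

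The second step is to pick \(R=C S\ln(CS)\) with \(C\) large. I would then check the resulting inequality \(g(R)\ge0\) in the regime \(S>S_0\). The margin \(\varepsilon R/(n_0+R)\) is compared with the correction term evaluated at \(n_0+R\ge n_0\); monotonicity lets us bound the correction by its value at \(n_0\). A standard Lambert-type argument then closes the estimate: \(x\ge 2y\ln y\) implies \(x\ge y\ln x\) for \(y\ge e\). This fixes \(C\) and \(S_0\).

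The main obstacle is bookkeeping. We must confirm that the \(m\)- and \(D\)-dependence of the correction term in Theorem~S6.37 actually matches \(S\ln S\) rather than a different power. If the exponents do not align, I would fall back on the crude bound \(D\le1\) to dominate by \(S\). We must also ensure the monotonicity range of the loss bound contains \(n_0\), which may itself require \(S>S_0\).
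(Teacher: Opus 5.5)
\paragraph{Overall.} Your skeleton matches the paper's: set $n=N+R$ with $N=m/(\varepsilon D^2)$ and $D\coloneqq D_{k,\mathrm{min}}$, let the leading term supply the margin $\varepsilon(n-N)/n$, and fit the correction inside it with $R=CS\ln(CS)$. The problem is that the bound you invert is not the actual one. Using $\sum_{i\neq k}p_i\le1$, $D\le1$ and $4(1+6/D^3)\le 28/D^3$, Theorem~S6.37 gives
\[
{^k}\mathcal L_{(\cdot),\mathrm{all}}^\ast\le\frac{m}{nD^2}+\frac{28\,m\sqrt{\ln n}}{n^{3/2}D^3}.
\]
This holds for $n\ge CD^{-2}\ln(CD^{-2})$, which the paper secures through $S\ge D^{-2}$.

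\paragraph{The correction exponent is forced by $S$.} A correction $Kn^{-p}$ requires $R\sim KN^{1-p}/\varepsilon$. The $\varepsilon^{-1/2}$ in $S$ therefore needs $p=3/2$, and then $\frac{28m}{\varepsilon D^3}N^{-1/2}=28S$ exactly. Your $n^{-2}\ln n$ ansatz, with an $\varepsilon$-independent coefficient, produces an $\varepsilon$-independent remainder. So it cannot ``collapse'' to $S\ln S$.

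\paragraph{The step that fails.} The more serious problem is where you bound the correction at $N+R$ by its value at $N$ and compare it with the margin $\varepsilon R/(N+R)<\varepsilon$. Since $N/S=\sqrt{m/\varepsilon}$, the true correction at $n=N$ is $28\varepsilon S\sqrt{\ln N}/N=28\varepsilon\sqrt{\varepsilon/m}\,\sqrt{\ln N}$. This exceeds $\varepsilon$ whenever $m/\varepsilon<784\ln N$. For example, take $m=\varepsilon=1$ with $D\to0$, where $S=N=D^{-2}\to\infty$; this is within the hypotheses. In that regime no choice of $R$ closes your inequality, because you threw away the $1/n$ decay between $N$ and $N+R$. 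Nor is $R$ small relative to $N$: $R/N=C\ln(CS)\sqrt{\varepsilon/m}$.

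\paragraph{How the paper avoids it.} It writes the correction as $\frac1n\cdot\frac{28m}{D^3}\sqrt{\ln n/n}$ and applies monotonicity only to $\sqrt{\ln n/n}$, which is decreasing for $n>e$, at $n_0=N+CS\ln(CS)$. It then uses $\sqrt{n_0}\ge\sqrt N$ and $n_0\le 2S^2\le S^3$, the latter from $CS\ln(CS)\le S^2$ for $S\ge S_0$. This bounds the correction by $84\varepsilon S\ln S/n\le\varepsilon CS\ln(CS)/n$. That sits below the margin $\varepsilon(n-N)/n\ge\varepsilon CS\ln(CS)/n$ for every $n\ge n_0$, whatever the relative size of $R$ and $N$, because both sides carry the same factor $1/n$. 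No Lambert-type inversion is needed.
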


\begin{proof}
Set $N\coloneqq \frac{m}{\varepsilon D_{k,\mathrm{min}}^2}.$
We show that there exist universal constants \(C>0\) and \(S_0>0\) such that for all \(S\ge S_0\), taking
\begin{equation}
n\ge N+CS\ln(CS)
\end{equation}
ensures \({^k}\mathcal L_{\mathrm{all}}^\ast\le \varepsilon\).
We choose \(S_0>0\) such that \(S^2> CS\ln(CS)\) for all \(S\ge S_0\), with \(C=16c^2\) and $c=12$.

(I) Since \(m\ge1\), \(0<\varepsilon\le1\), and \(D_{k,\mathrm{min}}\le1\), we have
$S=\frac{\sqrt m}{D_{k,\mathrm{min}}^2\sqrt\varepsilon}
\ge
\frac{1}{D_{k,\mathrm{min}}^2}.$
The loss bound of~Ref.~\cite[Theorem~S6.37]{QPA26S} applies since \(n\ge CS\ln(CS)\ge C D_{k,\mathrm{min}}^{-2}\ln(CD_{k,\mathrm{min}}^{-2})\). We remark here that since \(CD_{k,\mathrm{min}}^{-2} >1\), we have in particular that \(n>e\).

(II) Using \(1 \ge |D_{k,i}|\ge D_{k,\mathrm{min}}\ge 0\), \(\sum_{i\neq k}p_i\le1\), we have $4\left(1+\frac{6}{D_{k,\mathrm{min}}^3}\right)
\le
\frac{28}{D_{k,\mathrm{min}}^3}$, and the worst-case risk upper bound in Ref.~\cite[Theorem~S6.37]{QPA26S} gives
\begin{equation}
\label{eq:simplified_loss_bound_for_inversion}
{^k}\mathcal L_{(\cdot),\mathrm{all}}^\ast
\le
\frac{m}{nD_{k,\mathrm{min}}^2}
+
\frac{28m\sqrt{\ln n}}{n^{3/2}D_{k,\mathrm{min}}^3}.
\end{equation}
Let \(n_0\coloneqq N+CS\ln(CS)\).
We show that the second term in~\cref{eq:simplified_loss_bound_for_inversion} satisfies
\begin{align}
\frac{28m\sqrt{\ln n}}{n^{3/2}D_{k,\mathrm{min}}^3}
\le
\frac{1}{n}
\frac{28m}{D_{k,\mathrm{min}}^3}
\sqrt{\frac{\ln n_0}{n_0}}
\le
\frac{1}{n}
\frac{28m}{D_{k,\mathrm{min}}^3}
\frac{3\ln S}{\sqrt N}
=
\frac{84\varepsilon S\ln S}{n}
\le
\frac{\varepsilon CS\ln(CS)}{n}.
\label{eq:explicit_inversion_tail_bound}
\end{align}
Here, in the first step, since \(n\ge n_0\), \(\sqrt{\ln n/n}\) is decreasing for \(n>e\).
In the second step, from our choice of $S$, we have \(CS\ln(CS)\le S^2\) and \(N=m/(\varepsilon D_{k,\mathrm{min}}^2)\le m/(\varepsilon D_{k,\mathrm{min}}^4)=S^2\). We remark here that \(S_0>2\), since \(C\ln C>2\). Hence \(n_0\le 2S^2\le S^3\), and therefore $\ln n_0\le \ln(S^3) = 3 \ln S$.
Moreover since \(n_0\ge N\), we also have \(\sqrt{n_0}\ge \sqrt N\).
In the third step, we use the definition of $N$ and $S$. Finally, the last inequality uses \(C\ge84\) and \(\ln(CS)\ge\ln S\).

On the other hand, since \(n\ge N+CS\ln(CS)\),
\begin{equation}
\label{eq:explicit_inversion_leading_gain}
\begin{aligned}
\varepsilon-\frac{m}{nD_{k,\mathrm{min}}^2}
=
\frac{m}{D_{k,\mathrm{min}}^2}
\left(
\frac{1}{N}-\frac{1}{n}
\right)
=
\frac{m}{D_{k,\mathrm{min}}^2}
\frac{n-N}{Nn}
\ge
\frac{m}{D_{k,\mathrm{min}}^2}
\frac{CS\ln(CS)}{Nn}
=
\frac{\varepsilon CS\ln(CS)}{n}.
\end{aligned}
\end{equation}
Therefore, combining~\cref{eq:explicit_inversion_tail_bound} and~\cref{eq:explicit_inversion_leading_gain} and substituting into~\cref{eq:simplified_loss_bound_for_inversion} gives
\begin{equation}
\label{eq:explicit_inversion_final_loss}
{^k}\mathcal L_{(\cdot),\mathrm{all}}^\ast
\le
\frac{m}{nD_{k,\mathrm{min}}^2}
+
\left(
\varepsilon-\frac{m}{nD_{k,\mathrm{min}}^2}
\right)
=
\varepsilon.
\end{equation}
\end{proof}

\begin{lemma}[Relative-gap bound]
\label{lem:relative_gap_averaging_bound}
Let \(J=\{1\}\) if \(k=1\), \(J=\{d-1\}\) if \(k=d\), and \(J=\{k-1,k\}\) otherwise, and assume that the sector-wise loss satisfies \(1-f^{\yd{\varsigma}}\leq\sum_{j\in J}\frac{c_j}{n\overline{\Delta}_{j,j+1}}\).
Then for every \(0<\beta<1\) with \(\beta D_{k,\mathrm{min}}>4/\sqrt n\), the overall loss satisfies
\begin{equation}
\label{eq:relative_gap_loss_bound}
    {^k}\mathcal L_{(\cdot),\mathrm{all}}^\ast\leq
    \frac{1}{n(1-\beta)}
    \sum_{j\in J}\frac{c_j}{D_{j,j+1}}
    +
    2|J|e^{-\frac{(\beta\sqrt nD_{k,\mathrm{min}}-4)^2}{32}}.
\end{equation}
\end{lemma}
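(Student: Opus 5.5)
The plan is to split the average over symmetry sectors $\yd{\varsigma}$ into typical and atypical sectors according to how well the normalized row gaps track the spectral gaps. After Schur sampling, the optimal protocol's all-site loss is the average of sector-wise losses $1-f^{\yd{\varsigma}}$ weighted by the Schur--Weyl probabilities $\Pr[\yd{\varsigma}]$. By~\cref{lem:worstcase_bayes_single_orbit}, this average is the same for worst-case and average risk on the orbit $\mathcal{X}_{\boldsymbol p}$. I read $\overline{\Delta}_{j,j+1}=\Delta_{j,j+1}/n=(\varsigma_j-\varsigma_{j+1})/n$ as the empirical (normalized) gap. For each $j\in J$, define the good event
\begin{equation}
G_j=\left\{\overline{\Delta}_{j,j+1}\ge(1-\beta)D_{j,j+1}\right\}.
\end{equation}
Since $k\in\{j,j+1\}$ for $j\in J$, we have $D_{j,j+1}=|D_{k,k\pm1}|\ge D_{k,\mathrm{min}}$. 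On $G=\bigcap_{j\in J}G_j$, the hypothesis gives
\begin{equation}
1-f^{\yd{\varsigma}}\le\sum_{j\in J}\frac{c_j}{n(1-\beta)D_{j,j+1}}.
\end{equation}
Averaging this over sectors, with probability at most $1$, yields the first term of~\cref{eq:relative_gap_loss_bound}. On the complement, I bound the infidelity by $1$. A union bound then leaves $\sum_{j\in J}\Pr[G_j^c]$ to be controlled.

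The main step is to show $\Pr[G_j^c]\le 2e^{-(\beta\sqrt n D_{k,\mathrm{min}}-4)^2/32}$. Write $\lambda_i=\varsigma_i/n$; the event $G_j^c$ says $(\lambda_j-\lambda_{j+1})-(p_j-p_{j+1})<-\beta D_{j,j+1}$. This forces at least one of two one-sided deviations of size $\beta D_{j,j+1}/2$: either $\sum_{i\le j}\lambda_i$ falls below $\sum_{i\le j}p_i$ by that much, or a neighbouring partial sum exceeds its mean. Concretely, I use $\lambda_j-\lambda_{j+1}\ge$ differences of partial sums $S_j=\sum_{i\le j}\lambda_i$, via $\lambda_j\ge S_j-S_{j-1}$, together with $\lambda$ being majorization-sorted. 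I will control each such deviation with a concentration bound for Keyl/Schur sampling. The tool is the standard comparison that the Young-diagram partial sums $\sum_{i\le j}\varsigma_i$ stochastically dominate the sorted-type partial sums (RSK/Greene's theorem). Combined with the upper bound that the Young-diagram partial sum exceeds its mean by $t$ only with sub-Gaussian probability, up to an $O(1/\sqrt n)$ shift (O'Donnell--Wright), this gives tails of the form $\exp(-(\sqrt n\,t-c)^2/8)$.

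With $t=\beta D_{j,j+1}/2\ge\beta D_{k,\mathrm{min}}/2$ and the shift absorbed into the constant $4$, each tail becomes $e^{-(\beta\sqrt n D_{k,\mathrm{min}}-4)^2/32}$. The assumption $\beta D_{k,\mathrm{min}}>4/\sqrt n$ ensures the exponent argument is positive, so the Hoeffding-type bound is meaningful and monotone. Two tails, one for each side, give the factor $2$, and summing over $j\in J$ gives the factor $|J|$.

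I expect the main obstacle to be this concentration step. Making the constants $4$ and $32$ come out requires a careful reduction from the gap event to the partial-sum events, with a uniform $O(1/\sqrt n)$ bias correction for the Schur–Weyl distribution relative to i.i.d. types. I would first try the majorization sandwich: the sorted type of $n$ i.i.d. samples lies below the Young diagram in dominance order, and the Young diagram lies below the type plus a bias of order $\sqrt{d}$ or $1/\sqrt n$ per partial sum. Then Hoeffding applied to each partial sum $\sum_{i\le j}\mathbf{1}[X=i]$ gives $\exp(-2n s^2)$. Choosing $s=t/2-2/\sqrt n$ should reproduce the stated exponent, and if the constants do not match exactly, the slack between $\exp(-2ns^2)$ and the looser $1/32$ should absorb them.
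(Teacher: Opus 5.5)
Your reduction is the same as the paper's proof. You restrict to a good event on the empirical gaps; your one-sided $G_j$ is a harmless weakening of the paper's two-sided event $|\overline{\Delta}_{j,j+1}-D_{j,j+1}|\le\beta D_{k,\mathrm{min}}$. You use $D_{j,j+1}\ge D_{k,\mathrm{min}}$ to get $\overline{\Delta}_{j,j+1}\ge(1-\beta)D_{j,j+1}$, bound the infidelity by $1$ elsewhere, and union-bound over $J$. The paper then finishes in one line by importing the row-difference tail bound from the companion paper. For each $j\in J$ that bound gives $\Pr[|\overline{\Delta}_{j,j+1}-D_{j,j+1}|\ge\beta D_{k,\mathrm{min}}]\le 2e^{-(\beta\sqrt n D_{k,\mathrm{min}}-4)^2/32}$, with a shift $4$ independent of $j$, $k$ and $d$. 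Where you depart from the paper, by deriving this tail bound yourself from partial sums, the argument does not go through.

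\textbf{Three events, not two.} Since $\lambda_j-\lambda_{j+1}=2S_j-S_{j-1}-S_{j+1}$ is a second difference, $G_j^c$ forces one of three partial-sum deviations: a lower tail of $S_j$, and upper tails of $S_{j-1}$ and $S_{j+1}$.

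\textbf{Only the lower tails are free.} By Greene's theorem, $S_j\ge\frac1n\#\{t:w_t\le j\}$ holds pointwise for the RSK word $w\sim\boldsymbol p^{\otimes n}$, so Hoeffding applies with no bias. Your ``majorization sandwich'' has no pointwise upper half. The upper tails of $S_{j\pm1}$ need a bound on $\mathbb E[S_{j\pm1}]-P_{j\pm1}$, where $P_j=\sum_{i\le j}p_i$. The O'Donnell--Wright estimate you invoke has excess of order $j/\sqrt n$, not $O(1/\sqrt n)$; your alternative ``$\sqrt d$'' bias is dimension dependent outright.

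\textbf{The growth is intrinsic, not a loose constant.} $S_j-P_j$ accumulates the drifts of $j$ rows. Suppose many levels sit on each side of the target gap at distance of order $1/\sqrt n$, which the hypothesis $\beta D_{k,\mathrm{min}}>4/\sqrt n$ permits. Then level repulsion across the gap pushes all rows on one side in the same direction. In that case $S_{k\pm1}-P_{k\pm1}$ can exceed $\beta D_{k,\mathrm{min}}$ with probability close to $1$. Meanwhile the individual rows, and hence the gap, drift only by $O(1/\sqrt n)$.

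\textbf{Consequence.} Your route yields at best a shift of order $k$ (up to $d$) in place of $4$. No slack in the constant $1/32$ can absorb that. It would also destroy the dimension-free constants $98$ and $135$ in \cref{cor:clean_one_gap_sample_complexity_k1,cor:clean_dmin_sample_complexity}.

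\textbf{What is missing.} You need a row-level statement. Center at the means rather than at $P_j$. By Greene's theorem, one letter moves each $nS_i$ by at most $1$, so McDiarmid controls $n(\lambda_j-\lambda_{j+1})$ with bounded differences $4$. What remains is a bias bound $|\mathbb E[\lambda_j-\lambda_{j+1}]-D_{j,j+1}|=O(1/\sqrt n)$, uniform in $j$, $k$ and $d$. That is exactly the content of the row-difference tail bound the paper cites. Either prove it at the row level or invoke it as the paper does.
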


\begin{proof}
Let \(\mathsf T_\beta\coloneqq \{\left|\overline{\Delta}_{j,j+1}-D_{j,j+1}\right|\leq\beta D_{k,\mathrm{min}}\text{ for all }j\in J\}\).
On this event, \(\overline{\Delta}_{j,j+1}\geq(1-\beta)D_{j,j+1}\), and hence the assumed sector-wise bound gives
\(1-f^{\yd{\varsigma}}\leq n^{-1}(1-\beta)^{-1}\sum_{j\in J}c_j/D_{j,j+1}\).
Averaging, using the trivial bound on \(\mathsf T_\beta^c\), and applying the row-difference tail bound with a union bound gives the claimed estimate.
\end{proof}

\begin{corollary}[One-gap sample complexity upper bound]
\label{cor:clean_one_gap_sample_complexity_k1}
Assume \(k=1\), \(m\geq1\), and \(0<\varepsilon\leq1\). If the all-site sector-wise bound in~Ref.~\cite[Corollary~S6.33]{QPA26S} applies, then $n\geq \frac{98m}{\varepsilon D_{1,2}^2}$
suffices to ensure \({^k}\mathcal L_{(\cdot),\mathrm{all}}^\ast\leq\varepsilon\). Equivalently,
$n^\ast(\varepsilon)
\leq
\frac{98m}{\varepsilon D_{1,2}^2}.$
\end{corollary}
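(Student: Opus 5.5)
The plan is to combine the relative-gap averaging bound of \cref{lem:relative_gap_averaging_bound} with the all-site sector-wise bound of Ref.~\cite[Corollary~S6.33]{QPA26S}, specialized to \(k=1\). For \(k=1\) we have \(J=\{1\}\) and \(D_{1,\mathrm{min}}=D_{1,2}\). I expect the cited corollary to give a sector-wise all-site loss of the form \(1-f^{\yd{\varsigma}}\le c_1/(n\overline{\Delta}_{1,2})\) with \(c_1\) proportional to \(m\), say \(c_1\le m\) up to a small absolute constant. \cref{lem:relative_gap_averaging_bound} then yields, for any \(0<\beta<1\) with \(\beta D_{1,2}>4/\sqrt n\),
\begin{equation*}
{^1}\mathcal L_{\mathrm{all}}^\ast\le \frac{c_1}{n(1-\beta)D_{1,2}}+2e^{-(\beta\sqrt n D_{1,2}-4)^2/32}.
\end{equation*}
Since \(D_{1,2}\le1\), the first term is at most \(c_1/(n(1-\beta)D_{1,2}^2)\). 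This weakening lets both terms be expressed through the single variable \(x\coloneqq nD_{1,2}^2\).

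Next, fix a concrete \(\beta\), e.g. \(\beta=1/2\), and set \(n=\kappa m/(\varepsilon D_{1,2}^2)\) with \(\kappa=98\). The first term is then at most \(2c_1\varepsilon/(\kappa m)\), which is a small fraction of \(\varepsilon\), e.g. \(\le\varepsilon/2\). It remains to show \(2e^{-(\sqrt x/2-4)^2/32}\le\varepsilon/2\) for \(x\ge 98m/\varepsilon\). The admissibility condition \(\beta D_{1,2}>4/\sqrt n\) becomes \(\sqrt x>8\), i.e. \(x>64\), which holds because \(x\ge98\).

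The main obstacle is the tail term. It must be dominated by \(\varepsilon\) uniformly for all \(0<\varepsilon\le1\), but \(x\) grows only like \(1/\varepsilon\), so I need an inequality of the form \(4e^{-(\sqrt x/2-4)^2/32}\le 98/x\) for all \(x\ge98\). Using \(x\ge98m/\varepsilon\ge98/\varepsilon\), this gives \(\le\varepsilon\).

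To prove this inequality, write \(y=\sqrt x\) and compare \(\ln(4y^2/98)\) with \((y/2-4)^2/32\). The first check is at the endpoint \(y=\sqrt{98}\approx9.9\). There the exponent is \((4.95-4)^2/32\approx0.03\), so the tail term is \(\approx3.9\) while \(98/x=1\). So \(\beta=1/2\) fails near the threshold. I would instead tune \(\beta\) closer to \(1\), or use the trivial bound \({^1}\mathcal L^\ast\le1\) in the small-\(x\) regime, and choose the split so that the constant \(98\) comes out.

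Concretely, I would optimize \(\beta\) as a function of \(x\), using the first term \(\frac{c_1\varepsilon}{98m(1-\beta)}\) and the admissible \(\beta<1\). The target is that both terms sum to at most \(\varepsilon\) throughout \(x\ge98m/\varepsilon\); I would verify this by a monotonicity argument on the resulting one-variable function. If the exact constant \(98\) does not come out this way, I expect it to be obtained by splitting the \(\varepsilon\)-range and checking the worst case numerically.
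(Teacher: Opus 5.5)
Your architecture is the paper's. You specialize \cref{lem:relative_gap_averaging_bound} to $J=\{1\}$, insert the $k=1$ sector-wise bound, reduce to the single variable $nD_{1,2}^2$, and push $\beta$ toward $1$. The gap is that you stop exactly where the content of the corollary lies: you never fix $\beta$, and you never verify that the two terms sum to at most $\varepsilon$. That check is far from routine.

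The paper takes $\beta=15/16$, so the first term is $16m/(nD_{1,2}^2)\le 8\varepsilon/49$. The tail $2e^{-(\frac{15}{16}\sqrt n D_{1,2}-4)^2/32}$ is decreasing in $\sqrt n D_{1,2}$, so it is at most its value at $nD_{1,2}^2=98/\varepsilon$. Moreover, $\varepsilon^{-1}$ times that value is increasing on $(0,1]$: with $u=\varepsilon^{-1/2}$ and $a=\frac{15}{16}\sqrt{98}$, the log-derivative $2/u-a(au-4)/16$ is negative for $u\ge1$. Hence the tail is at most $\varepsilon$ times its value at $\varepsilon=1$, about $0.83669<41/49\approx0.83673$.

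The total slack is therefore about $5\times10^{-5}$. Even with $\beta$ optimized (near $0.94$), the two terms at $m=1$, $\varepsilon=1$ sum to about $0.9999$, and with $97$ in place of $98$ no $\beta$ works. So $98$ is essentially the threshold of this method. Your fallbacks cannot absorb any looseness, because the binding case is $m=1$, $\varepsilon\to1^-$, $nD_{1,2}^2\to98^+$. There the trivial bound $\mathcal L^\ast\le1$ is useless for $\varepsilon<1$, and splitting the $\varepsilon$-range does not move this endpoint.

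The same thinness makes your treatment of $c_1$ a genuine error rather than a cosmetic one. The cited bound has $c_1=m\sum_{i\ge2}p_i/D_{1,i}$, which is not $O(m)$. For $d=2$ it equals $mp_2/D_{1,2}$, which is unbounded as the gap closes. So your premise $c_1\lesssim m$ is false, and with the true $c_1$ your step via $D_{1,2}\le1$ would leave an extra factor $1/D_{1,2}$.

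The correct step uses $D_{1,i}\ge D_{1,2}$ and $\sum_{i\ge2}p_i\le1$ to get $c_1/D_{1,2}\le m/D_{1,2}^2$, with constant exactly $1$. Any absolute constant above $1$ there, which your phrase ``up to a small absolute constant'' permits, would already destroy the constant $98$.
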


\begin{proof}
For \(k=1\), Ref.~\cite[Corollary~S6.33]{QPA26S} implies \(1-f^{\yd{\varsigma}}\leq c_1/(n\overline{\Delta}_{1,2})\), where \(c_1=m\sum_{i=2}^d p_i/D_{1,i}\). Since \(D_{1,i}\geq D_{1,2}\) and \(\sum_{i=2}^d p_i\leq1\), we have \(c_1/D_{1,2}\leq m/D_{1,2}^2\). Applying~\cref{lem:relative_gap_averaging_bound} with \(J=\{1\}\), \(D_{k,\mathrm{min}}=D_{1,2}\), and \(\beta=15/16\), we obtain
\begin{equation}
    {^k}\mathcal L_{(\cdot),\mathrm{all}}^\ast\leq
    \frac{16m}{nD_{1,2}^2}
    +
    2e^{-\frac{\mleft(\frac{15}{16}\sqrt nD_{1,2}-4\mright)^2}{32}}.
\end{equation}
If \(n\geq98m/(\varepsilon D_{1,2}^2)\), the first term is at most \((8/49)\varepsilon\).
For the exponential term, the quantity
\(2\varepsilon^{-1}e^{-\mleft(\frac{15}{16}\sqrt{98/\varepsilon}-4\mright)^2/32}\)
is increasing on \(0<\varepsilon\leq1\), and at \(\varepsilon=1\) it is smaller than \(41/49\).
Thus the exponential term is at most \((41/49)\varepsilon\), and \({^k}\mathcal L_{(\cdot),\mathrm{all}}^\ast\leq\varepsilon\).
\end{proof}

\begin{corollary}[Adjacent-gap sample complexity upper bound]
\label{cor:clean_dmin_sample_complexity}
Let
\begin{equation}
J=
\left\{
\begin{aligned}
&\{1\}, && k=1,\\
&\{k-1,k\}, && 1<k<d,\\
&\{d-1\}, && k=d,
\end{aligned}
\right.
\qquad
D_{k,\mathrm{min}}\coloneqq \min_{j\in J}D_{j,j+1}.
\end{equation}
Assume \(m\geq1\) and \(0<\varepsilon\leq1\). Let \(f^{\yd{\varsigma}}\) be bounded below by the all-site sector-wise bound in~Ref.~\cite[Corollary~S6.33]{QPA26S}.
If \(k\in\{1,d\}\), then \(n\geq98m/(\varepsilon D_{k,\mathrm{min}}^2)\) suffices to ensure \({^k}\mathcal L_{\mathrm{(\cdot),all}}^\ast\leq\varepsilon\). If \(1<k<d\), then \(n\geq135m/(\varepsilon D_{k,\mathrm{min}}^2)\) suffices. Equivalently,
\begin{equation}
n^\ast
\leq
\left\{
\begin{aligned}
&\dfrac{98m}{\varepsilon D_{k,\mathrm{min}}^2}, && k\in\{1,d\},\\[0.8em]
&\dfrac{135m}{\varepsilon D_{k,\mathrm{min}}^2}, && 1<k<d.
\end{aligned}
\right.
\end{equation}
\end{corollary}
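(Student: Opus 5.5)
The plan is to mirror the proof of \cref{cor:clean_one_gap_sample_complexity_k1}: feed the sector-wise bound of Ref.~\cite[Corollary~S6.33]{QPA26S} into \cref{lem:relative_gap_averaging_bound}, then split the resulting risk bound into a leading $1/n$ term and an exponential tail, and choose constants so that each is a fixed fraction of $\varepsilon$.

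\textbf{Case $k\in\{1,d\}$.} Here $J$ is a singleton. The case $k=1$ is exactly \cref{cor:clean_one_gap_sample_complexity_k1}. For $k=d$, Ref.~\cite[Corollary~S6.33]{QPA26S} should give $1-f^{\yd{\varsigma}}\le c_{d-1}/(n\overline{\Delta}_{d-1,d})$ with $c_{d-1}=m\sum_{i\neq d}p_i/|D_{d,i}|$. Since $|D_{d,i}|\ge D_{d-1,d}$ and $\sum_i p_i\le 1$, we get $c_{d-1}/D_{d-1,d}\le m/D_{k,\mathrm{min}}^2$. With $\beta=15/16$, \cref{lem:relative_gap_averaging_bound} then gives the same bound as in the $k=1$ proof, so the same numerics apply verbatim: the leading term is at most $(8/49)\varepsilon$, and the tail is at most $(41/49)\varepsilon$ by monotonicity in $\varepsilon$. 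The hypothesis $\beta D_{k,\mathrm{min}}>4/\sqrt n$ holds because $n\ge 98/D_{k,\mathrm{min}}^2$ gives $\tfrac{15}{16}\sqrt{98}>4$.

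\textbf{Case $1<k<d$.} Now $J=\{k-1,k\}$. I expect the sector-wise bound to have $c_{k-1}+c_k$ controlled by $m\sum_{i\neq k}p_i/|D_{k,i}|$, split into the parts below and above $k$. Using $D_{j,j+1}\ge D_{k,\mathrm{min}}$ for $j\in J$ and $|D_{k,i}|\ge D_{k,\mathrm{min}}$ yields
\[
\sum_{j\in J}\frac{c_j}{D_{j,j+1}}\le \frac{m}{D_{k,\mathrm{min}}^2}.
\]
If instead only $c_j/D_{j,j+1}\le m/D_{k,\mathrm{min}}^2$ holds individually, the bound doubles, and that doubling is presumably what the larger constant absorbs. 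Applying \cref{lem:relative_gap_averaging_bound} with $|J|=2$ and some $\beta$ (I would first try $\beta=15/16$ again) gives
\[
{^k}\mathcal L^\ast_{(\cdot),\mathrm{all}}\le \frac{16\,\kappa m}{nD_{k,\mathrm{min}}^2}+4e^{-(\frac{15}{16}\sqrt n D_{k,\mathrm{min}}-4)^2/32},
\]
with $\kappa\in\{1,2\}$. Substituting $n\ge 135m/(\varepsilon D_{k,\mathrm{min}}^2)$ bounds the first term by $16\kappa\varepsilon/135$. For the second term, $m\ge1$ and $D_{k,\mathrm{min}}\le 1$ give $\sqrt n D_{k,\mathrm{min}}\ge\sqrt{135/\varepsilon}$. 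The function $4\varepsilon^{-1}e^{-(\frac{15}{16}\sqrt{135/\varepsilon}-4)^2/32}$ is increasing on $(0,1]$, because the Gaussian factor decays faster than $\varepsilon^{-1}$ grows, so it suffices to check $\varepsilon=1$. There it must be at most $1-16\kappa/135$, i.e.\ below $119/135$ when $\kappa=1$.

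\textbf{Main obstacle.} The hard step is this final numerical verification: confirming that $135$ is large enough, and $\beta$ chosen correctly, so that the sum of the two terms is at most $\varepsilon$. If $\beta=15/16$ fails, I would optimize $\beta$. I would also pin down the precise form of $c_{k-1},c_k$ from Ref.~\cite[Corollary~S6.33]{QPA26S}, since the factor $\kappa$ determines the constant.
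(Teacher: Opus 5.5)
Your approach is the paper's: feed the sector-wise bound into \cref{lem:relative_gap_averaging_bound}, then split the result into a leading $1/n$ term and an exponential tail, each a fixed fraction of $\varepsilon$. Your case $k\in\{1,d\}$ matches the paper exactly ($\beta=15/16$, split $8/49+41/49$). The gap is in the case $1<k<d$: the one concrete choice you commit to does not close.

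With $\kappa=1$ and $\beta=15/16$, the tail at $\varepsilon=1$ is $4e^{-(\frac{15}{16}\sqrt{135}-4)^2/32}\approx 0.906$. This exceeds the available budget $1-16/135\approx 0.881$. By continuity, the resulting bound then exceeds $\varepsilon$ on an interval just below $1$; for $m=1$ and $\varepsilon=0.99$ it is about $1.01\,\varepsilon$. So this $\beta$ does not prove the statement.

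The paper instead takes $\beta=23/24$. The leading term becomes $24\varepsilon/135=8\varepsilon/45$. The tail at $\varepsilon=1$ is $4e^{-(\frac{23}{24}\sqrt{135}-4)^2/32}\approx 0.815<37/45\approx 0.822$. Hence the total is at most about $0.993\,\varepsilon$. The lemma's hypothesis holds because $\frac{23}{24}\sqrt{n}\,D_{k,\mathrm{min}}\ge\frac{23}{24}\sqrt{135}>4$. The margin is under one percent, so the choice of $\beta$ is exactly the step you deferred, not a cosmetic one.

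Your fallback $\kappa=2$ cannot rescue the constant $135$ at all. One has $\min_{\beta\in(0,1)}\bigl[\frac{2}{135(1-\beta)}+4e^{-(\beta\sqrt{135}-4)^2/32}\bigr]\approx 1.14$, attained near $\beta\approx 0.94$. So no $\beta$ works for $\varepsilon$ near $1$.

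The argument therefore needs $\sum_{j\in J}c_j/D_{j,j+1}\le m/D_{k,\mathrm{min}}^2$, i.e.\ $\kappa=1$. This is what the paper uses. It is also what your guessed split of $m\sum_{i\neq k}p_i/|D_{k,i}|$ into the parts below and above $k$ gives, since those two partial sums of the $p_i$ add up to at most $1$.

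Accordingly, your explanation of why $135>98$ is off. The increase comes from the doubled tail prefactor $2|J|=4$ in the lemma. That prefactor forces $\beta$ closer to $1$, and hence a larger leading factor $1/(1-\beta)=24$. It does not come from a doubled leading coefficient.
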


\begin{proof}
For \(k=1\), and similarly for \(k=d\) after reversing the row order, applying~\cref{lem:relative_gap_averaging_bound} with \(J=\{1\}\), \(D_{k,\mathrm{min}}=D_{1,2}\), and \(\beta=15/16\) gives
\begin{equation}
{^k}\mathcal L_{(\cdot),\mathrm{all}}^\ast\leq
\frac{16m}{nD_{k,\mathrm{min}}^2}
+
2e^{-\frac{\mleft(\frac{15}{16}\sqrt nD_{k,\mathrm{min}}-4\mright)^2}{32}},
\end{equation}
where we used \(c_1/D_{1,2}\leq m/D_{1,2}^2=m/D_{k,\mathrm{min}}^2\). If \(n\geq98m/(\varepsilon D_{k,\mathrm{min}}^2)\), the first term is at most \(8\varepsilon/49\). For the exponential term, the quantity $2\varepsilon^{-1}e^{-\mleft(\frac{15}{16}\sqrt{98/\varepsilon}-4\mright)^2/32}$
is increasing on \(0<\varepsilon\leq1\), and at \(\varepsilon=1\) it is smaller than \(41/49\). Hence the exponential term is at most \(41\varepsilon/49\), and \({^k}\mathcal L_{(\cdot),\mathrm{all}}^\ast\leq\varepsilon\).

For \(1<k<d\), applying~\cref{lem:relative_gap_averaging_bound} with \(J=\{k-1,k\}\) and \(\beta=23/24\) gives
\begin{equation}
{^k}\mathcal L_{(\cdot),\mathrm{all}}^\ast\leq
\frac{24m}{nD_{k,\mathrm{min}}^2}
+
4e^{-\frac{\mleft(\frac{23}{24}\sqrt nD_{k,\mathrm{min}}-4\mright)^2}{32}},
\end{equation}
using \(\sum_{j\in J}c_j/D_{j,j+1}\leq m/D_{k,\mathrm{min}}^2\). If \(n\geq135m/(\varepsilon D_{k,\mathrm{min}}^2)\), the first term is at most \(8\varepsilon/45\). The quantity $4\varepsilon^{-1}e^{-\mleft(\frac{23}{24}\sqrt{135/\varepsilon}-4\mright)^2/32}$
is increasing on \(0<\varepsilon\leq1\), and at \(\varepsilon=1\) it is smaller than \(37/45\). Hence the exponential term is at most \(37\varepsilon/45\), and again \({^k}\mathcal L_{(\cdot),\mathrm{all}}^\ast\leq\varepsilon\).
\end{proof}

\subsection{Eigenstate Tomography}
\label{subsec:eigenstate_tomography_supp}

The natural incoherent counterpart of QPA is a measurement-mediated protocol that first performs $k$-th eigenstate tomography and then prepares the estimated eigenstate.
To make this explicit, we recall the $k$-th eigenstate-tomography task.
Given \(n\) copies of \(\rho\) on \(\mathbb{C}^d\), the goal is to estimate the target eigenstate \(\ket{\psi_k}\). More precisely, we have a measurement \(\int_{X}M_{x} \, \mathrm{d}x = I_{d}^{\otimes n}\) with corresponding estimators \(\widehat{\sigma}_x\in\mathcal{B}(\mathbb{C}^d)\), and some loss function \(L(\cdot,\cdot)\), usually the infidelity or trace distance. Set
\begin{equation}
    {^k}\gamma(\rho)\coloneqq\ketbra{\psi_{k}}{\psi_{k}} \, ,
\end{equation}
where \(\ket{\psi_k}\) is the \(k\)-th eigenstate of \(\rho\). For some set of states \(\mathcal{X}\), the worst-case risk associated to the measurement \(\{M_x\}_{x\in X}\) is now given as
\begin{equation}
    {^k}\widehat{\mathcal{L}}_{\mathrm{w}}(\{M_x\}_{x\in X}) \coloneqq \min_{\rho\in\mathcal{X}}\int_{X}\Tr[M_{x}\rho^{\otimes n}]L({^k}\gamma(\rho),\widehat{\sigma}_x) \, \mathrm{d}x \, .
\end{equation}
If there's a probability measure \(\pi\) over \(\mathcal{X}\), we can define the average risk in a similar way
\begin{equation}
    {^k}\widehat{\mathcal{L}}_{\mathrm{a}}(\{M_x\}_{x\in X}) \coloneqq \int_{\mathcal{X}}\int_{X}\Tr[M_{x}\rho^{\otimes n}]L({^k}\gamma(\rho),\widehat{\sigma}_x) \, \mathrm{d}x \, \mathrm{d}\pi(\rho) \, .
\end{equation}
One question is whether all estimators \(\widehat{\sigma}_x\) have to be physical, i.e. PSD matrices with trace \(1\), or whether they can be any operators. As it turns out, for metric losses, the difference is at most a constant factor.

\begin{lemma}[Physical and unconstrained metric tomography]
\label{lem:physical_unconstrained_norm_tomography}
Assume that the loss is induced by a metric \(d\), namely \(L\mleft(\cdot,\cdot\mright)=d(\cdot,\cdot)\).
Let ${^k}\widehat{\mathcal{L}}_{(\cdot),\mathrm{phys}}^{\ast}$ denote the minimal risk over physical estimators, and let ${^k}\widehat{\mathcal{L}}_{(\cdot)}^{\ast}$ denote the minimal risk over unconstrained (arbitrary matrix-valued) estimators. Then
\begin{equation}
    {^k}\widehat{\mathcal{L}}_{(\cdot)}^{\ast} \leq {^k}\widehat{\mathcal{L}}_{(\cdot),\mathrm{phys}}^{\ast} \leq 2{^k}\widehat{\mathcal{L}}_{(\cdot)}^{\ast} \, .
\end{equation}
\end{lemma}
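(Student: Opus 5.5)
The first inequality is immediate: physical estimators are a subset of unconstrained estimators, so minimizing over the larger class can only lower the risk. This holds for both the worst-case and the average risk.

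For the second inequality, I will take an arbitrary unconstrained estimator family \(\{\widehat\sigma_x\}\) paired with a measurement \(\{M_x\}\). I will replace each \(\widehat\sigma_x\) by a physical state \(\widehat\sigma_x^{\mathrm{phys}}\), defined as a (near-)closest point in the convex compact set \(\mathcal D\) of density operators under the metric \(d\):
\begin{equation}
d\bigl(\widehat\sigma_x,\widehat\sigma_x^{\mathrm{phys}}\bigr)\le \inf_{\tau\in\mathcal D} d(\widehat\sigma_x,\tau)+\eta .
\end{equation}
Here \(\eta>0\) is arbitrary and can be set to zero if the infimum is attained, which holds by compactness when \(d\) is continuous. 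The selection must be measurable in \(x\). For a norm-induced metric one can use a measurable selection theorem, or discretize, since \(\eta\) is arbitrary anyway.

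The key point is that the target \({^k}\gamma(\rho)=\ketbra{\psi_k}{\psi_k}\) is itself physical. For every \(x\) and every \(\rho\), the triangle inequality then gives
\begin{equation}
d\bigl({^k}\gamma(\rho),\widehat\sigma_x^{\mathrm{phys}}\bigr)\le d\bigl({^k}\gamma(\rho),\widehat\sigma_x\bigr)+d\bigl(\widehat\sigma_x,\widehat\sigma_x^{\mathrm{phys}}\bigr)\le 2\,d\bigl({^k}\gamma(\rho),\widehat\sigma_x\bigr)+\eta .
\end{equation}
The second step uses that \({^k}\gamma(\rho)\in\mathcal D\) is a competitor in the infimum. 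This is a pointwise bound, so I can integrate it against the probability weights \(\Tr[M_x\rho^{\otimes n}]\,\mathrm dx\). This gives the bound for every fixed \(\rho\).

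Next, I take the supremum over \(\rho\in\mathcal X\) for the worst-case risk, or integrate against \(\pi\) for the average risk. (The worst-case definition as printed uses \(\min_\rho\); the pointwise argument covers either extremum equally.) The result is that the physical estimator's risk is at most \(2\) times the unconstrained risk plus \(\eta\). Finally, I minimize over unconstrained protocols and send \(\eta\to0\), which yields \({^k}\widehat{\mathcal L}^\ast_{(\cdot),\mathrm{phys}}\le 2\,{^k}\widehat{\mathcal L}^\ast_{(\cdot)}\).

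The only genuinely delicate step is the measurable choice of the projection \(x\mapsto\widehat\sigma_x^{\mathrm{phys}}\). I would handle it with the \(\eta\)-slack: partition the outcome space into countably many measurable cells on which \(\widehat\sigma_x\) varies by less than \(\eta\). On each cell I pick a single near-projection. Continuity of \(d\) then keeps the bound intact up to \(O(\eta)\).
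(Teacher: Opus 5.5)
Your proposal is correct and matches the paper's proof: replace each unconstrained estimate by a nearest density operator under $d$, then use the triangle inequality with the physical target ${^k}\gamma(\rho)$ as a competitor in the infimum to obtain the pointwise factor of $2$. Your $\eta$-slack and measurable-selection remarks supply a technical detail the paper passes over (it simply asserts a minimizer exists by compactness). They do not change the argument.
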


\begin{proof}
Since every physical estimator is also an unconstrained estimator, one always has
${^k}\widehat{\mathcal{L}}_{(\cdot)}^{\ast} \leq {^k}\widehat{\mathcal{L}}_{(\cdot),\mathrm{phys}}^{\ast}$. For the converse direction, given an arbitrary matrix-valued estimator $\widehat{\varpi}_{x}$, define the physical estimator $\widehat{\sigma}(\widehat{\varpi}_{x})$ as the nearest-state map from matrices to states with respect to the metric. Such a minimizer exists by compactness and continuity, so the relevant expectation is well defined. Then, for every input state $\rho\in\mathcal{X}$, we have
\begin{equation}
\begin{aligned}
d(\widehat{\sigma}(\widehat{\varpi}_{x}),\psi_{k}(\rho)) \le d(\widehat{\sigma}(\widehat{\varpi}_{x}),\widehat{\varpi}_{x})+d(\widehat{\varpi}_{x},\psi_{k}(\rho)) \le 2d(\widehat{\varpi}_{x},\psi_{k}(\rho)).
\end{aligned}
\end{equation}
This gives the result.
\end{proof}

This result holds in particular for the trace distance, and it is also compatible with stronger tomography guarantees, such as high-probability estimators that are close under the relevant metric~\cite{HCET+24S}.

For infidelity loss, the minimal risk of $k$-th eigenstate tomography with physical estimators is actually equal to that of the corresponding incoherent protocol. Thus, the results obtained through the EB limit also apply directly to $k$-th eigenstate tomography. We have
\begin{equation}
    L(\rho,\sigma) = 1 - F(\rho,\sigma) \, .
\end{equation}
Since the target state is pure, we can write the fidelity in terms of the trace and get
\begin{equation}
    \int_{X}\Tr[M_{x}\rho^{\otimes n}]\left(1-\Tr[{^k}\gamma(\rho)\widehat{\sigma}_x]\right) \, \mathrm{d}x = 1 - \Tr\left[{^k}\gamma(\rho)\int_{X}\widehat{\sigma}_x\Tr[M_{x}\rho^{\otimes n}] \, \mathrm{d}x\right] = L({^k}\gamma(\rho),\widetilde{\mathcal{T}}_{M,\sigma}(\rho^{\otimes n})) \, ,
\end{equation}
where \(\widetilde{\mathcal{T}}_{M,\widehat{\sigma}}\) is the measurement-mediated channel given by the measurement \(\{M_x\}_{x\in X}\) and the prepared states \(\{\widehat{\sigma}_x\}_{x\in X}\). This means that the risks for tomography with physical estimators and EB channels are equivalent, and therefore
\begin{equation}
    {^k}\widehat{\mathcal{L}}_{(\cdot),\mathrm{phys}}^{\ast} = {^k}\widetilde{\mathcal{L}}_{(\cdot)}^{\ast} \, .
\end{equation}
We finally remark that our result also translates to other figures of merit for the $k$-th eigenstate tomography task, for instance high-probability guarantees~\cite{HCET+24S}.

\subsection{POVM of the EB Channel}
\label{subsec:eb_povm_supp}
We begin by considering the one-site marginal of the identity channel on the totally symmetric subspace \(\mathbb{W}^{\yt{1\cdots m}}\). For large \(m\), this marginal is asymptotically EB and admits a covariant measurement-mediated representation where \(\mathrm{d}\psi\) is Haar measure on pure states, as in Ref.~\cite{H13S}:
\begin{equation}
\begin{aligned}
\Tr_{2,\ldots,m}(\mathcal I^{\yt{1\cdots m}}(\cdot))
=&\int \mathrm{d}\psi\;\Tr\left(\multiset{d}{m}\,\psi^{\otimes m}(\cdot)\right)\,\psi
+\bigO{m^{-1}},
\end{aligned}
\end{equation}

For each sector, recall that the sector-wise QPA channel based on the overhang-removal rule can be written in the Stinespring form
\begin{equation}
\extChannel{\mathcal{T}}{\yd{\varsigma}}{\yt{1\cdots m}}{\yd{\mu}}(\cdot)
=
\Tr_{\yd{\mu}}\!\left(\intertwiner{W}{\yd{\varsigma}}{\yd{\mu}}{\yt{1\cdots m}}{}\,\cdot\,{\intertwineradj{W}{\yd{\varsigma}}{\yd{\mu}}{\yt{1\cdots m}}{}}\right).
\end{equation}
To study the limiting form of this channel as \(m\to\infty\), we evaluate the pulled-back POVM
\begin{equation}
  \label{eq:pulled_back_povm}
M_I={\extChannel{\mathcal{T}}{\yd{\varsigma}}{\yt{1\cdots m}}{\yd{\mu}}}^\star\left(\multiset{d}{m}\psi^{\otimes m}\right)=\multiset{d}{m}\intertwineradj{W}{\yd{\varsigma}}{\yd{\mu}}{\yt{1\cdots m}}{}I_\yd{\mu}\otimes\psi^{\otimes{m}}{\intertwiner{W}{\yd{\varsigma}}{\yd{\mu}}{\yt{1\cdots m}}{}}.
\end{equation}
and show that it has a well-defined limit. This yields convergence of the marginal channel and identifies the limiting map as EB, with an explicit POVM and corresponding re-preparation operators.

Since the measurement is unitary invariant, it suffices to understand $M_I$ for $\psi = \ketbra{d}{d}$. Consider the action of $U\in U(d-1)$, embedded in $U(d)$ via the standard inclusion $U(d-1)\trianglelefteq U(d)$ and defined on the fundamental representation by the block-diagonal matrix $U\oplus [1]$. The pulled-back POVM~\cref{eq:pulled_back_povm} is invariant under this action. By Schur's lemma, it therefore decomposes as $M_I=\sum_{\yd{\alpha}\,\vdash\,\left.\yd{\varsigma}\right|_{d-1}^{d}} c^{\yd{\alpha}}\,I^{\yd{\alpha}},$
where \(\yd{\alpha}\) runs over all Young diagrams (YDs) obtained from Young tableaux (YTs) of \(\yd{\varsigma}\) by removing all boxes labeled d.
To evaluate the constants $c^{\yd{\alpha}}$, fix $\yd{\alpha}\vdash\left.\yd{\varsigma}\right|_{d-1}^{d}$ and choose a Gel'fand--Tsetlin (GT) basis vector $\wt{w}$ whose first row is $\yd{\varsigma}$ and second row is $\yd{\alpha}$. Then $\langle M_I\rangle_{\ket{\wt{w}}}=c^{\yd{\alpha}}$, which by the Clebsch--Gordan coefficient (CGC) fidelity component formula evaluates to
\begin{equation}
\label{eq:constant_eval}
c^{\yd{\alpha}}
=
\res{f}{\yd{\mu}}{\yd{m}}{\yd{\varsigma}}_{\mathrm{all}}(\wt{w})
=
\sum_{\wt{v}\vdash\yd{\mu}}
\mathrm{CGC}^2\!\left(\wt{v},\,\wt{d,\ldots,d}\mid\wt{w}\right).
\end{equation}

To evaluate the constant explicitly, we consider the case where $\yd{\mu}$ is defined by overhang-removal for the $k$-th target eigenstate and has terminal row \(d\). We have
\begin{equation}
m_\ell=\left\{\begin{aligned}
&0, &&\ell<k,\\
&\Delta_{\ell,\ell+1}, &&k\le \ell\le d-1,\\
&m-\Delta_{k,d}, &&\ell=d.
\end{aligned}\right.
\end{equation}

Moreover, we pick $\yd{\alpha}$ to be
\begin{equation}
\begin{aligned}
\alpha_i=
\left\{\begin{aligned}
&\varsigma_i, && i<k,\\
&\varsigma_{i+1}, && i\ge k.
\end{aligned}\right.
\end{aligned}
\end{equation}
which forces the edge variables $t_{i,j}$ to be $0$ in Ref.~\cite[Eq.~(S203)]{QPA26S}, therefore one gets
\begin{equation}
\begin{aligned}
&\res{f}{\yd{\mu}}{\yd{m}}{\yd{\varsigma}}_\mathrm{all}(\wt{w})
=\binom{m}{\mathbf m}
\prod_{\ell=k}^{d}
{\color{defgreen}\prod_{i=1}^{k-1}
\prod_{r=1}^{m_\ell}
\left(1-\frac{1}{\Delta_{i,\ell}-i+\ell+r}\right)}
\prod_{\ell=k}^{d}
{\color{jam}\frac{\rpoch{1}{m_\ell}}{\rpoch{\Delta_{k,\ell}-k+\ell+1}{m_\ell}}}.
\end{aligned}
\end{equation}
The \(r\) product telescopes:
\begin{equation}
\begin{aligned}
{\color{defgreen}\prod_{r=1}^{m_\ell}
\left(1-\frac{1}{\Delta_{i,\ell}-i+\ell+r}\right)
=
\prod_{r=1}^{m_\ell}
\frac{\Delta_{i,\ell}-i+\ell+r-1}{\Delta_{i,\ell}-i+\ell+r}
=
\frac{\Delta_{i,\ell}-i+\ell}{\Delta_{i,\ell}-i+\ell+m_\ell}}.
\end{aligned}
\end{equation}
Since \(\rpoch{1}{m_\ell}=m_\ell!\), the multinomial prefactor simplifies to $\binom{m}{\mathbf m}\prod_{\ell=k}^{d}\rpoch{1}{m_\ell}
=m!.$
Hence
\begin{equation}
\label{eq:f_reduced_app}
\begin{aligned}
\res{f}{\yd{\mu}}{\yd{m}}{\yd{\varsigma}}_\mathrm{all}(\wt{w})
=
m!
\prod_{\ell=k}^{d}{\color{defgreen}\prod_{i=1}^{k-1}
\frac{\Delta_{i,\ell}-i+\ell}{\Delta_{i,\ell}-i+\ell+m_\ell}}
{\color{jam}\frac{1}{\rpoch{\Delta_{k,\ell}-k+\ell+1}{m_\ell}}}.
\end{aligned}
\end{equation}
We substitute the row-differences of the overhang-removal configuration into \(m_\ell\). For each fixed \(i<k\), telescoping yields
\begin{equation}
\begin{aligned}
{\color{defgreen}\prod_{\ell=k}^{d}
\frac{\Delta_{i,\ell}-i+\ell}{\Delta_{i,\ell}-i+\ell+m_\ell}
=
\frac{\Delta_{i,k}+k-i}{\Delta_{i,k}-i+d+m}
\prod_{\ell=k+1}^{d}
\frac{\Delta_{i,\ell}-i+\ell}{\Delta_{i,\ell}-i+\ell-1}}.
\end{aligned}
\end{equation}
Moreover,
\begin{equation}
\begin{aligned}
{\color{jam}\prod_{\ell=k}^{d}
\frac{1}{\rpoch{\Delta_{k,\ell}-k+\ell+1}{m_\ell}}=
\left(\prod_{\ell=k}^{d-1}\frac{(\Delta_{k,\ell}-k+\ell)!}{(\Delta_{k,\ell+1}-k+\ell)!}\right)
\frac{(\Delta_{k,d}-k+d)!}{(m-k+d)!}=
\frac{\prod_{\ell=k+1}^{d}(\Delta_{k,\ell}-k+\ell)}{(m-k+d)!}},
\end{aligned}
\end{equation}
~\cref{eq:f_reduced_app} becomes
\begin{equation}
\label{eq:f_clean_app}
\begin{aligned}
\res{f}{\yd{\mu}}{\yd{m}}{\yd{\varsigma}}_\mathrm{all}(\wt{w})
=
\frac{m!}{{\color{jam}(m-k+d)!}}
{\color{defgreen}\prod_{i=1}^{k-1}
\frac{\Delta_{i,k}+k-i}{\Delta_{i,k}-i+d+m}}
{\color{jam}\prod_{\ell=k+1}^{d}(\Delta_{k,\ell}-k+\ell)}
{\color{defgreen}
\prod_{i=1}^{k-1}\prod_{\ell=k+1}^{d}
\frac{\Delta_{i,\ell}-i+\ell}{\Delta_{i,\ell}-i+\ell-1}}
.
\end{aligned}
\end{equation}
Now take \(m\to\infty\). Since \(m_d=m-\Delta_{k,d}\), Stirling gives
\begin{equation}
\begin{aligned}
\frac{m!}{(m-k+d)!}
=
m^{k-d}\left(1+\bigO{m^{-1}}\right).
\end{aligned}
\end{equation}
Also,
\begin{equation}
\begin{aligned}
\prod_{i=1}^{k-1}
\frac{1}{\Delta_{i,k}-i+d+m}
=
m^{-(k-1)}
\left(1+\bigO{m^{-1}}\right).
\end{aligned}
\end{equation}
Noting that $\multiset{d}{m}=\frac{m^{d-1}}{(d-1)!}(1+\bigO{m^{-1}})$,
\begin{equation}
\begin{aligned}
\lim_{m\to\infty} \multiset{d}{m}\,
\res{f}{\yd{\mu}}{\yd{m}}{\yd{\varsigma}}_\mathrm{all}(\wt{w})
=
\frac{1}{(d-1)!}\prod_{i=1}^{k-1}(\Delta_{i,k}+k-i)
\,
\prod_{\ell=k+1}^{d}(\Delta_{k,\ell}-k+\ell)
\,
\prod_{i=1}^{k-1}\prod_{\ell=k+1}^{d}
\frac{\Delta_{i,\ell}-i+\ell}{\Delta_{i,\ell}-i+\ell-1}
.
\end{aligned}
\end{equation}

On the other hand, we derive the dimension ratio directly from the Weyl dimension formula,
\begin{equation}
\begin{aligned}
d^{[\varsigma]}
=
\prod_{1\le i<j\le d}
\frac{\Delta_{i,j}+j-i}{j-i},
\end{aligned}
\end{equation}
We split the $i,j$ pairs into three classes.
(I) If \(1\le i<j\le k-1\), then the corresponding factor in \(d^{[\varsigma]}\) is $\frac{\Delta_{i,j}+j-i}{j-i},$
and exactly the same factor appears in \(d^{[\alpha]}\). Hence all such terms cancel.
(II) If \(k+1\le i<j\le d\), then after relabeling \(a=i-1\), \(b=j-1\), the corresponding factor in \(d^{[\alpha]}\) is again $\frac{\Delta_{i,j}+j-i}{j-i},$ so these terms also cancel completely.

Therefore, the only nontrivial contributions come from the remaining terms, namely: (i) pairs \((i,k)\) with \(1\le i\le k-1\), (ii) pairs \((k,j)\) with \(k+1\le j\le d\), (iii) pairs \((i,j)\) with \(1\le i\le k-1\) and \(k+1\le j\le d\).
For the first type, the factors appear only in \(d^{[\varsigma]}\), so they contribute $\prod_{i=1}^{k-1}\frac{\Delta_{i,k}+k-i}{k-i}.$
For the second type, these also appear only in \(d^{[\varsigma]}\), giving $\prod_{j=k+1}^{d}\frac{\Delta_{k,j}-k+j}{j-k}.$
For the third type, the contribution is
\begin{equation}
\begin{aligned}
\prod_{i=1}^{k-1}\prod_{j=k+1}^{d}
\frac{(\Delta_{i,j}+j-i)(j-i-1)}{(j-i)(\Delta_{i,j}+j-i-1)}.
\end{aligned}
\end{equation}
Putting these three surviving pieces together, we obtain
\begin{equation}
\begin{aligned}
\frac{d^{[\varsigma]}}{d^{[\alpha]}}
&=
\prod_{i=1}^{k-1}\frac{\Delta_{i,k}+k-i}{k-i}
\prod_{j=k+1}^{d}\frac{\Delta_{k,j}-k+j}{j-k}
\prod_{i=1}^{k-1}\prod_{j=k+1}^{d}
\frac{(\Delta_{i,j}+j-i)(j-i-1)}{(j-i)(\Delta_{i,j}+j-i-1)}.
\end{aligned}
\end{equation}
Finally, collecting the purely integer factors gives
\begin{equation}
\begin{aligned}
\prod_{i=1}^{k-1}(k-i)\prod_{j=k+1}^{d}(j-k)
\prod_{i=1}^{k-1}\prod_{j=k+1}^{d}\frac{j-i}{j-i-1}
=
(d-1)!,
\end{aligned}
\end{equation}
and therefore
\begin{equation}
\begin{aligned}
\frac{d^{[\varsigma]}}{d^{[\alpha]}}
=
\frac{1}{(d-1)!}
\prod_{i=1}^{k-1}(\Delta_{i,k}+k-i)
\,
\prod_{j=k+1}^{d}(\Delta_{k,j}-k+j)
\,
\prod_{i=1}^{k-1}\prod_{j=k+1}^{d}
\frac{\Delta_{i,j}-i+j}{\Delta_{i,j}-i+j-1}.
\end{aligned}
\end{equation}

Comparing the two expressions, we obtain
\begin{equation}
\begin{aligned}
\lim_{m\to\infty}c_m^{\yd{\alpha}}
=\frac{d^{[\varsigma]}}{d^{[\alpha]}}.
\end{aligned}
\end{equation}

This asymptotic coefficient is consistent with the normalization of the pulled-back POVM. Indeed, integrating~\cref{eq:pulled_back_povm} over Haar measure on pure states gives
\begin{equation}
\begin{aligned}
{\extChannel{\mathcal{T}}{\yd{\varsigma}}{\yt{1\cdots m}}{\yd{\lambda}}}^\star\!\left(\int \mathrm{d}\psi\; \multiset{d}{m}\psi^{\otimes m}\right)
=
{\extChannel{\mathcal{T}}{\yd{\varsigma}}{\yt{1\cdots m}}{\yd{\lambda}}}^\star\!\left(I_{\yt{1\cdots m}}\right)
=I_{\yd{\varsigma}},
\end{aligned}
\end{equation}
where we used that $\multiset{d}{m}\int \mathrm{d}\psi\,\psi^{\otimes m}=I_{\yt{1\cdots m}}$ and that the adjoint channel is unital. On the other hand, twirling $M_I$ over $U(d)$, Schur's lemma gives
\begin{equation}
\begin{aligned}
\int \mathrm{d}U\;U^\yd{\varsigma}M_I U^{\yd{\varsigma}\dagger}
=
\frac{1}{d^{[\varsigma]}}
\sum_{\yd{\beta}\,\vdash\,\left.\yd{\varsigma}\right|_{d-1}^{d}}
c_m^{\yd{\beta}} d^{[\beta]}
I_{\yd{\varsigma}}.
\end{aligned}
\end{equation}
Equating the two sides yields
\begin{equation}
\label{eq:coeff_normalization_app}
\sum_{\yd{\beta}\,\vdash\,\left.\yd{\varsigma}\right|_{d-1}^{d}}
c_m^{\yd{\beta}} d^{[\beta]}
=d^{[\varsigma]}.
\end{equation}
For the overhang-removal choice of $\yd{\mu}$ considered above, the coefficient corresponding to the diagram $\yd{\alpha}$ satisfies $\lim_{m\to\infty}
c_m^{\yd{\alpha}}d^{[\alpha]}
=d^{[\varsigma]}$. Since $c_m^{\yd{\beta}}\ge 0$ for all $\yd{\beta}$ and
$\sum_{\yd{\beta}} c_m^{\yd{\beta}} d^{[\beta]}=d^{[\varsigma]}$
for every $m$, it follows that
$\lim_{m\to\infty} c_m^{\yd{\beta}} d^{[\beta]}=0$
for every $\yd{\beta}\neq\yd{\alpha}$. Therefore, for this choice of $\yd{\mu}$, $\yd{\alpha}$ is the unique surviving $U(d-1)$ sector in the large-$m$ limit. Equivalently, the EB limit is the measurement-mediated channel with covariant POVM elements
$\{M_U
=d^\yd{\alpha}
U^{\yd{\varsigma}}I_\yd{\alpha}U^{\yd{\varsigma}\dagger}| U\in U(d)\}$
followed by re-preparation of $U\ketbra{d}{d}U^\dagger$.
Using~\cref{lem:weight_permutation}, we can simplify the measurement-mediated channel further.

\begin{lemma}
\label{lem:weight_permutation}
Let $\sigma\in S_d$, and let $\mathbb{T}_{\boldsymbol\tau}^{\yd{\varsigma}}\subseteq \mathbb{W}^{\yd{\varsigma}}$ denote the type space of weight $\boldsymbol\tau$. Then
$
U_\sigma^{\yd{\varsigma}}\,\mathbb{T}_{\boldsymbol\tau}^{\yd{\varsigma}}
=
\mathbb{T}_{\sigma(\boldsymbol\tau)}^{\yd{\varsigma}}$, where we recall that
$\sigma(\boldsymbol\tau)_i=\tau_{\sigma^{-1}(i)}$.
In particular, if both $\mathbb{T}_{\boldsymbol\tau}^{\yd{\varsigma}}$ and $\mathbb{T}_{\sigma(\boldsymbol\tau)}^{\yd{\varsigma}}$ are one dimensional, then for any nonzero vector $\ket{v}\in \mathbb{T}_{\boldsymbol\tau}^{\yd{\varsigma}}$,
$U_\sigma^{\yd{\varsigma}}\ket{v}=c\,\ket{v^\prime}$
for some nonzero scalar $c$, where $\ket{v^\prime}$ is any nonzero vector in $\mathbb{T}_{\sigma(\boldsymbol\tau)}^{\yd{\varsigma}}$.
\end{lemma}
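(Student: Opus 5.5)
The idea is to reduce the claim to a computation with the maximal torus. Here $U_\sigma$ denotes the permutation matrix $\sum_i \ketbra{\sigma(i)}{i}\in U(d)$, and $U_\sigma^{\yd{\varsigma}}$ is its image under the irrep $\yd{\varsigma}$. The type (weight) space $\mathbb{T}_{\boldsymbol\tau}^{\yd{\varsigma}}$ is the joint eigenspace of the diagonal torus. That is, it consists of the vectors $\ket{v}\in\mathbb{W}^{\yd{\varsigma}}$ such that $D_{\boldsymbol\theta}^{\yd{\varsigma}}\ket{v}=e^{i\boldsymbol\tau\cdot\boldsymbol\theta}\ket{v}$ for all diagonal unitaries $D_{\boldsymbol\theta}$. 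This matches the character description used in the proof of \cref{lem:unitary_covariance_preserves_eigenbasis}. Since $\mathbb{W}^{\yd{\varsigma}}$ is realized inside $\mathbb{H}^{\otimes|\varsigma|}$ compatibly with the tensor-power action, this agrees with the definition via occupation numbers of computational basis tensors.

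The key step is the conjugation identity $U_\sigma D_{\boldsymbol\theta}U_\sigma^\dagger = D_{\sigma(\boldsymbol\theta)}$, where $\sigma(\boldsymbol\theta)_i=\theta_{\sigma^{-1}(i)}$. It is checked directly on basis vectors: $U_\sigma D_{\boldsymbol\theta}U_\sigma^\dagger\ket{\sigma(j)}=e^{i\theta_j}\ket{\sigma(j)}$. Because $\yd{\varsigma}$ is a representation, the identity lifts to $U_\sigma^{\yd{\varsigma}}D_{\boldsymbol\theta}^{\yd{\varsigma}}U_\sigma^{\yd{\varsigma}\dagger}=D_{\sigma(\boldsymbol\theta)}^{\yd{\varsigma}}$.

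Now take $\ket{v}\in\mathbb{T}_{\boldsymbol\tau}^{\yd{\varsigma}}$ and any $\boldsymbol\theta$. We compute
\begin{equation}
D_{\boldsymbol\theta}^{\yd{\varsigma}}U_\sigma^{\yd{\varsigma}}\ket{v}
=U_\sigma^{\yd{\varsigma}}D_{\sigma^{-1}(\boldsymbol\theta)}^{\yd{\varsigma}}\ket{v}
=e^{i\boldsymbol\tau\cdot\sigma^{-1}(\boldsymbol\theta)}U_\sigma^{\yd{\varsigma}}\ket{v}.
\end{equation}
Reindexing gives $\boldsymbol\tau\cdot\sigma^{-1}(\boldsymbol\theta)=\sum_i\tau_i\theta_{\sigma(i)}=\sum_j\tau_{\sigma^{-1}(j)}\theta_j=\sigma(\boldsymbol\tau)\cdot\boldsymbol\theta$. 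Hence $U_\sigma^{\yd{\varsigma}}\mathbb{T}_{\boldsymbol\tau}^{\yd{\varsigma}}\subseteq\mathbb{T}_{\sigma(\boldsymbol\tau)}^{\yd{\varsigma}}$. Applying the same argument to $\sigma^{-1}$ yields the reverse inclusion, since $U_{\sigma^{-1}}^{\yd{\varsigma}}=(U_\sigma^{\yd{\varsigma}})^{-1}$. This gives the equality of the two subspaces, and as a byproduct the two weight spaces have the same dimension.

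For the ``in particular'' part, invertibility of $U_\sigma^{\yd{\varsigma}}$ shows that $U_\sigma^{\yd{\varsigma}}\ket{v}$ is a nonzero vector in the one-dimensional space $\mathbb{T}_{\sigma(\boldsymbol\tau)}^{\yd{\varsigma}}$. It is therefore a nonzero multiple of any chosen nonzero $\ket{v'}$ in that space. Indeed $|c|=\|v\|/\|v'\|$ by unitarity.

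The only real obstacle is bookkeeping. I would fix the convention $\sigma(\boldsymbol\tau)_i=\tau_{\sigma^{-1}(i)}$ together with the matching definition of $U_\sigma$, and make sure the exponent reindexing produces $\sigma(\boldsymbol\tau)$ rather than $\sigma^{-1}(\boldsymbol\tau)$. If the paper's $U_\sigma$ is the inverse convention, the same proof goes through with $\sigma$ replaced by $\sigma^{-1}$ in the definition of the permutation matrix.
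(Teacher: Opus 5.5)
Your proof is correct and follows the same route as the paper: you conjugate a diagonal element by the permutation matrix, lift the identity to the irrep $\yd{\varsigma}$, reindex the eigenvalue to obtain weight $\sigma(\boldsymbol\tau)$, get the reverse inclusion from $\sigma^{-1}$, and conclude the one-dimensional case directly. The only cosmetic difference is that you work with torus elements $D_{\boldsymbol\theta}$ at the group level, so you need only the homomorphism property. The paper instead uses Cartan elements $H=\sum_i h_i\ketbra{i}{i}$ and their images $H^{\yd{\varsigma}}$ under the derived representation.
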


\begin{proof}
Let $H=\sum_i h_i\ket{i}\!\bra{i}$ be any Cartan element in the fundamental representation. Since $U_\sigma\ket{i}=\ket{\sigma(i)}$, we have
\begin{equation}
U_\sigma H U_\sigma^\dagger
=
\sum_i h_i \ket{\sigma(i)}\!\bra{\sigma(i)}
=
\sum_i h_{\sigma^{-1}(i)}\ket{i}\!\bra{i}.
\end{equation}
Passing to the irrep $\yd{\varsigma}$ gives
\begin{equation}
U_\sigma^{\yd{\varsigma}} H^{\yd{\varsigma}} \bigl(U_\sigma^{\yd{\varsigma}}\bigr)^\dagger
=
\left(\sum_i h_{\sigma^{-1}(i)}\ket{i}\!\bra{i}\right)^{\yd{\varsigma}}.
\end{equation}
Now let $\ket{v}\in \mathbb{T}_{\boldsymbol\tau}^{\yd{\varsigma}}$. By definition of the type space, $H^{\yd{\varsigma}}\ket{v}
=
\left(\sum_i \tau_i h_i\right)\ket{v}.$
Hence
\begin{equation}
\begin{aligned}
H^{\yd{\varsigma}} U_\sigma^{\yd{\varsigma}}\ket{v}
&=
U_\sigma^{\yd{\varsigma}}
\left(\sum_i h_{\sigma^{-1}(i)}\ket{i}\!\bra{i}\right)^{\yd{\varsigma}}\ket{v}\\
&=
\left(\sum_i \tau_i h_{\sigma^{-1}(i)}\right) U_\sigma^{\yd{\varsigma}}\ket{v}\\
&=
\left(\sum_i \tau_{\sigma^{-1}(i)} h_i\right) U_\sigma^{\yd{\varsigma}}\ket{v}.
\end{aligned}
\end{equation}
Therefore $U_\sigma^{\yd{\varsigma}}\ket{v}$ has weight $\sigma(\boldsymbol\tau)$, so $U_\sigma^{\yd{\varsigma}}\,\mathbb{T}_{\boldsymbol\tau}^{\yd{\varsigma}}
\subseteq
\mathbb{T}_{\sigma(\boldsymbol\tau)}^{\yd{\varsigma}}.$
Applying the same argument to $\sigma^{-1}$ gives the reverse inclusion, hence
$U_\sigma^{\yd{\varsigma}}\,\mathbb{T}_{\boldsymbol\tau}^{\yd{\varsigma}}
=
\mathbb{T}_{\sigma(\boldsymbol\tau)}^{\yd{\varsigma}}.$
If both type spaces are one dimensional, then the image of any nonzero vector in $\mathbb{T}_{\boldsymbol\tau}^{\yd{\varsigma}}$ must be a nonzero scalar multiple of any nonzero vector in $\mathbb{T}_{\sigma(\boldsymbol\tau)}^{\yd{\varsigma}}$, proving the final statement.
\end{proof}

For each \(k\), consider the Kostka number \(K^{\yd{\varsigma}}_{\boldsymbol{\tau}}\) with
$\boldsymbol{\tau}
=
[\varsigma_1,\ldots,\varsigma_{k-1},\varsigma_{k+1},\ldots,\varsigma_d,\varsigma_k].$
Since \(\boldsymbol{\tau}\) is a permutation of the highest weight \(\yd{\varsigma}\), by permutation invariance of Kostka numbers we have \(K^{\yd{\varsigma}}_{\boldsymbol{\tau}}=1\). Thus there is a unique Weyl tableau (WT) \(\wt{w}\vdash\yd{\varsigma}\) with \(\boldsymbol{\#}(\wt{w})=\boldsymbol{\tau}\). Its second row is \(\yd{\alpha}\). By~\cref{lem:weight_permutation}, if we apply the permutation \(\sigma\) defined in Ref.~\cite[Eq.~(S51)]{QPA26S}, which sends $\yd{\varsigma}$ to $\boldsymbol{\tau}$ and \(k\) to \(d\), then \(U^\yd{\varsigma}_\sigma\) maps the unique GT basis state \(\ket{\wt{\mathrm lw}}\) to \(\ket{\wt{w}}\). Therefore, since the final POVM is unitarily invariant, we may equivalently describe it as $\{M_U
=d^\yd{\varsigma}
U^{\yd{\varsigma}}\ketbra{\wt{\mathrm{lw}}}{\wt{\mathrm{lw}}}U^{\yd{\varsigma}\dagger}| U\in U(d)\}$, and then outputting the correspondingly rotated basis state \(U\ketbra{k}{k}U^\dagger\).

\subsection{Optimal EB Risk}
\label{subsec:optimal_eb_risk_supp}

We characterize the risk the EB protocol described in the previous section can achieve using~\cref{cor:limit_of_optimal_symmetric} by studying the $m\to\infty$ limit of Ref.~\cite[Corollary~S5.32]{QPA26S}. We set
\begin{equation}
\begin{aligned}
{}^k\widetilde{f}^{\yd{\varsigma}}
\coloneqq
\lim_{m\to\infty}
\res{{^kf}}{\yd{\varsigma}}{\yd{m}}{\yd{\lambda}}_{\mathrm{one}},
\end{aligned}
\end{equation}
whenever the limit exists. In particular, the sector-wise lower bound becomes
\begin{equation}
\begin{aligned}
{}^k\widetilde{f}^{\yd{\varsigma}}
\geq
1
-\frac{1}{\Delta_{k-1,k}+1}\sum_{i=1}^{k-1}\frac{p_i}{D_{i,k}}
-\frac{1}{\Delta_{k,k+1}}\sum_{i=k+1}^{d}\frac{p_i}{D_{k,i}}
-\frac{d-k}{\Delta_{k,k+1}+(d-k)} \, .
\end{aligned}
\end{equation}
Likewise, the upper bound becomes
\begin{equation}
\begin{aligned}
{}^k\widetilde{f}^{\yd{\varsigma}}
\leq
1
-\frac{d-k}{\Delta_{k,d}+(d-k)}
\left(
1-\frac{1}{\Delta_{k,k+1}}
\left(
1+\sum_{i=k+1}^{d}\frac{p_i}{D_{k,i}}
\right)
\right)
\, .
\end{aligned}
\end{equation}

\begin{theorem}[Lower bound for the EB one-site risk]
\label{thm:one_site_risk_lower_bound}
Let $0<k<d$. Assume
\begin{equation}
n\ge \frac{4}{D_{k,\mathrm{min}}}\left(1+\frac{1}{D_{k,\mathrm{min}}}\right).
\end{equation}
Then the optimal EB risk is lower bounded by
\begin{equation}
{}^k\widetilde{\mathcal L}_{(\cdot),\mathrm{one}}^\ast
\ge
\left(
1-8e^{-\frac{nD_{k,\mathrm{min}}^2}{256}}
\right)
\frac{d-k}{2(n+d-k)}.
\end{equation}
\end{theorem}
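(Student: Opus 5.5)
We will prove the bound sector by sector and then average over the Schur-sampling distribution of Young diagrams. The starting point is the identification, via \cref{cor:limit_of_optimal_symmetric} and the symmetry reductions of \cref{subsec:symmetries_supp}, of the optimal EB protocol. On each sector $\yd{\varsigma}$ it is the covariant measure-and-prepare channel of \cref{subsec:eb_povm_supp}, with one-site fidelity ${}^k\widetilde f^{\yd{\varsigma}}$. Hence
\begin{equation*}
{}^k\widetilde{\mathcal L}^\ast_{(\cdot),\mathrm{one}}=\sum_{\yd{\varsigma}}P_n(\yd{\varsigma})\bigl(1-{}^k\widetilde f^{\yd{\varsigma}}\bigr),
\end{equation*}
where $P_n$ is the Schur--Weyl distribution of $\rho^{\otimes n}$. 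By \cref{lem:worstcase_bayes_single_orbit}, worst-case and average risks coincide on the orbit $\mathcal X_{\boldsymbol p}$.

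For the sector-wise lower bound we use the upper bound on ${}^k\widetilde f^{\yd{\varsigma}}$ displayed just before the theorem:
\begin{equation*}
1-{}^k\widetilde f^{\yd{\varsigma}}\ge\frac{d-k}{\Delta_{k,d}+(d-k)}\Bigl(1-\frac{1}{\Delta_{k,k+1}}\Bigl(1+\sum_{i>k}\frac{p_i}{D_{k,i}}\Bigr)\Bigr).
\end{equation*}
We then restrict to a typical event on which the row gaps of $\yd{\varsigma}$ track $n$ times the spectral gaps. On that event, $\Delta_{k,k+1}\ge nD_{k,\min}/2$, so $\sum_{i>k}p_i/D_{k,i}\le 1/D_{k,\min}$ and the bracket is at least
\begin{equation*}
1-\frac{2}{nD_{k,\min}}\Bigl(1+\frac1{D_{k,\min}}\Bigr)\ge\frac12 ,
\end{equation*}
using precisely the hypothesis $n\ge\frac{4}{D_{k,\min}}(1+\frac1{D_{k,\min}})$. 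Also $\Delta_{k,d}\le\varsigma_k\le n$ trivially, which gives the factor $\frac{d-k}{n+d-k}$. Together these produce $\frac{d-k}{2(n+d-k)}$ on the typical event, and we bound the loss below by $0$ off it.

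It remains to show that the typical event has probability at least $1-8e^{-nD_{k,\min}^2/256}$. We take the event to be $|\varsigma_j/n-p_j|\le D_{k,\min}/4$ for $j\in\{k,k+1\}$. Only the lower bound on $\Delta_{k,k+1}$ is actually needed, since the upper bound on $\Delta_{k,d}$ is deterministic. We would invoke the same row-difference tail bound used in \cref{lem:relative_gap_averaging_bound}, namely concentration of Young-diagram rows around $np_j$ via Keyl/Hayashi-type estimates, or equivalently majorization bounds on partial row sums $\varsigma_1+\cdots+\varsigma_j$ against the sampled type. Each one-sided deviation contributes $2e^{-n(D_{k,\min}/4)^2/16}=2e^{-nD_{k,\min}^2/256}$, and a union bound over the at most four relevant deviations gives the prefactor $8$.

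The main obstacle is this concentration step. The rows of $\yd{\varsigma}$ are not sums of independent variables, so the constants must come from a partial-sum (Chernoff plus majorization) argument with some care. The main technical risk is that rows $k$ and $k+1$ interact, and their gap needs a two-sided control of the prefix sums $\sum_{j\le k}\varsigma_j$ and $\sum_{j\le k+1}\varsigma_j$, which is where the four tail terms arise. A secondary check is that the limiting fidelity bound from \cite[Corollary~S5.32]{QPA26S} is valid under exactly the threshold $\Delta_{k,k+1}$ large used here, so that it can be applied on the typical event.
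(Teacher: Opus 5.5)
Your proposal is correct and follows the paper's proof essentially step for step. It uses the sector-wise upper bound on ${}^k\widetilde f^{\yd{\varsigma}}$, the trivial replacement $\Delta_{k,d}\le n$, the typical-sector estimate $\Delta_{k,k+1}\ge nD_{k,\mathrm{min}}/2$ (which with the hypothesis on $n$ makes the bracket at least $\tfrac12$), and discards atypical sectors. The only difference is that the paper gets the $8e^{-nD_{k,\mathrm{min}}^2/256}$ tail by applying the companion paper's row-difference concentration lemma (its Lemma~S2.5) directly to $\overline{\Delta}_{k,k+1}$ with deviation $D_{k,\mathrm{min}}/2$, rather than from four one-sided per-row deviations of size $D_{k,\mathrm{min}}/4$, so no per-row tail bound with the constants you guessed is required.
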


\begin{proof}
Setting $\Delta_{k,d}<n$, the sector-wise upper bound gives
\begin{equation}
\label{eq:loss_after_id}
{}^k\widetilde{f}^{\yd{\varsigma}}
\leq
1
-\frac{d-k}{n+(d-k)}
\left(
1-\frac{1}{\Delta_{k,k+1}}
\left(
1+\sum_{i=k+1}^{d}\frac{p_i}{D_{k,i}}
\right)
\right)
\, .
\end{equation}
Now choose $\alpha\coloneqq D_{k,\mathrm{min}}/2$ for Ref.~\cite[Lemma~S2.5]{QPA26S}. For the typical sectors we have
\begin{equation}
\label{eq:pr_deviation_normalized_row_length}
\Pr\!\left(\,|\overline{\Delta}_{k,k+1}-D_{k,k+1}|\ge \frac{D_{k,\mathrm{min}}}{2}\,\right)
\le
8e^{-\frac{nD_{k,\mathrm{min}}^2}{256}} \, ,
\end{equation}
and also
\begin{equation}
\overline{\Delta}_{k,k+1}\ge D_{k,k+1}-\frac{D_{k,\mathrm{min}}}{2}=\frac{D_{k,\mathrm{min}}}{2}.
\end{equation}
Therefore we have on the typical sectors
\begin{equation}
{}^k\widetilde{f}^{\yd{\varsigma}}
\le 1 - \frac{d-k}{n+(d-k)}\left(1-\frac{2}{nD_{k,\mathrm{min}}}\left(1+\frac{1}{D_{k,\mathrm{min}}}\right)\right).
\end{equation}
By the assumption on $n$,
\begin{equation}
\frac{2}{nD_{k,\mathrm{min}}}\left(1+\frac{1}{D_{k,\mathrm{min}}}\right)\le \frac12.
\end{equation}
Altogether this means
\begin{equation}
{}^k\widetilde{f}^{\yd{\varsigma}}
\le 1 - \frac{d-k}{2(n+d-k)} \, .
\end{equation}
Together with~\cref{eq:pr_deviation_normalized_row_length} we get the result.
\end{proof}

\begin{theorem}[Asymptotic EB fidelity for eigenstate tomography]
\label{thm:asymptotic_eb_fidelity_eigenstate_tomography}
For a non-degenerate sorted spectrum \(\boldsymbol{p}\), the one-site EB utility for the \(k\)-th eigenstate tomography task satisfies
\begin{equation}
{}^k\widetilde{\mathcal F}_{(\cdot),\mathrm{one}}^\ast
=
1-\frac{1}{n}
\left(
\sum_{i\neq k}\frac{p_i}{D_{k,i}^2}
+
\sum_{i=k+1}^d\frac{1}{D_{k,i}}
\right)
+\lilo[d,\boldsymbol{p}]{n^{-1}}.
\end{equation}
\end{theorem}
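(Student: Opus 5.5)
The plan is to prove matching upper and lower bounds on the optimal EB fidelity, both with leading coefficient
\[
c_k(\boldsymbol p)\coloneqq \sum_{i\neq k}\frac{p_i}{D_{k,i}^2}+\sum_{i=k+1}^d\frac{1}{D_{k,i}} .
\]
By \cref{cor:limit_of_optimal_symmetric}, together with the identification in \cref{subsec:eigenstate_tomography_supp} of physical-estimator tomography risk with EB risk, the optimal EB one-site fidelity equals the average over Schur-sampling sectors $\yd{\varsigma}$ of the limiting sector-wise fidelities ${}^k\widetilde f^{\yd{\varsigma}}$. Here the weights are the Schur-sampling probabilities of $\yd{\varsigma}$ under $\rho^{\otimes n}$, and on each sector the optimal EB channel is the covariant POVM of \cref{subsec:eb_povm_supp}. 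It therefore suffices to find the $1/n$ asymptotics of $\mathbb E_{\yd{\varsigma}}[1-{}^k\widetilde f^{\yd{\varsigma}}]$.

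The key step is a sharper sector-wise formula than the two-sided bounds displayed just before \cref{thm:one_site_risk_lower_bound}. Those bounds are tight only up to constants, because one keeps only $\Delta_{k,k+1}$ and the other only $\Delta_{k,d}$. I would instead compute ${}^k\widetilde f^{\yd{\varsigma}}$ exactly to first order in $1/\Delta$, taking the $m\to\infty$ limit of the sector-wise CGC fidelity formula of Ref.~\cite{QPA26S} with the overhang-removal $\yd\mu$. The target expansion on typical sectors, where every $|\Delta_{k,i}|\ge nD_{k,\mathrm{min}}/2$, is
\[
1-{}^k\widetilde f^{\yd{\varsigma}}=\sum_{i\neq k}\frac{p_i}{|\Delta_{k,i}|\,D_{k,i}}\cdot(\text{sign-corrected})+\sum_{i>k}\frac{1}{\Delta_{k,i}}+O(n^{-2}).
\]
The first sum is a ``mixing'' contribution from the input spectrum. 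The second is a ``counting'' contribution coming from the $d-k$ lower rows; it matches the $\frac{d-k}{\Delta_{k,d}+d-k}$ structure, now resolved row by row.

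With the sector-wise expansion in hand, I would average over Schur sampling. Conditional on typicality, $\overline\Delta_{k,i}=\Delta_{k,i}/n$ concentrates at $D_{k,i}$, by the row-deviation bound already used in \cref{eq:pr_deviation_normalized_row_length} (Ref.~\cite[Lemma~S2.5]{QPA26S}). Hence $\mathbb E[1/\Delta_{k,i}]=\frac{1}{nD_{k,i}}(1+o(1))$. This turns the mixing term into $\frac1n\sum_{i\neq k}p_i/D_{k,i}^2$ and the counting term into $\frac1n\sum_{i>k}1/D_{k,i}$. Atypical sectors have exponentially small weight and loss at most $1$, so they are absorbed into $o(n^{-1})$. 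Because the sector-wise expression is also an upper bound on the fidelity of any EB channel (optimality of the limiting POVM per sector), the upper and lower bounds coincide and give the stated equality. The constants hidden in the $o(n^{-1})$ term depend on $d$ and $\boldsymbol p$, through $D_{k,\mathrm{min}}$ and the $d$ terms in the sums.

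I expect the main obstacle to be the first-order expansion of the sector-wise fidelity, and in particular the mixing term. It requires tracking how input weight in row $i$ leaks into the estimated $k$-th eigenvector, and this is sign-sensitive: the rows above $k$ behave like $\frac{1}{\Delta_{k-1,k}+1}$ while the rows below behave like $\frac{1}{\Delta_{k,k+1}}$. I would attack it by extending the telescoping product in \cref{eq:f_clean_app} to nonzero edge variables $t_{i,j}$ at first order only, and then applying the normalization identity \cref{eq:coeff_normalization_app} to control the remaining sectors.
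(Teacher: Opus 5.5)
Your outline matches the paper's proof: a sector-wise one-site expansion in the $m\to\infty$ limit, Schur--Weyl averaging over typical sectors with atypical ones absorbed into $\lilo{n^{-1}}$, and a sector-wise converse. Your target expansion is also correct. For $d=2$, the limiting covariant POVM gives $1-{}^k\widetilde f^{\yd{\varsigma}}=\bigl(1+p_2/D_{1,2}\bigr)/(\Delta_{1,2}+2)$ up to terms exponentially small in $\Delta_{1,2}$, for both $k=1$ and $k=2$, as your formula predicts at leading order. The gap is that your plan does not deliver the two steps that carry the proof. The paper does not derive them here either. It imports the sector-wise one-site expansion, uniform over the typical region $\mathsf{R}_\epsilon$, and its asymptotic optimality from Theorems~S3.6--S3.7 of the companion paper. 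It then removes the $1/m$ term via \cref{thm:EB_risk}.

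First, your proposed route to the expansion targets the wrong quantity. Extending the telescoping product of \cref{eq:f_clean_app} to nonzero edge variables $t_{i,j}$ gives the all-site components $f_{\mathrm{all}}(\wt{w})$. These are the diagonal entries of the pulled-back POVM $M_I$ at $U=I$, i.e.\ the coefficients $c^{\yd{\beta}}$. By \cref{eq:coeff_normalization_app} they collapse onto $\yd{\alpha}$ as $m\to\infty$, so they carry no information about the $1/\Delta$ corrections. The one-site EB fidelity is the Haar integral $\int\mathrm{d}U\,\Tr\bigl(M_U\hat\rho^{\yd{\varsigma}}\bigr)\,|\langle k|U|k\rangle|^2$ over all rotations. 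Here $M_U$ is the limiting lowest-weight POVM and $\hat\rho^{\yd{\varsigma}}$ is the normalized sector component of $\rho^{\otimes n}$. Expanding it to first order needs a separate Clebsch--Gordan computation on $\mathbb{W}^{\yd{\varsigma}}\otimes\mathbb{C}^d$, which is what the companion's one-site formula supplies.

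Second, ``optimality of the limiting POVM per sector'' is exactly the converse that has to be proved; it is not an available input. Getting it from \cref{cor:limit_of_optimal_symmetric} would require overhang-removal to be exactly optimal for the one-site loss on each typical sector for every large $m$, which is not established here. The paper avoids this by using \cref{thm:EB_risk}, which needs only the value of the optimal coherent one-site risk, together with the companion's uniform asymptotic converse. If you cite those results in place of your own derivation, your averaging step goes through as written.
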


\begin{proof}
Ref.~\cite[Theorem~S3.6]{QPA26S} gives the sector-wise asymptotically optimal one-site expansion for symmetric QPA protocols, uniformly over the typical region \(\mathsf{R}_\epsilon\) defined there. Applying the EB limit in~\cref{thm:EB_risk} identifies the optimal EB risk with the \(m\to\infty\) limit of the corresponding optimal symmetric one-site risk, up to a vanishing error. Because the limit \(m\to\infty\) is taken first, the dimension dependence in the EB approximation does not affect the displayed coefficient. Taking this limit in the one-site expansion removes the \(1/m\) term. The Schur--Weyl averaging step is the same as in Ref.~\cite[Lemma~S2.3]{QPA26S}, giving the displayed expansion. Optimality similarly follows from the sector-wise asymptotic optimality statement and its overall consequence in Ref.~\cite[Theorems~S3.6 and~S3.7]{QPA26S}.
\end{proof}

\begin{corollary}[Sample complexity of eigenstate tomography]
\label{lem:eb_sample_complexity_lower}
Write \({^k}\widetilde{\mathcal L}_{\mathrm{one}}^\ast(n)\) as the optimal EB one-site loss for $n$ inputs. Let $a\coloneqq d-k$ and  $\widetilde n(\varepsilon)\coloneqq \min\{n:\ {^k}\widetilde{\mathcal L}_{\mathrm{one}}^\ast(n)\le \varepsilon\}$. Suppose that for all input sizes $n$, we have
\begin{equation}
{^k}\widetilde{\mathcal L}_{\mathrm{one}}^\ast(n)\ge\left(1-8e^{-\frac{nD_{k,\mathrm{min}}^2}{256}}\right)\frac{a}{2(n+a)}.
\end{equation}
Then, in any asymptotic regime such that $a/\varepsilon\to\infty$,
\begin{equation}
\widetilde n(\varepsilon)\ge\frac{a}{2\varepsilon}(1-\lilo{1})-a.
\end{equation}
In particular,
\begin{equation}
\widetilde n(\varepsilon)=\bigOm{\frac{a}{\varepsilon}}=\bigOm{\frac{d-k}{\varepsilon}}.
\end{equation}
\end{corollary}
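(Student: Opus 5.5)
The plan is to invert the assumed lower bound on the loss directly, using only the definition of \(\widetilde n(\varepsilon)\). Set \(n=\widetilde n(\varepsilon)\). By definition \({^k}\widetilde{\mathcal L}_{\mathrm{one}}^\ast(n)\le\varepsilon\). Combining this with the hypothesis gives
\begin{equation}
\left(1-8e^{-\frac{nD_{k,\mathrm{min}}^2}{256}}\right)\frac{a}{2(n+a)}\le\varepsilon,
\qquad\text{i.e.}\qquad
n\ge \frac{a}{2\varepsilon}\left(1-8e^{-\frac{nD_{k,\mathrm{min}}^2}{256}}\right)-a .
\end{equation}
This already has the claimed form. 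It remains to show that the prefactor \(\eta(n)\coloneqq 8e^{-nD_{k,\mathrm{min}}^2/256}\) is \(o(1)\) in the stated regime.

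The key step is a bootstrap showing that \(n=\widetilde n(\varepsilon)\) must diverge when \(a/\varepsilon\to\infty\). For a fixed threshold \(n_1\), the set of \(n\) with \(\eta(n)\ge 1/2\) is \(n<n_1\coloneqq 256\ln 16/D_{k,\mathrm{min}}^2\), so I argue by dichotomy.
\begin{itemize}
\item If \(n\ge n_1\), then \(\eta(n)\le 1/2\), and the display gives \(n\ge a/(4\varepsilon)-a\). This grows without bound provided \(\varepsilon\to0\).
\item If instead \(\varepsilon\) does not tend to zero, then \(a\to\infty\). Since every copy count \(n\) is at least one and \(a/(2(n+a))\to 1/2\) for bounded \(n\), a bounded \(n\) would force \(\varepsilon\gtrsim \tfrac12(1-\eta(n))\).
\end{itemize}
The cleanest route is probably to fix a large \(N\) and show \(\widetilde n(\varepsilon)\ge N\) eventually. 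For \(n<N\), the bound \(a/(2(n+a))\ge a/(2(N+a))\) tends either to \(1/2\) (as \(a\to\infty\)) or is at least \(a/(2(N+a))\gg\varepsilon\) (as \(\varepsilon\to0\) with \(a\) fixed). We also need \(1-\eta(n)\) bounded away from zero on \(n<N\). This is the delicate point, since for small \(n\) the factor \(1-\eta(n)\) can be negative and make the hypothesis vacuous.

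I therefore expect the main obstacle to be handling small \(n\), where \(1-\eta(n)\le0\) and the hypothesis gives no information. To get around it, I would treat \(D_{k,\mathrm{min}}\) as fixed and rely on the regime assumption, or simply prove the cleaner statement: once \(n\ge n_1\) is ensured, the leading-order bound holds. Alternatively, I would use monotonicity of \({^k}\widetilde{\mathcal L}_{\mathrm{one}}^\ast(n)\) in \(n\), since extra copies can be discarded. Then \(\widetilde n(\varepsilon)\) is nondecreasing as \(\varepsilon\downarrow0\), and for \(\varepsilon\) small it must exceed \(n_1\), because the loss at \(n_1\) is a fixed positive quantity.

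Once \(n\to\infty\) is established, \(\eta(n)\to0\), which gives \(\widetilde n(\varepsilon)\ge \frac{a}{2\varepsilon}(1-o(1))-a\). The \(\Omega(a/\varepsilon)\) statement then follows because \(a\le a/(4\varepsilon)\) for \(\varepsilon\le 1/8\), so the subtracted \(a\) only costs a constant factor.
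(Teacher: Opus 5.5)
Your proof is correct and has the same skeleton as the paper's: instantiate the hypothesis at $n=\widetilde n(\varepsilon)$, show $\widetilde n(\varepsilon)\to\infty$ so that $8e^{-nD_{k,\mathrm{min}}^2/256}=o(1)$, and rearrange. You differ in the divergence step, and your version is the sounder one.

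The paper argues by contradiction along a subsequence with $\widetilde n(\varepsilon)=O(1)$. It simply asserts that the prefactor $1-8e^{-nD_{k,\mathrm{min}}^2/256}$ stays bounded away from zero there. Since $D_{k,\mathrm{min}}\le 1$, however, this prefactor is negative for all $n<256\ln 8/D_{k,\mathrm{min}}^2$. The hypothesis therefore says nothing at small $n$, and alone would even allow the loss to vanish at some small $n$; then $\widetilde n(\varepsilon)$ stays bounded and the conclusion fails.

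Your monotonicity observation excludes exactly this: an EB protocol may discard copies, so ${^k}\widetilde{\mathcal L}_{\mathrm{one}}^\ast(n)$ is nonincreasing in $n$. Let $n_1$ be the first integer with $8e^{-n_1D_{k,\mathrm{min}}^2/256}\le 1/2$. Then ${^k}\widetilde{\mathcal L}_{\mathrm{one}}^\ast(n)\ge a/(4(n_1+a))\ge 1/(4(n_1+1))$ for all $n\le n_1$, uniformly in $a$. Once $\varepsilon<1/(4(n_1+1))$, your display then forces $\widetilde n(\varepsilon)\ge a/(4\varepsilon)-a\to\infty$ as $\varepsilon\to0$.

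Two loose ends remain, both shared with the paper:
\begin{itemize}
\item $D_{k,\mathrm{min}}$ must be held fixed, or bounded below, along the limit. You correctly state this.
\item Your second bullet ($\varepsilon\not\to0$) is not yet a contradiction. The claim is equivalent to $2\varepsilon(\widetilde n(\varepsilon)+a)/a\ge 1-o(1)$, which is automatic wherever $\varepsilon\ge\frac12-o(1)$. In the remaining case, pass to a subsequence with $\varepsilon\to\varepsilon_\infty\in(0,1/2)$, so $a\to\infty$. Whenever $\widetilde n(\varepsilon)\le N$, monotonicity gives $\varepsilon\ge\bigl(1-8e^{-ND_{k,\mathrm{min}}^2/256}\bigr)\,a/(2(N+a))$. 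The right-hand side tends to $\frac12\bigl(1-8e^{-ND_{k,\mathrm{min}}^2/256}\bigr)$, which exceeds $\varepsilon_\infty$ for $N$ large. Hence $\widetilde n(\varepsilon)\to\infty$ in this case as well.
\end{itemize}
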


\begin{proof}
Consider any sequence along which $a/\varepsilon\to\infty$. By assumption,
\begin{equation}
\varepsilon\ge\left(1-8e^{-\frac{\widetilde n(\varepsilon) D_{k,\mathrm{min}}^2}{256}}\right)\frac{a}{2(\widetilde n(\varepsilon)+a)}.
\end{equation}
We first claim that $\widetilde n(\varepsilon)\to\infty$. Otherwise, along some subsequence $\widetilde n(\varepsilon)=\bigO{1}$, so $\frac{a}{2(\widetilde n(\varepsilon)+a)}\to \frac12$, while the exponential prefactor remains bounded away from zero for all sufficiently large $\varepsilon^{-1}a$, yielding a contradiction since then the right-hand side stays bounded below by a positive constant whereas $\varepsilon/a\to 0$. Thus $\widetilde n(\varepsilon)\to\infty$, and therefore $8e^{-\frac{\widetilde n(\varepsilon) D_{k,\mathrm{min}}^2}{256}}\to 0$. Substituting back gives
\begin{equation}
\varepsilon\ge(1-\lilo{1})\frac{a}{2(\widetilde n(\varepsilon)+a)}.
\end{equation}
Rearranging, $\widetilde n(\varepsilon)\ge\frac{a}{2\varepsilon}(1-\lilo{1})-a.$
This proves the claim.
\end{proof}

\section{DME}
\label{sec:dme_supp}

DME can be similarly formulated as a channel-level CQI task. Given $n$ copies of an unknown mixed state $\rho\in\mathcal{B}(\mathbb{C}^d)$ and a simulation time $T>0$, the target object is the unitary-evolved state
\begin{equation}
{^T}\!\!\mathcal A_\rho(\sigma)\coloneqq e^{-i\rho T}\sigma e^{i\rho T}
\label{eq:dme_state_level_target}
\end{equation}
acting on an auxiliary input register. Equivalently, the state-level input is $\Sigma=\rho^{\otimes n}\otimes\sigma_{\mathrm{data},\mathrm{ref}}$: the copies of $\rho$ are the inference register, while $\sigma_{\mathrm{data},\mathrm{ref}}$ is retained coherently as the data-reference input. When we refer to the corresponding EB protocol, only the $\rho^{\otimes n}$ register is measured and re-prepared; coherence in $\sigma_{\mathrm{data},\mathrm{ref}}$ remains available. A coherent protocol receives $\rho^{\otimes n}$ together with this auxiliary register and aims to implement a channel close to ${^T}\!\!\mathcal A_\rho$.
To define the worst-case risk, we optimize over all inputs,
\begin{equation}
    \sup_{\Sigma\in\mathcal{X}}
    L\mleft(
    \left({^T}\!\!\mathcal A_\rho\otimes \operatorname{id}_{\mathrm{ref}}\right)
  (\sigma_{\mathrm{data},\mathrm{ref}}(\Sigma)),\mathcal{T}_{n,T}(\Sigma)
    \mright).
\end{equation}
For the trace-distance loss, this is equivalent to the diamond-norm risk
\begin{equation}
    \sup_{\rho}
    \left\|
    {^T}\!\!\mathcal A_\rho
    -
    \mathcal{T}_{n,T}(\rho^{\otimes n}\otimes \cdot)\right\|_\diamond,
\end{equation}
thereby casting DME as a channel-level approximation problem.

The original DME construction of Refs.~\cite{LMR14S,GKPP+25S} realizes this target by sequential infinitesimal interactions with the copies of $\rho$ and achieves simulation error $\delta$ with sample complexity $n=\bigO{\frac{T^2}{\delta}},$
independent of the Hilbert-space dimension. In the language of CQI, this shows that the generator $\rho$ can be extracted coherently and repackaged directly as a quantum channel, without passing through a classical description.

By contrast, in the EB limit, the protocol must first extract classical information about $\rho$ and then use it to synthesize an approximation to $e^{-i\rho T}$. Below, we show that this necessarily requires at least \(n=\bigOm{\sin^2\left(T/2\right)d/\varepsilon}\) input copies.
\begin{lemma}[EB DME as a classically programmed channel]
Any incoherent strategy for DME can be written as a classically programmed channel. Namely, there exists a POVM \(\{M_x\}_x\) on the \(n\) input copies and, for each outcome \(x\), a channel \(\mathcal N_x\) acting on the data register such that, for every input \(\sigma_{\mathrm{data},\mathrm{ref}}\),
\begin{equation}
\mathcal T
\left(
\sigma_{\mathrm{data},\mathrm{ref}}\otimes \rho_{\mathrm{in}}^{\otimes n}
\right)
=
\sum_x
\Tr\left(M_x\rho_{\mathrm{in}}^{\otimes n}\right)
\mathcal N_{x,\mathrm{data}}
(\sigma_{\mathrm{data},\mathrm{ref}}) .
\end{equation}
\end{lemma}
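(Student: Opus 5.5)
The plan is to unpack the definition of an incoherent (EB-limit) DME protocol and to show that it has the stated measure-and-prepare form with respect to the inference register. By the convention stated just before the lemma, an incoherent protocol is one in which only the $\rho^{\otimes n}$ register is measured and re-prepared, while $\sigma_{\mathrm{data},\mathrm{ref}}$ is kept coherently. Formally, the overall channel is then $\mathcal T=\mathcal S\circ(\mathcal E_{\mathrm{in}}\otimes\operatorname{id}_{\mathrm{data},\mathrm{ref}})$. Here $\mathcal E$ is an entanglement-breaking channel acting on the $n$ input copies, with output in some register $R$, and $\mathcal S$ is an arbitrary channel from $R\otimes\mathrm{data}$ to $\mathrm{data}$ that acts as the identity on $\mathrm{ref}$.

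\textbf{Step 1: measurement-mediated form of $\mathcal E$.} I invoke the Holevo form of EB channels~\cite{HSR03}, the same representation used in the main text, to write
\begin{equation}
\mathcal E(X)=\sum_x \Tr(M_xX)\,\omega_x
\end{equation}
for some POVM $\{M_x\}$ and states $\omega_x$ on $R$. If the representation involves a continuous outcome space, the sum is replaced by an integral against a POVM-valued measure; this changes nothing below. Substituting into $\mathcal T$ and using linearity, I obtain
\begin{equation}
\mathcal T(\sigma\otimes\rho^{\otimes n})=\sum_x\Tr(M_x\rho^{\otimes n})\,\mathcal S(\omega_x\otimes\sigma).
\end{equation}

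\textbf{Step 2: define the programmed channels.} I set $\mathcal N_x(\cdot)\coloneqq\mathcal S(\omega_x\otimes\cdot)$. Appending the fixed state $\omega_x$ is a CPTP map, and composing it with the CPTP map $\mathcal S$ gives a CPTP map, so each $\mathcal N_x$ is a channel on the data register; tensoring with $\operatorname{id}_{\mathrm{ref}}$ preserves this. This yields the claimed identity for every $\sigma_{\mathrm{data},\mathrm{ref}}$. The identity holds at the level of operators, not only expectation values, because both sides are linear in $\sigma$ and agree on all inputs.

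\textbf{Main obstacle: justifying the structural assumption.} The key point is that the definition of "incoherent" really does give the factorized form $\mathcal S\circ(\mathcal E\otimes\operatorname{id})$, with $\mathcal E$ entanglement-breaking on the inference register only. If instead one starts from a channel whose Choi operator is separable across the cut (inference register) versus (data output), I would argue as follows. Apply the EB decomposition with respect to that bipartition: the map $\rho^{\otimes n}\mapsto\mathcal T(\cdot\otimes\rho^{\otimes n})$, viewed as a map into channels on the data register, has a separable Choi operator. Decomposing it into a sum of product terms $M_x\otimes J_x$, I then identify the positive factors. Normalizing so that each $J_x$ satisfies the Choi trace condition for a channel on the data register, the $J_x$ become the Choi operators of $\mathcal N_x$, and the $M_x$ sum to the identity because $\mathcal T$ is trace preserving for every $\rho$. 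This normalization bookkeeping is the only step where I expect care to be needed.
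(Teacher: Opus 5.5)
Your Steps 1--2 are correct and are essentially the paper's proof. The only difference is bookkeeping. The paper lets the entanglement-breaking stage output a classical register $\ketbra{x}{x}$ and sets $\mathcal N_x(\cdot)=\mathcal N(\cdot\otimes x)$, so the re-preparation happens in the second stage. You keep the re-prepared states $\omega_x$ and set $\mathcal N_x(\cdot)=\mathcal S(\omega_x\otimes\cdot)$. Like the paper, you rely on the sequential definition of an incoherent protocol: the $n$ copies pass through an EB channel before anything acts on the data register. That definition is exactly what the lemma needs.

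The fallback in your last paragraph does not work, and the problem is more than normalization bookkeeping. Trace preservation only gives $\sum_x M_x^{\top}\otimes\Tr_{\mathrm{out}}J_x=I_{\mathrm{in}}\otimes I_{\mathrm{data}}$. This does not force each $\Tr_{\mathrm{out}}J_x$ to be proportional to $I_{\mathrm{data}}$, so the $J_x$ generally cannot be rescaled into Choi operators of channels.

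In fact the conclusion is false under Choi separability alone. Take a qubit inference register and a qubit data register. Measure the data in the computational basis to get $b$. Then measure the inference register with $\{P^{(b)}_a\}$, the $Z$ basis if $b=0$ and the $X$ basis if $b=1$, and output $|a\rangle\langle a|$. Its Choi operator $\sum_{a,b}(P^{(b)}_a)^{\top}\otimes|b\rangle\langle b|\otimes|a\rangle\langle a|$ is separable across inference versus data.

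Now suppose $\sum_x\Tr(M_x\omega)\mathcal N_x(\sigma)$ represented this channel with a single POVM $\{M_x\}$. Feeding in $\sigma=|b\rangle\langle b|$ would make both $\{P^{(0)}_a\}$ and $\{P^{(1)}_a\}$ classical post-processings of $\{M_x\}$, i.e.\ jointly measurable. That is impossible, since projective measurements are jointly measurable only if they commute. So the sequential structure has to be part of the definition of incoherence, as the paper takes it; it cannot be derived from separability of the Choi operator. You should drop that paragraph rather than treat it as an alternative justification.
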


\begin{proof}
Since the protocol is incoherent on the input register, the input copies first pass through an EB channel before the remaining quantum operation is applied to the data and reference registers. By the structure theorem for EB channels, this channel can be written, without loss of generality, in measurement-mediated form: $\sum_x
\Tr\left(
M_x\rho^{\otimes n}
\right)x,$
for some POVM \(\{M_x\}_x\), where \(x=\ketbra{x}{x}\) is stored in a classical register \(X\). The remaining part of the strategy receives this classical register together with the data and reference registers. The action of the protocol is then $\mathcal{T}(\sigma_{\mathrm{data},\mathrm{ref}}\otimes \rho_{\mathrm{in}}^{\otimes n})=\Tr\left(M_x\rho_{\mathrm{in}}^{\otimes n}\right) \mathcal{N}(\sigma\otimes x)$. Since \(X\) is classical, we define the classical controlled channel $\mathcal N_x(\cdot)=\mathcal{N}(\cdot\otimes x)$. Applying this to the data-reference state $\sigma$ gives the claimed form.
\end{proof}

Since we are proving a lower bound, we will only focus on a family of candidate states defined as follows. Let $m=d-1$. For $\boldsymbol{\theta}\dot{=}(\theta_1,\ldots,\theta_m)\in\mathbb R^m$, write
$\theta=\|\boldsymbol{\theta}\|$. Define
a family of parametrized input states as
\begin{equation}
\ket{\vartheta}
=
\sqrt{1-\theta^2}\,\ket{0}
+
\sum_{j=1}^m\theta_j\ket{j},
\end{equation}
with $\vartheta$ the corresponding density matrix. Evaluate the channel on the fixed input
$\ketbra{0}{0}$. Define the corresponding ideal output state $\ket{\gamma_{\boldsymbol{\theta}}}
=
e^{-iT\vartheta}\ket{0}$
and $\gamma_{\boldsymbol{\theta}}$ for the corresponding pure-state density matrix.
Then ${^T}\!\!\mathcal A_\vartheta(\ketbra{0}{0})=\gamma_{\boldsymbol{\theta}}$.
Since $\vartheta$ is a rank-one projector,
$e^{-iT\vartheta}=I+(e^{-iT}-1)\vartheta$. Therefore
\begin{equation}
\ket{\gamma_{\boldsymbol{\theta}}}
=
A(\boldsymbol{\theta})\ket{0}
+
B(\boldsymbol{\theta})\sum_{j=1}^m\theta_j \ket{j} ,
\end{equation}
where
\begin{equation}
A(\boldsymbol{\theta})
=
1+(e^{-iT}-1)(1-\theta^2),
\qquad
B(\boldsymbol{\theta})
=
(e^{-iT}-1)\sqrt{1-\theta^2}.
\end{equation}

\begin{lemma}[Embedding of the output]
\label{lem:embedding_output}
There exist constants \(r_0>0\) and \(a_T>0\), depending only on \(T\), such that for all \(\boldsymbol{\theta},\boldsymbol{\eta}\in\mathbb R^m\) with \(\theta,\eta\le r_0\),
\begin{equation}
1-F(\gamma_{\boldsymbol{\theta}},\gamma_{\boldsymbol{\eta}}) \ge a_T(\boldsymbol{\theta}-\boldsymbol{\eta})^2.
\end{equation}
Moreover, \(r_0\) can be chosen so that
\(a_T=\frac{1}{4}\sin^2\left(\frac{T}{2}\right)\).
\end{lemma}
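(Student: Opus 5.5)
Since both outputs are pure, $F(\gamma_{\boldsymbol\theta},\gamma_{\boldsymbol\eta})=|\langle\gamma_{\boldsymbol\theta}|\gamma_{\boldsymbol\eta}\rangle|^2$. The plan is to compute this overlap explicitly from the decomposition $\ket{\gamma_{\boldsymbol\theta}}=A(\boldsymbol\theta)\ket{0}+B(\boldsymbol\theta)\sum_j\theta_j\ket{j}$. A cleaner route is to write $\ket{\gamma_{\boldsymbol\theta}}=\ket{0}+(e^{-iT}-1)\langle\vartheta|0\rangle\ket{\vartheta}$ and to set $c\coloneqq e^{-iT}-1$, so that $|c|^2=4\sin^2(T/2)$. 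Expanding the inner product gives
\begin{equation}
\langle\gamma_{\boldsymbol\theta}|\gamma_{\boldsymbol\eta}\rangle
=1+c\,s_\eta^2+\bar c\,s_\theta^2+|c|^2 s_\theta s_\eta\langle\vartheta|\eta\rangle ,
\end{equation}
with $s_\theta=\sqrt{1-\theta^2}$ and $\langle\vartheta|\eta\rangle=s_\theta s_\eta+\boldsymbol\theta\cdot\boldsymbol\eta$.

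Next I would expand this to second order in $\boldsymbol\theta$ and $\boldsymbol\eta$. At zeroth order, $1+c+\bar c+|c|^2=|1+c|^2=1$. The second-order terms collect into $-c\,\eta^2-\bar c\,\theta^2-|c|^2(\theta^2+\eta^2)+|c|^2\boldsymbol\theta\cdot\boldsymbol\eta$ up to $O(r^4)$. Using $c+|c|^2/2=i\,\mathrm{Im}\,c$, the real part becomes $-\tfrac{|c|^2}{2}(\theta^2+\eta^2)+|c|^2\boldsymbol\theta\cdot\boldsymbol\eta=-\tfrac{|c|^2}{2}\|\boldsymbol\theta-\boldsymbol\eta\|^2$, and the imaginary part is $\mathrm{Im}(c)(\theta^2-\eta^2)$. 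Taking the squared modulus then gives
\begin{equation}
1-F=|c|^2\|\boldsymbol\theta-\boldsymbol\eta\|^2-(\mathrm{Im}\,c)^2(\theta^2-\eta^2)^2+O(\text{quartic}).
\end{equation}
The leading coefficient is $4\sin^2(T/2)$, which leaves a factor of $16$ of room relative to the claimed $a_T=\tfrac14\sin^2(T/2)$.

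The main obstacle is turning this into a clean inequality, since every remainder must be controlled by $\|\boldsymbol\theta-\boldsymbol\eta\|^2$ and not merely by $r^2$. The fix is that the overlap is a smooth function $g(\boldsymbol\theta,\boldsymbol\eta)$ with $1-|g|^2=0$ and vanishing gradient on the diagonal $\boldsymbol\theta=\boldsymbol\eta$. By Taylor's theorem with remainder around the diagonal, $1-|g|^2=Q_{\boldsymbol\theta}(\boldsymbol\theta-\boldsymbol\eta)+O(r\,\|\boldsymbol\theta-\boldsymbol\eta\|^2)$, where $Q_{\boldsymbol\theta}$ is the Fubini--Study metric of the map $\boldsymbol\theta\mapsto\ket{\gamma_{\boldsymbol\theta}}$. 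At $\boldsymbol\theta=0$ this metric equals $|c|^2 I=4\sin^2(T/2)\,I$, and it is continuous in $\boldsymbol\theta$. The correction term $(\mathrm{Im}\,c)^2(\theta^2-\eta^2)^2\le(\mathrm{Im}\,c)^2(2r)^2\|\boldsymbol\theta-\boldsymbol\eta\|^2$ is of exactly this controlled type.

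The constants must be uniform in $m=d-1$. This holds because of $O(m)$ symmetry: the problem reduces to the span of $\boldsymbol\theta$ and $\boldsymbol\eta$, which is at most two real dimensions plus $\ket{0}$, so the Taylor constants do not depend on $m$. Finally, I would choose $r_0$ depending only on $T$ so that all remainders are at most $(4-\tfrac14)\sin^2(T/2)\|\boldsymbol\theta-\boldsymbol\eta\|^2$, which yields $a_T=\tfrac14\sin^2(T/2)$.
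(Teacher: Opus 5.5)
Your proposal is correct, but it takes a genuinely different route from the paper. The paper never expands the overlap. Instead it passes to the affine chart $\mathbf G(\boldsymbol\theta)=\frac{B(\boldsymbol\theta)}{A(\boldsymbol\theta)}\boldsymbol\theta$, i.e.\ it dehomogenizes with respect to $\ket{0}$, and uses $\boldsymbol\nabla\mathbf G(\boldsymbol 0)=(1-e^{iT})\mathbf I_m$. It picks $r_0$ so that $A\neq 0$, $\|\mathbf G\|\le\frac12$, and the Jacobian stays within $\frac12|1-e^{iT}|$ of $(1-e^{iT})\mathbf I_m$ in operator norm. The mean-value integral then gives the lower Lipschitz bound $\|\mathbf G(\boldsymbol\theta)-\mathbf G(\boldsymbol\eta)\|\ge|\sin(T/2)|\,\|\boldsymbol\theta-\boldsymbol\eta\|$. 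A direct computation with the normalized vectors $(\ket{0}+\sum_j G_j\ket{j})/\sqrt{1+\|\mathbf G\|^2}$ then converts chart distance into infidelity with constant $\frac14$. Each of these two steps loses a factor of $4$ relative to the sharp local value, which accounts for the factor $16$ of slack you noticed.

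The trade-off is as follows. The paper's argument needs only first-derivative control of $\mathbf G$, and it produces $r_0$ from three concrete, checkable conditions. Your argument identifies the sharp local constant as the pulled-back Fubini--Study metric $Q_{\boldsymbol 0}=|c|^2\mathbf I_m=4\sin^2(T/2)\,\mathbf I_m$. It closes the inequality via a second-order Taylor expansion about the diagonal plus continuity of $Q_{\boldsymbol\theta}$. That is more general-purpose, since it works for any smooth immersion with a nondegenerate metric, but it requires third-derivative bounds and yields $r_0$ only through a compactness or existence argument.

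Your explicit $O(m)$ reduction to the case $m=2$ is a real advantage. Because the overlap depends only on $\theta^2$, $\eta^2$ and $\boldsymbol\theta\cdot\boldsymbol\eta$, this reduction settles the dimension-independence of $r_0$ and $a_T$ cleanly. The paper relies on the same fact only implicitly: the radial form $\mathbf G(\boldsymbol\zeta)=g(\zeta^2)\boldsymbol\zeta$ is what makes its operator-norm bound on $\mathbf E$ independent of $m$, but this is never spelled out.
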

\begin{proof}
When the \(\ket{0}\) coefficient is nonzero, we can reparametrize
\(\ket{\gamma_{\boldsymbol{\theta}}}\) by the affine coordinates
\begin{equation}
\mathbf G(\boldsymbol{\theta})
=
g(\boldsymbol{\theta})\boldsymbol{\theta}
=
\frac{B(\boldsymbol{\theta})}{A(\boldsymbol{\theta})}
\boldsymbol{\theta}.
\end{equation}

Specifically, \(\mathbf G(\boldsymbol 0)=\boldsymbol 0\). Moreover, \(\mathbf G\) is continuously differentiable near \(\boldsymbol 0\) with $\boldsymbol{\nabla}\mathbf G(\boldsymbol 0)
=
(1-e^{iT})\mathbf I_m.$
Choose \(r_0>0\), depending only on \(T\), such that for every
\(\boldsymbol{\zeta}\) with \(\zeta\le r_0\),
\begin{equation}
A(\boldsymbol{\zeta})\neq0,
\qquad
\left\|\mathbf G(\boldsymbol{\zeta})\right\|\le \frac12,
\qquad\boldsymbol{E}=\boldsymbol{\nabla}\mathbf G(\boldsymbol{\zeta})
-
(1-e^{iT})\mathbf I_m\quad
\left\|
\mathbf E(\zeta)
\right\|_\infty
\le
\frac12\left|1-e^{iT}\right|.
\end{equation}
Moreover, the target state reads
\begin{equation}
\ket{\gamma_{\boldsymbol{\zeta}}}
=
\frac{
\ket{0}
+
\sum_{j=1}^m G_j(\boldsymbol{\zeta})\ket{j}
}{
\sqrt{1+G(\boldsymbol{\zeta})^2}
}.
\end{equation}
Then, for \(\boldsymbol{\theta},\boldsymbol{\eta}\) in the ball,
\begin{align}
\mathbf G(\boldsymbol{\theta})-\mathbf G(\boldsymbol{\eta})
&=
\int_0^1 \mathrm{d}\lambda\,
\boldsymbol{\nabla}\mathbf G\!\left(
\boldsymbol{\eta}
+\lambda(\boldsymbol{\theta}-\boldsymbol{\eta})
\right)\cdot
(\boldsymbol{\theta}-\boldsymbol{\eta})\nonumber\\
&=
g(\boldsymbol{0})(\boldsymbol{\theta}-\boldsymbol{\eta})
+
\int_0^1 \mathrm{d}\lambda\,
\mathbf E(\boldsymbol\eta+s(\boldsymbol{\theta}-\boldsymbol{\eta}))\cdot(\boldsymbol{\theta}-\boldsymbol{\eta}).
\end{align}
Hence
\begin{equation}
\left\|
\mathbf G(\boldsymbol{\theta})-\mathbf G(\boldsymbol{\eta})
\right\|^2
\ge
\sin^2\left(\frac{T}{2}\right)
(\boldsymbol{\theta}-\boldsymbol{\eta})^2 .
\end{equation}
Since
\(G(\boldsymbol{\theta}),G(\boldsymbol{\eta})\le1/2\),
a direct computation using the affine parametrization gives
\begin{equation}
1-F(\gamma_{\boldsymbol{\theta}},\gamma_{\boldsymbol{\eta}})
=
1-
\left|
\left\langle
\frac{\ket{0}+\sum_{j=1}^m G_j(\boldsymbol{\theta})\ket{j}}
{\sqrt{1+G(\boldsymbol{\theta})^2}},
\frac{\ket{0}+\sum_{j=1}^m G_j(\boldsymbol{\eta})\ket{j}}
{\sqrt{1+G(\boldsymbol{\eta})^2}}
\right\rangle
\right|^2
\ge
\frac14
\left\|
\mathbf G(\boldsymbol{\theta})
-
\mathbf G(\boldsymbol{\eta})
\right\|^2 .
\end{equation}
Altogether, we obtain
\begin{equation}
1-F(\gamma_{\boldsymbol{\theta}},\gamma_{\boldsymbol{\eta}})
\ge
\frac{1}{4}\sin^2\left(\frac{T}{2}\right)
(\boldsymbol{\theta}-\boldsymbol{\eta})^2 .
\end{equation}
Therefore the lemma holds with
\(a_T=\frac{1}{4}\sin^2\left(\frac{T}{2}\right)\).
\end{proof}

\begin{theorem}[Sample-complexity lower bound on worst-case incoherent DME]
\label{thm:dme_incoherent_sample_lower}
Assuming $d\geq 2$ and simulation time $T\not\equiv 0 \pmod {2\pi}$. There exist constants
$c$ and $\varepsilon_T>0$, depending only on $T$, such that the following
holds.
Suppose that an incoherent DME protocol achieves error \(\varepsilon\) in the diamond norm, i.e.,
\begin{equation}
\left\|{^T}\mathcal T(\rho^{\otimes n}\otimes\cdot)-{^T}\!\!\mathcal A_\rho\right\|_\diamond
\le
\varepsilon
\end{equation}
Then, for $0<\varepsilon\le\varepsilon_T$,
$n
\ge
c\sin^2\frac{T}{2}\frac{d}{\varepsilon}.$
\end{theorem}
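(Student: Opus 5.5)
The plan is to reduce incoherent DME to estimation of a local $(d-1)$-parameter pure-state family, and then to invoke a quantum Cramér--Rao / Bayesian-type bound for $n$ copies of a pure state.

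\textbf{Step 1: reduce to an estimator.} Take an incoherent protocol with error $\varepsilon$ in diamond norm. By the lemma on EB DME as a classically programmed channel, it has the form $\sum_x \Tr(M_x\rho^{\otimes n})\mathcal N_x$. Restrict to inputs $\rho=\vartheta$ with $\theta\le r_0$, and evaluate on the fixed data state $\ketbra{0}{0}$ with trivial reference. The diamond-norm guarantee gives
\begin{equation}
\tfrac12\Big\|\sum_x \Tr(M_x\vartheta^{\otimes n})\,\sigma_x-\gamma_{\boldsymbol\theta}\Big\|_1\le\varepsilon,
\qquad \sigma_x\coloneqq\mathcal N_x(\ketbra00).
\end{equation}
Since $\gamma_{\boldsymbol\theta}$ is pure, this implies $\sum_x p_{\boldsymbol\theta}(x)\bigl(1-\Tr(\gamma_{\boldsymbol\theta}\sigma_x)\bigr)\le 2\varepsilon$, using linearity of fidelity with a pure target and $1-F\le$ trace distance. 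Next, define an estimator $\widehat{\boldsymbol\theta}(x)$ as a point of the ball minimizing $1-F(\gamma_{\widehat{\boldsymbol\theta}},\sigma_x)$, up to a constant-factor near-minimizer. Passing from a mixed $\sigma_x$ to a nearby pure $\gamma_{\widehat{\boldsymbol\theta}}$ requires a triangle inequality for a fidelity-induced metric such as the sine or Bures angle. Combining this with \cref{lem:embedding_output} gives
\begin{equation}
\mathbb E_{\boldsymbol\theta}\|\widehat{\boldsymbol\theta}-\boldsymbol\theta\|^2\le C\varepsilon/a_T=4C\varepsilon/\sin^2(T/2)
\end{equation}
for every $\boldsymbol\theta$ in the ball.

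\textbf{Step 2: estimation lower bound.} The quantum Fisher information of $\ket{\vartheta}^{\otimes n}$ at $\boldsymbol\theta$ near $0$ is $4n$ times the Fubini--Study metric, which is $\approx 4n\,\mathbf I_m$ for small $\theta$. I would use the quantum van Trees inequality, a Bayesian Cramér--Rao bound valid for biased estimators. I would choose the prior as a smooth bump, for example a product of $\cos^2$ densities on a cube of side $\sim r_0/\sqrt m$ inside the ball, so that the prior Fisher information per coordinate is $O(m/r_0^2)$. This yields
\begin{equation}
\mathbb E\|\widehat{\boldsymbol\theta}-\boldsymbol\theta\|^2\ge \sum_{j=1}^m\frac{1}{4n(1+O(r_0^2))+O(m/r_0^2)}\gtrsim \frac{m^2}{4n m+O(m^2)}.
\end{equation}
If the prior term dominates, then $\mathbb E\|\widehat{\boldsymbol\theta}-\boldsymbol\theta\|^2\gtrsim r_0^2$. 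This contradicts Step 1 once $\varepsilon\le\varepsilon_T$ is small enough, and that is where $\varepsilon_T$ comes from. Otherwise the bound is $\gtrsim m/n$. Combining with Step 1 gives $m/n\lesssim\varepsilon/\sin^2(T/2)$, hence $n\ge c\sin^2(T/2)\,(d-1)/\varepsilon\ge (c/2)\sin^2(T/2)\,d/\varepsilon$. Alternatively, one could use the optimal pure-state tomography bound from Ref.~\cite{H13S}, restricted to a local chart; the two routes play the same role.

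\textbf{Main obstacle.} The main obstacle is making Step 2 rigorous with constants uniform in $d$. The prior's Fisher information grows with $m$ when the support must fit inside a ball of radius $r_0$. The competing terms $4nm$ and $m^2/r_0^2$ must therefore be handled carefully, and one must check that the regime where the prior term dominates is excluded by smallness of $\varepsilon$. That exclusion needs a $d$-independent $\varepsilon_T$, because the lower bound in that regime is $\sim r_0^2$, independent of $m$. I would first try the coordinate-wise van Trees bound with the cube prior as above. If it fails, I would fall back on a Holevo-information / Fano-type packing argument: take $e^{\Omega(m)}$ points in the ball at separation $\sim\sqrt{\varepsilon m/n}$-scale, and bound the Holevo information of the $n$-copy states by $O(m\log n)$ via the symmetric-subspace dimension.
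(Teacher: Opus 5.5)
Your proposal is correct. Step~1 is essentially the paper's reduction, but Step~2 takes a genuinely different route.

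In Step~1, the paper projects $\sigma_x$ onto $\{\gamma_{\boldsymbol\theta'}\}_{\|\boldsymbol\theta'\|\le r_0}$ in trace distance, then uses $1-F=D_{\mathrm{tr}}^2$ for pure states and $D_{\mathrm{tr}}(\sigma_x,\gamma_{\boldsymbol\theta})^2\le 1-F(\gamma_{\boldsymbol\theta},\sigma_x)$. You project in the sine or Bures metric instead. Both give $1-F(\gamma_{\boldsymbol\theta},\gamma_{\widehat{\boldsymbol\theta}})\le 4\,(1-F(\gamma_{\boldsymbol\theta},\sigma_x))$, and hence a pointwise mean-squared-error bound of order $\varepsilon/a_T$ on the ball.

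In Step~2, the paper uses an Assouad-type hypercube argument rather than a Cram\'er--Rao bound:
\begin{itemize}
\item It places $2^m$ points $\alpha\boldsymbol\tau$ with $\alpha^2=\min\{1/(64n),\,r_0^2/m\}$.
\item Neighbouring $n$-copy states have squared overlap $(1-2\alpha^2)^{2n}\ge e^{-1/8}$.
\item By Helstrom, each sign $\tau_j$ is therefore misidentified with probability at least $1/4$.
\item This gives $\mathbb{E}\|\widehat{\boldsymbol\theta}-\boldsymbol\theta\|^2\ge\frac{1}{256}\min\{(d-1)/n,\,r_0^2\}$.
\end{itemize}

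Your coordinatewise quantum van Trees bound combines classical van Trees for the fixed POVM with Braunstein--Caves. On the cube $F^Q_{jj}=4n\bigl(1+\theta_j^2/(1-\theta^2)\bigr)$, and the prior contributes information of order $m/r_0^2$ per coordinate. The resulting bound $\frac{m}{4n(1+O(r_0^2))+O(m/r_0^2)}$ has the same two-regime structure as the paper's. Your prior width $r_0/\sqrt m$ plays exactly the role of the paper's cap $\alpha^2\le r_0^2/m$: both keep the family inside the ball where the embedding lemma holds.

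The obstacle you flag is already resolved by your own case split. When the prior term dominates, the bound is $\Omega(r_0^2)$, a $d$-independent constant. This is excluded once $\varepsilon\le\varepsilon_T\propto a_T r_0^2$, with $r_0$ depending only on $T$. The Fano/packing fallback is therefore unnecessary.

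As for trade-offs, the paper's route is more elementary. It needs only pure-state overlaps and Helstrom, with no differentiability or boundary conditions. Yours requires checking the van Trees regularity conditions, which is routine here: a prior vanishing on the cube boundary and a smooth outcome distribution $\Tr(M_x\vartheta^{\otimes n})$. In return, it ties the lower bound directly to the quantum Fisher information. Its leading term is $m/(4n)$, noticeably better than the paper's $m/(256n)$.
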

\begin{proof}
We first show that a good DME protocol implies a good state-transformation channel that approximately transforms the input to the target state.
Denote the output state as
\begin{equation}
\sigma^\prime_{\boldsymbol{\theta}}
=
\mathcal{T}(\ketbra{0}{0}\otimes\vartheta^{\otimes n})
\end{equation}
Moreover, for each transcript $x$, define conditional output
$\sigma_x=\mathcal N_x(\ketbra{0}{0})$, so $\sigma^\prime_{\boldsymbol{\theta}}=\left\langle \sigma_X\right\rangle_{\boldsymbol{\theta}}$.
Since diamond distance upper bounds the trace distance on every $\sigma$,
\begin{equation}
\langle1-F(\gamma_{\boldsymbol{\theta}},\sigma^\prime_X)\rangle_{\boldsymbol{\theta}}\le 1-F(\gamma_{\boldsymbol{\theta}},\sigma^\prime_{\boldsymbol{\theta}})\le D_{\mathrm{tr}}(\sigma^\prime_{\boldsymbol{\theta}},\gamma_{\boldsymbol{\theta}})
\le
\left\|\mathcal T(\cdot\otimes\vartheta^{\otimes n})-{^T}\!\!\mathcal A_\rho\right\|_\diamond
\le
\varepsilon,
\end{equation}
The first step uses the linearity of the target states, as in QPA, which lets us convert a bias bound into a loss bound.

We show that this scalar loss bound yields an accurate estimator for the
parameter \(\boldsymbol{\theta}\). We first perform the POVM on the input copies
and record the classical outcome \(x\).
Conditioned on \(X=x\), we estimate \(\boldsymbol{\theta}\) by projecting \(\sigma_x\) onto the target family \(\{\gamma_{\boldsymbol{\theta}'}\}\) in trace distance:
\begin{equation}
\widehat{\boldsymbol{\theta}}(x)
\in
\operatorname*{argmin}_{\|\boldsymbol{\theta}'\|\le r_0}
D_{\mathrm{tr}}(\sigma_x,\gamma_{\boldsymbol{\theta}'}).
\end{equation}
Applying the identity in~\cref{lem:physical_unconstrained_norm_tomography}, with the true state \(\gamma_{\boldsymbol{\theta}}\) as an admissible point in the target family, gives
\begin{equation}
D_{\mathrm{tr}}(\gamma_{\boldsymbol{\theta}},
\gamma_{\widehat{\boldsymbol{\theta}}(x)})
\le
2D_{\mathrm{tr}}(\sigma_x,\gamma_{\boldsymbol{\theta}}).
\end{equation}
Therefore, using the pure-state identity and the Fuchs--van de Graaf inequality,
\begin{equation}
1-F(\gamma_{\boldsymbol{\theta}},\gamma_{\widehat{\boldsymbol{\theta}}(x)})
=
D_{\mathrm{tr}}(\gamma_{\boldsymbol{\theta}},
\gamma_{\widehat{\boldsymbol{\theta}}(x)})^2
\le
4D_{\mathrm{tr}}(\sigma_x,\gamma_{\boldsymbol{\theta}})^2
\le
4(1-F(\gamma_{\boldsymbol{\theta}},\sigma_x)).
\end{equation}

Applying~\cref{lem:embedding_output} gives
$a_T\|\boldsymbol{\theta}-\widehat{\boldsymbol{\theta}}(x)\|^2\le4\left(1-F(\gamma_{\boldsymbol{\theta}},\sigma_x)\right)$. Taking expectation over $X$ gives the mean square error bound
\begin{equation}
\left\langle
\|\boldsymbol{\theta}-\widehat{\boldsymbol{\theta}}(X)\|^2
\right\rangle_{\boldsymbol{\theta}}
\le
\frac{4}{a_T}
\left\langle
1-F(\gamma_{\boldsymbol{\theta}},\sigma_X)
\right\rangle_{\boldsymbol{\theta}}
\le
\frac{4\varepsilon}{a_T}.
\end{equation}
Thus a good incoherent DME protocol yields an estimator of the parameter $\boldsymbol{\theta}$ whose mean squared error is at most
$4\varepsilon/a_T$.

We prove that such estimation will necessarily incur a dimension-dependent sample complexity. We further reduce the problem to a finite input state family $\mathcal{X}$ by embedding a hypercube in the local parameter space. Let $\tau$ be a sign vector
drawn uniformly from $\{+1,-1\}^m$, and define $\boldsymbol{\theta}_{\boldsymbol{\tau}}=\alpha\boldsymbol{\tau}$.
Choose
\begin{equation}
\alpha^2
=
\min\left\{
\frac{1}{64n},
\frac{r_0^2}{m}
\right\}.
\end{equation}
Then $m\alpha^2\le r_0^2$, so the hypercube lies in the admissible range of $\mathbf{\theta}$.

The shrinking hypercube forces the input states to be close. Let $\boldsymbol{\tau}^{(j)}$ be obtained from $\boldsymbol{\tau}$ by flipping the $j$-th coordinate. Then $\braket{\vartheta_{\boldsymbol{\tau}}}{\vartheta_{\boldsymbol{\tau}^{(j)}}}=1-2\alpha^2$.
For $n$ copies,
\begin{equation}
\left|\braket{\vartheta_{\boldsymbol{\tau}}^{\otimes n}}{\vartheta_{\boldsymbol{\tau}^{(j)}}^{\otimes n}}\right|^2
=
(1-2\alpha^2)^{2n}.
\end{equation}
Since $n\alpha^2\le 1/64$, and using
$\ln(1-y)\ge -2y$ for $0\le y\le 1/2$, with $y=2\alpha^2$, we obtain
\begin{equation}
(1-2\alpha^2)^{2n}
\ge
e^{-8n\alpha^2}
\ge
e^{-1/8}.
\end{equation}
Therefore the normalized trace distance between the neighboring $n$-copy
states is bounded by
\begin{equation}
D_{\mathrm{tr}}(\vartheta_\tau^{\otimes n},
\vartheta_{\tau^{(j)}}^{\otimes n})
\le
\sqrt{1-e^{-1/8}}
\leq \frac{1}{2} .
\end{equation}
By the Helstrom bound for binary state discrimination, when both $\vartheta_{\boldsymbol\tau}$ and $\vartheta_{\boldsymbol\tau^{(j)}}$ occur with probability $\frac12$, any discriminator $\widehat{\boldsymbol{\tau}}$, i.e., estimator that takes on $\boldsymbol{\tau}$ or $\boldsymbol{\tau}^{(j)}$ has error probability at least
\begin{equation}
\Pr(\widehat{\boldsymbol{\tau}}\neq \boldsymbol{\tau})=\frac12
\left(
1-
D_{\mathrm{tr}}(\vartheta_\tau^{\otimes n},
\vartheta_{\tau^{(j)}}^{\otimes n})
\right)
\ge
\frac14 .
\end{equation}

Now take any estimator $\widehat{\boldsymbol{\theta}}(X)$, and define the induced discriminator $\widehat\tau_j(X)=\mathrm{sign}(\widehat\theta_j(X))$, with $\mathrm{sign}(0)$ chosen arbitrarily. If $\widehat\tau_j(X)\neq \tau_j$, then $(\widehat\theta_j(X)-\theta_{\boldsymbol{\tau},j})^2\ge \alpha^2$. Since $\widehat\tau_j(X)$ can be used as a state discriminator,
\begin{equation}
\left\langle
\|\widehat{\boldsymbol{\theta}}(X)-\boldsymbol{\theta}_{\boldsymbol{\tau}}\|^2
\right\rangle_{\boldsymbol{\tau},X}
\ge
\alpha^2
\sum_{j=1}^m
\Pr(\widehat\tau_j(X)\neq \tau_j)
\ge
\frac{1}{4}m\alpha^2
\ge
\frac{1}{256}
\min\left\{
\frac{d-1}{n},
r_0^2
\right\}.
\end{equation}

Finally, we combine this lower bound with the estimator constructed from the EB DME protocol.
From the DME guarantee, averaged over the hypercube $\mathcal{X}$,
\begin{equation}
\frac{4\varepsilon}{a_T}\ge\left\langle
\|\widehat{\boldsymbol{\theta}}(X)-\boldsymbol{\theta}_{\boldsymbol{\tau}}\|^2
\right\rangle_{\tau,X}\ge\frac{1}{256}
\min\left\{
\frac{d-1}{n},
r_0^2
\right\}.
\end{equation}

If $0<\varepsilon\le\varepsilon_T=\frac{a_T}{1024}r_0^2$, the condition simplifies to $n\ge\frac{a_T}{1024}\frac{d-1}{\varepsilon}$. For $d\ge 2$, this implies $n\ge c_T\frac{d}{\varepsilon}$ with
$c_T
=
\frac{a_T}{2048}.$
This proves the theorem.
\end{proof}

\putbib[ci_si.bib]
\end{bibunit}

\end{document}